\crefname{equation}{}{}     
\newcommand\pgfmathsinandcos[3]{%
  \pgfmathsetmacro#1{sin(#3)}%
  \pgfmathsetmacro#2{cos(#3)}%
}
\newcommand\LongitudePlane[3][current plane]{%
  \pgfmathsinandcos\sinEl\cosEl{#2} % elevation
  \pgfmathsinandcos\sint\cost{#3} % azimuth
  \tikzset{#1/.style={cm={\cost,\sint*\sinEl,0,\cosEl,(0,0)}}}
}
\newcommand\LatitudePlane[3][current plane]{%
  \pgfmathsinandcos\sinEl\cosEl{#2} % elevation
  \pgfmathsinandcos\sint\cost{#3} % latitude
  \pgfmathsetmacro\yshift{\RadiusSphere*\cosEl*\sint}
  \tikzset{#1/.style={cm={\cost,0,0,\cost*\sinEl,(0,\yshift)}}} %
}
\newcommand\DrawLongitudeArc[4][black]{
  \LongitudePlane{\angEl}{#2}
  \tikzset{current plane/.prefix style={scale=1}}
  \pgfmathsetmacro\angVis{atan(sin(#2)*cos(\angEl)/sin(\angEl))} %
  \pgfmathsetmacro\angA{mod(max(\angVis,#3),360)} %
  \pgfmathsetmacro\angB{mod(min(\angVis+180,#4),360} %
  \draw[current plane,#1]  (\angA:\RadiusSphere) arc (\angA:\angB:\RadiusSphere);
}%
\newcommand\DrawLatitudeCircle[2][1]{
  \LatitudePlane{\angEl}{#2}
  \tikzset{current plane/.prefix style={scale=1}}
  \pgfmathsetmacro\sinVis{sin(#2)/cos(#2)*sin(\angEl)/cos(\angEl)}
  % angle of "visibility"
  \pgfmathsetmacro\angVis{asin(min(1,max(\sinVis,-1)))}
  \draw[current plane] (\angVis:\RadiusSphere) arc (\angVis:-\angVis-180:\RadiusSphere);
}
\newcommand\DrawLatitudeArc[4][black]{
  \LatitudePlane{\angEl}{#2}
  \tikzset{current plane/.prefix style={scale=1}}
  \pgfmathsetmacro\sinVis{sin(#2)/cos(#2)*sin(\angEl)/cos(\angEl)}
  % angle of "visibility"
  \pgfmathsetmacro\angVis{asin(min(1,max(\sinVis,-1)))}
  \pgfmathsetmacro\angA{max(min(\angVis,#3),-\angVis-180)} %
  \pgfmathsetmacro\angB{min(\angVis,#4)} %

  \draw[current plane,#1] (\angA:\RadiusSphere) arc (\angA:\angB:\RadiusSphere);
}
\newtheorem{theorem}{Theorem}[section]
\newtheorem{cor}[theorem]{Corollary}
\newtheorem{prop}[theorem]{Proposition}
\newtheorem{lemma}[theorem]{Lemma}
\theoremstyle{remark}
\newtheorem*{rem}{Remark}
\newtheorem*{rems}{Remarks}
\renewcommand{\theequation}{\thesection.\arabic{equation}}
\numberwithin{equation}{section}
\newcommand{\fock}{\mathcal{F}}		% fock space symbol
\newcommand{\di}{{\textnormal{d}}}		% differential (for integrals)
\newcommand{\Tbb}{\mathbb{T}}
\newcommand{\Ecal}{\mathcal{E}}
\newcommand{\Hbb}{\mathbb{H}}
\newcommand{\Ncal}{\mathcal{N}}		% calligraphic N
\newcommand{\Hcal}{\mathcal{H}}		% calligraphic H
\newcommand{\Ical}{\mathcal{I}}
\newcommand{\Ikp}{\Ical_{k}^{+}}
\newcommand{\Ik}{\Ical_{k}}
\newcommand{\F}{\textnormal{F}}
\newcommand{\Ocal}{\mathcal{O}}		% big-O, order-of
\newcommand{\Nbb}{\mathbb{N}}		% natural numbers
\newcommand{\Zbb}{\mathbb{Z}}
\renewcommand{\Im}{\operatorname{Im}} 	%ImaginaryPart
\newcommand{\id}{\mathbb{I}}
\newcommand{\norm}[1]{\|#1\|}	%Norm
\newcommand{\tr}{\operatorname{tr}}
\newcommand{\HS}{{\textnormal{HS}}}
\newcommand{\tagg}[1]{ \stepcounter{equation} \tag{\theequation} \label{#1} } % add tag and label in align*-environments
\newcommand{\Efrak}{\mathfrak{E}}
\newcommand{\north}{\Gamma^{\textnormal{nor}}}
\newcommand{\supp}{\operatorname{supp}}
\def\bR{\mathbb{R}}
\def\bD{\mathbb{D}}
\def\bT{\mathbb{T}}
\def\bN{\mathbb{N}}
\def\bH{\mathbb{H}}
\def\bX{\mathbb{X}}
\def\bZ{\mathbb{Z}}
\def\cA{\mathcal{A}}
\def\cO{\mathcal{O}}
\def\cF{\mathcal{F}}
\def\cI{\mathcal{I}}
\def\cN{\mathcal{N}}
\def\cE{\mathcal{E}}
\def\cH{\mathcal{H}}
\def\cS{\mathcal{S}}
\def\eps{\varepsilon}
\def\wt{\widetilde}
\def\norm#1{{\left|\hskip-.05em\left|#1\right|\hskip-.05em\right|}}
\def\Im{\textrm{Im}\,}
\definecolor{lightblue}{rgb}{0, 0.33, 0.71}
\newcommand{\Xbb}{\mathbb{X}}
\newcommand{\Dbb}{\mathbb{D}}
\newcommand{\BF}{B_\textnormal{F}}
\newcommand{\BFc}{B_\textnormal{F}^c}
\newcommand{\B}{\textnormal{B}}
\newcommand{\kF}{k_\textnormal{F}}
\newcommand{\op}{\textnormal{op}}
\title{Correlation Energy of a Weakly Interacting Fermi Gas with Large Interaction Potential}
\author[1]{Niels Benedikter}
\author[2]{Marcello Porta} 
\author[3]{Benjamin Schlein} 
\author[4]{Robert Seiringer}
\affil[1]{Universit\`a degli Studi di Milano, Dipartimento di Matematica, Via Cesare Saldini 50, 20133 Milano, Italy\\ORCID: \href{https://orcid.org/0000-0002-1071-6091}{0000-0002-1071-6091}, e--mail: \href{mailto:niels.benedikter@unimi.it}{niels.benedikter@unimi.it}}
\affil[2]{SISSA, Mathematics Area, Via Bonomea 265, 34136 Trieste, Italy}
\affil[3]{Institute of Mathematics, University of Zurich, Winterthurerstrasse 190, 8057 Zurich, Switzerland}
\affil[4]{IST Austria, Am Campus 1, 3400 Klosterneuburg, Austria}
\begin{document}
\maketitle
\begin{abstract}
Recently the leading order of the correlation energy of a Fermi gas in a coupled mean--field and semiclassical scaling regime has been derived, under the assumption of an interaction potential with a small norm and with compact support in Fourier space. We generalize this result to large interaction potentials, requiring only $|\cdot| \hat{V} \in \ell^1 (\bZ^3)$. Our proof is based on approximate, collective bosonization in three dimensions. Significant improvements compared to recent work include stronger bounds on non--bosonizable terms and more efficient control on the bosonization of the kinetic energy.   
\end{abstract} 

\tableofcontents

\section{Introduction}
The interacting high--density Fermi gas models a variety of important physical systems, in particular the behavior of electrons in alkali metals. The simplest approximation for the computation of its physical properties is mean--field theory, i.\,e., Hartree--Fock theory. Hartree--Fock theory only includes the minimal amount of quantum correlations unavoidable due to the antisymmetry requirement on the wave function of fermionic many--body systems. In the present paper we consider corrections to the Hartree--Fock energy due to non--trivial quantum correlations (i.\,e., entanglement in the ground state).

According to \cite{BP53}, the dominant effect of correlations on the ground state energy should be described by the \emph{random--phase approximation (RPA)}, which may also be formulated as a partial resummation of the perturbation series \cite{GB57} or as a theory of particle--hole pairs behaving as bosonic quasiparticles \cite{SBFB57}. The latter point of view was recently used by \cite{BNPSS,BNPSS0} (extending the second--order result of \cite{HPR20}) to rigorously prove the validity of the random--phase approximation for the ground state energy, assuming the interaction potential to be small and its Fourier transform to have compact support. \emph{In the present paper, that result is generalized to arbitrarily large interaction potentials without restriction on the support.} Our proof is a refinement of the method of \cite{BNPSS,BNPSS0}, a crucial point of which is to delocalize particle--hole pairs over patches on the Fermi surface, thus circumventing the Pauli principle and justifying the approximate bosonization of particle--hole pairs. This approach leads to a bosonic quasifree effective theory, from which the ground state energy can be computed. 

The further predictions of this bosonic effective theory have been discussed in \cite{Ben20} and it has also been proven to be a good approximation for the time evolution of the Fermi gas \cite{BNPSS}, refining the time--dependent Hartree--Fock approximation derived in \cite{BSS,BPS14,BPS14c,BJPSS16}. An alternative approach to the ground state energy, avoiding delocalization and thus closer in spirit to \cite{SBFB57} has been developed recently in \cite{CHN}: still, also there an averaging over different particle--hole pairs is needed to justify the bosonization. In another context, the low--density Fermi gas, bosonization ideas have been applied by \cite{FGHP21, Gia22a, Gia22b}.

\medskip

Let us turn to the mathematical description of our result. We consider a system of $N$ fermions on the torus $\bT^3 := \bR^3 / (2\pi \bZ^3)$ interacting through a potential $V$. The system is described on the Hilbert space $L^2_\textnormal{a} (\bT^{3N})$, consisting of all $\psi \in L^2 (\bT^{3N})$ that are antisymmetric under exchange of particles,  
\[  \psi (x_{\sigma(1)} , \dots , x_{\sigma (N)}) = \text{sgn} (\sigma) \psi (x_1, \dots , x_N) \]
for all permutations $\sigma \in \cS_N$. The Hamiltonian is the linear self--adjoint operator 
\begin{equation}\label{eq:HN0} 
H_N := \sum_{j=1}^N -\hbar^2 \Delta_{x_j} + \lambda \sum_{i<j}^N V (x_i - x_j) \;.
\end{equation} 
The interaction potential $V$ is assumed to have non--negative Fourier transform $\hat{V} \geq 0$. (For the interaction potential we use the convention that the Fourier transform is $V(x) = \sum_{k \in \Zbb^3} \hat{V}(k) e^{ik\cdot x}$, unlike for the Fourier transform of wave functions which we normalize to be unitary.)
Because of the antisymmetry of the wave functions, the sum of the Laplacians is typically of order $N^{5/3}$, as may be seen most easily from the the non--interacting case $V=0$, where the ground state is a Slater determinant of $N$ plane waves $f_k(x) = (2\pi)^{-3/2} e^{ik\cdot x}$, the momenta $k \in \Zbb^3$ being located in a ball of radius proprtional to $N^{1/3}$. To make both kinetic and potential energy scale extensively (i.\,e., proportionally to the number of particles $N$) we set
\[\hbar := N^{-1/3} \quad \textnormal{and} \quad \lambda := N^{-1}\;.\]
This is interpreted as a mean--field limit coupled to a semiclassical limit with effective Planck constant $\hbar = N^{-1/3} \to 0$ as $N \to \infty$; this scaling limit has been introduced by \cite{NS81,Spo81} to derive the Vlasov equation from many--body quantum mechanics.

We are interested in the ground state energy 
\[ E_N := \inf \text{spec} (H_N) = \inf \left\{ \langle \psi, H_N \psi \rangle : \psi \in L^2_\textnormal{a} (\bT^{3N}) ,\ \| \psi \| = 1 \right\}\;. \]
A first approximation for $E_N$ is the Hartree--Fock energy, defined by restricting the variational problem to Slater determinants, i.\,e.,
\[ E_N^\textnormal{HF} := \inf \Big\{ \langle \psi, H_N \psi \rangle : \psi = \bigwedge_{j=1}^N  u_j  \text{ where } \{ u_j \}_{j=1}^N \text{ is an orthonormal family in $L^2 (\bT^3)$} \Big\}\;. \]
As already mentioned, for the non--interacting case $V=0$, the Hartree--Fock and the many--body ground state energy are attained by the Fermi ball  
\begin{equation}
\psi_\F := \bigwedge_{k \in B_\F} f_k\;,   \label{eq:plane-waves}
\end{equation}
with the plane waves $f_k (x) := (2\pi)^{-3/2} e^{ik\cdot x}$, for $x \in \bT^3$ and $k \in \bZ^3$. Here, the Fermi ball $B_\F$ is a set of $N$ different momenta $p \in \bZ^3$ with $\sum_{p}\lvert p\rvert^2$ as small as possible. To simplify our analysis we assume that the Fermi ball is completely filled and thus uniquely defined, i.\,e., that $B_\F = \{ k \in \bZ^3 : |k| \leq k_\F \}$. This can be achieved by considering a sequence $k_\F \to \infty$ and fixing $N := |B_\F|$ as a function of $k_\F$. We find the relation $k_\F = \kappa N^{1/3}$ between the two parameters, with $\kappa = \kappa_0 + \cO (N^{-1/3})$ and $\kappa_0 := (3/4\pi)^{1/3}$.

Under the assumption of a complete Fermi ball and non--negative $\hat{V}$, it was proven in \cite[Theorem A.1]{BNPSS} that the Hartree--Fock energy $E_N^\textnormal{HF}$ is still attained by the Fermi ball \cref{eq:plane-waves}, even when $V \not = 0$. It follows that 
\begin{equation}\label{eq:HF-en}  E_N^\textnormal{HF} = \langle \psi_\F , H_N \psi_\F \rangle = \sum_{p \in B_\F} \hbar^2 p^2 + \frac{N}{2} \hat{V} (0) - \frac{1}{2N} \sum_{k,k' \in B_F} \hat{V} (k - k') \;. \end{equation}

In this paper we focus on the \emph{correlation energy}, defined as the difference $E_N - E_N^\textnormal{HF}$, due to  many--body interactions among particles. The following theorem, our main result, provides an explicit formula for the dominant order (order $\hbar$) of the correlation energy. 
\begin{theorem}[Main result: RPA correlation energy]\label{thm:main} 
Suppose $V \in L^1 (\Tbb^3)$ with $\hat{V} \geq 0$ and
\[ \sum_{k \in \bZ^3} \hat{V} (k) |k| < \infty \;.  \]
For $k_\F > 0$ let $N := | B_\F | = |\{ k \in \bZ^3 : |k| \leq k_\F \}|$. Then there exists $\alpha > 0$ such that 
\begin{equation}\label{eq:EN-asy} E_N = E_N^\textnormal{HF} + E_N^\textnormal{RPA} + \cO (N^{-1/3-\alpha}) \qquad \text{for $k_\F \to \infty$}\end{equation} 
where the RPA energy formula is
\begin{equation}\label{eq:RPA} E_N^\textnormal{RPA} := \hbar \kappa_0 \sum_{k \in \bZ^3} |k| \left( \frac{1}{\pi} \int_0^\infty \log \left( 1 + 2\pi \kappa_0 \hat{V} (k) \Big(1-\lambda \arctan \big(\frac{1}{\lambda}\big) \Big) \right) \di\lambda - \frac{\pi}{2} \kappa_0 \hat{V} (k) \right). \end{equation}
\end{theorem}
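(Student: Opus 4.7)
The plan is to follow the bosonization strategy pioneered in \cite{BNPSS,BNPSS0}, but with sharper error estimates that remove the smallness assumption and the compact-Fourier-support assumption on $V$. I would organize the proof in four main steps.

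\textbf{Step 1: Particle--hole transformation.} First I would apply a unitary particle--hole transformation $R$ on fermionic Fock space that maps the Fermi ball $\psi_\F$ to the vacuum. Under $R^*\cdot R$, the operator $H_N - E_N^\textnormal{HF}$ becomes a new Fock space operator whose only relevant excitations are particle--hole pairs $a_p^* a_h$ with $p \in B_\F^c$ and $h \in B_\F$. I would then split the conjugated Hamiltonian into a quadratic kinetic part $\bH_0$, a ``bosonizable'' quadratic-in-pairs interaction of the form $\hat V(k)(b_k^* + b_{-k})(b_k + b_{-k}^*)$ (where $b_k^* = \sum_{p,h} a_p^* a_h^*$ runs over the appropriate particle--hole pairs), and non-bosonizable remainders (cubic and quartic in creation/annihilation operators of particles or holes).

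\textbf{Step 2: Patch decomposition and approximate bosonization.} Next I would decompose the Fermi surface into $M \sim N^{2\delta}$ patches $B_\alpha$ of small angular size, and introduce smeared pair operators $b_{\alpha,k}^*$ supported in patch $\alpha$ with momentum transfer $k$. The key quasi-bosonic commutation relations $[b_{\alpha,k}, b_{\beta,l}^*] \approx \delta_{\alpha\beta}\delta_{k,l}$ hold up to Pauli-principle corrections; these corrections have to be estimated in operator norm against the number operator in order to justify the approximation. Since now $\hat V$ need not have compact support, I would need uniform-in-$k$ bounds weighted by $\hat V(k)|k|$ so that all momentum sums converge under the hypothesis $\sum_k \hat V(k)|k|<\infty$.

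\textbf{Step 3: Generalized Bogoliubov diagonalization.} I would then construct a generalized Bogoliubov transformation $T = \exp(K - K^*)$ with $K$ quadratic in the patch operators $b_{\alpha,k}^*$, chosen so that $T^*$ diagonalizes the quadratic effective bosonic Hamiltonian $h(k)\oplus \hat V(k)v_k v_k^T$ (with $v_k$ a vector encoding the patch structure) acting on the one-pair space. Because $V$ is no longer small, $K$ is not small, and the diagonalization must be performed non-perturbatively; the symbol $K(k)$ is defined implicitly as in \cite{BNPSS} from the Riccati equation associated with the RPA block matrix. Evaluating $\langle \Omega, T^* R^* H_N R T \Omega\rangle - E_N^\textnormal{HF}$ and carrying out the $\lambda$-integral coming from Lieb's identity for the logarithm, I would obtain the explicit formula \cref{eq:RPA}.

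\textbf{Step 4: Error control --- the main obstacle.} The hardest part is to show that the cubic and quartic non-bosonizable pieces, together with commutator remainders arising from the non-exact CCR, contribute only at order $N^{-1/3-\alpha}$ after conjugation by $T$. Because $\hat V$ is no longer compactly supported and no smallness of $V$ is assumed, the usual bookkeeping of \cite{BNPSS} is not enough: the Bogoliubov kernel $K$ may be large, so I cannot expand in powers of $K$. My strategy would be to use Gronwall-type differential bounds along $s \mapsto T_s = \exp(s(K-K^*))$, controlling the number of particle--hole pairs $\cN$ and the kinetic energy $\bH_0$ simultaneously. The main improvements over \cite{BNPSS} that I would aim for are: (a) a stronger estimate on the non-bosonizable cubic term, exploiting cancellations between patches so that only $\ell^1$-norms of $|k|\hat V(k)$ appear; and (b) a more efficient bosonization of the kinetic energy based on localization of the linearized dispersion $\hbar|k\cdot\hat\omega_\alpha|$ within each patch, avoiding the loss caused by bounding it by the full $\bH_0$. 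Combining these, an upper bound via a trial state $RT\Omega$ and a matching lower bound via a Griffiths-type argument on the bosonic quadratic form would yield \cref{eq:EN-asy}.
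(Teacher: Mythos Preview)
Your outline captures the overall architecture correctly (particle--hole transformation, patch decomposition, Bogoliubov transformation $T$, trial state $R T\Omega$ for the upper bound), but Step~4 contains a genuine gap for the lower bound. After conjugating $\bD_\textnormal{B}+Q_\textnormal{B}$ by $T$ one obtains $E_N^\textnormal{RPA}$ plus an excitation Hamiltonian $\cH_\textnormal{B}^\text{exc} = \sum_k 2\hbar\kappa|k|\sum_{\alpha,\beta}\mathfrak{K}(k)_{\alpha,\beta}\,c_\alpha^*(k)c_\beta(k)$ with a \emph{non-diagonal} matrix $\mathfrak{K}(k)$. In \cite{BNPSS} this was compared to $\bD_\textnormal{B}$ at the cost of an error $C\|\hat V\|_1\,\bH_0$, which is precisely where smallness of $V$ enters; your Gronwall-in-$s$ strategy does not address this, since the problem is not the growth of $\cN$ or $\bH_0$ under $T_s$ but the lack of a direct operator inequality $\mathfrak{K}(k)\geq D(k)$. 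The paper's remedy is a \emph{second} approximately bosonic transformation $Z=\exp\big(\sum_k\sum_{\alpha,\beta}L(k)_{\alpha,\beta}c_\alpha^*(k)c_\beta(k)\big)$, a one-boson unitary that conjugates $\mathfrak{K}(k)$ into $\wt P(k)\geq D(k)$; this requires new Hilbert--Schmidt estimates on $L(k)=\log(O(k)\wt O(k))$ (\cref{lm:LwtL}) and a proof that $Z$, like $T$, leaves $\bH_0-\bD_\textnormal{B}$ approximately invariant. Without this second transformation your lower bound would still need $\|\hat V\|_1$ small.

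Your description of the non-bosonizable control in (a) is also off the mark: $\cE_1$ is quartic (not cubic), and the improved bound (\cref{lm:dd}) does not come from ``cancellations between patches'' but from a lattice-point counting argument showing that the momenta contributing to $d^*(k)d(k)$ are confined near the equator of the Fermi sphere, so that $\langle\xi,\cE_1\xi\rangle$ is controlled by a gapped number operator $\cN_\varepsilon$ rather than by $\bH_0$. This, together with the Cauchy--Schwarz splitting $\pm\cE_2\leq N^\alpha\cE_1+CN^{-\alpha}\bH_0$, is what makes the non-bosonizable terms subleading without smallness of $V$.
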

\begin{rems}
 \begin{enumerate}
  \item Unlike the result of \cite{BNPSS}, where $\lVert V\rVert_{\ell^\infty}$ was assumed to be small, here we do not assume smallness of the interaction potential.
  \item A further generalization is given in \cref{sec:appendix}: there, the upper bound of \cref{eq:EN-asy} is shown to hold assuming only $\hat{V} \geq 0$ and 
$\sum_{k \in \bZ^3} |k| \hat{V} (k)^2 < \infty$. Thanks to only the second power of the potential appearing, this almost covers the Coulomb potential. While our paper was under review, a new upper bound for the correlation energy has been established in \cite{CHN2} for square integrable potentials; this includes potentials with Coulomb singularity. In this case, an additional second order contribution to the exchange energy, which is part of the error in our setting, becomes relevant.
 \end{enumerate}
\end{rems}
In the next section we will introduce the correlation Hamiltonian which describes corrections to Hartree--Fock theory. In \cref{sec:strategy} we give a heuristic introduction to the bosonization method by which the correlation Hamiltonian can be approximately diagonalized. The remaining sections are dedicated to the steps of the rigorous implementation of this strategy, culminating in the proof of \Cref{thm:main} in \cref{sec:conclusion}.

\section{Correlation Hamiltonian}\label{sec:corrH}
As the first step to the proof of \cref{thm:main}, we apply a particle--hole transformation to the Hamiltonian, by which we obtain the \emph{correlation Hamiltonian} which describes only the corrections to mean--field (Hartree--Fock) theory. This is an exact computation not involving any approximation.

We use second quantization on the fermionic Fock space $\cF = \bigoplus_{n \geq 0} L^2 (\bT^3)^{\otimes_a n}$. On $\cF$, we use the well--known creation and annihilation operators satisfying canonical anticommutation relations, namely for all momenta $p,q \in \bZ^3$ we have 
\begin{equation}\label{eq:CAR}  \{ a_p , a_q^* \} = \delta_{p,q}\; , \qquad \{ a_p, a_q \} = \{ a_p^* , a_q^* \} = 0\; . \end{equation}As a simple consequence of \cref{eq:CAR}, we find the operator norms $\| a_p^* \|_\op \leq 1$ and $\| a_p \|_\op \leq 1$ for all $p \in \bZ^3$. We define the vacuum vector $\Omega = ( 1, 0, 0, \dots ) \in \cF$ and the number--of--fermions operator $\cN = \sum_{p \in \bZ^3} a_p^* a_p$. We extend the Hamiltonian \cref{eq:HN0} to the full Fock space $\cF$ setting 
\begin{equation}\label{eq:ham-fock} \cH_N := \sum_{p \in \bZ^3} \hbar^2 p^2 a_p^* a_p + \frac{1}{2N} \sum_{k,p,q \in \bZ^3} \hat{V} (k) a_{p+k}^* a_{q-k}^* a_q a_p \;.\end{equation} 
The restriction of $\cH_N$ to the $N$--particle sector $L^2_\textnormal{a} (\bT^{3N}) \subset \cF$ coincides with \cref{eq:HN0}. 

To analyse the correlation energy $E_N - E_N^\textnormal{HF}$, it is convenient to factor out the Fermi ball \cref{eq:plane-waves} and focus on its excitations. This is achieved through a \emph{particle--hole transformation} $R_\F : \cF \to \cF$ defined by
\begin{equation} \label{eq:RaR} R_\F^* a_p^* R_\F := \left\{ \begin{array}{ll} a_p^* \quad \text{if } p \in B_\F^c \\ a_p \quad \text{if } p \in B_\F \end{array} \right.
\qquad \textnormal{and} \qquad  R_\F \Omega := \bigwedge_{p \in B_\F} f_p = \psi_\F\;.
\end{equation} 
One has $R_\F = R_\F^* = R_\F^{-1}$. With \cref{eq:RaR} we find 
\[ R_\F^* \cN R_\F = \sum_{p \in B_\F} a_p a_p^* + \sum_{p \in B_\F^c} a_p^* a_p = N - \sum_{p\in B_\F} a_p^* a_p + \sum_{p \in B_\F^c} a_p^* a_p = N - \cN_\textnormal{h} + \cN_\textnormal{p} \]
where we defined the number--of--holes operator $\cN_\textnormal{h} := \sum_{h \in B_\F} a_h^* a_h$ and the number--of--particles operator $\cN_\textnormal{p} :=  \sum_{p \in B_\F^c} a_p^* a_p$. This shows that the $N$--particle sector $L^2_\textnormal{a} (\bT^{3N}) \subset \fock$ is the image under $R_\F$ of the eigenspace of $\cN_\textnormal{h} - \cN_\textnormal{p}$ associated with the eigenvalue $0$ (and thus $R_\F$ defines a unitary map from the eigenspace $\chi (\cN_\textnormal{h} - \cN_\textnormal{p} = 0) \cF$ to $L^2_\textnormal{a} (\bT^{3N})$). 

We introduce the correlation Hamiltonian $\Hcal_\textnormal{corr}$ by conjugating $\cH_N$ with $R_\F$ and subtracting the energy of the Fermi ball (which, as already noted in \cite[Theorem~A.1]{BNPSS}, in our scaling limit and with $\hat{V} \geq 0$ equals the Hartree--Fock ground state energy). With \cref{eq:RaR} and the canonical anticommutation relations \cref{eq:CAR}, a lengthy but straightforward computation leads to the \emph{correlation Hamiltonian} 
\begin{equation}\label{eq:corr}  \cH_\textnormal{corr} := R_\F^* \cH_N R_\F - E_N^\textnormal{HF} = \bH_0 + Q_\textnormal{B} + \cE_1 + \cE_2  + \bX \end{equation} 
with the main terms  
\begin{equation}\label{eq:RHR-main} \begin{split} 
\bH_0 & := \sum_{p \in \bZ^3} e(p) \, a_p^* a_p \; ,  \qquad \text{with  } e(p) := |\hbar^2 p^2 - \kappa^2| \; , \\
Q_\textnormal{B} & := \frac{1}{N} \sum_{k \in \bZ^3} \hat{V} (k) \left( b^* (k) b (k) + \frac{1}{2} \left( b^* (k) b^* (-k) + b(-k) b (k) \right) \right) \end{split} \end{equation}
and the error terms 
\begin{equation}\label{eq:errors} \begin{split}
\Xbb & := - \frac{1}{2N} \sum_{k \in \Zbb^3} \hat{V}(k) \bigg( \sum_{p \in \BFc \cap (\BF +k)} a^*_p a_p  + \sum_{h \in \BF \cap (\BFc-k)} a^*_{h} a_{h}\bigg)\;, \\
\cE_1 & := \frac{1}{2N} \sum_{k \in \bZ^3} \hat{V} (k) d^* (k) d(k) \;, \\ 
\cE_2 & := \frac{1}{2N} \sum_{k \in \bZ^3} \hat{V} (k)  \left[ d^* (k) b (-k)  + \text{h.c.} \right] \;.
\end{split} \end{equation} 
Here we defined the delocalized particle--hole pair creation and annihilation operators 
\begin{equation} \label{eq:bb} b^* (k) := \sum_{p \in B_\F^c \cap (B_\F + k)} a_p^* a_{p-k}^* , \qquad  b (k) := \sum_{p \in B_\F^c \cap (B_\F + k)} a_{p-k} a_{p} \end{equation} 
and the non--bosonizable operators 
\begin{equation}\label{eq:dd}  d^* (k) := \sum_{p \in B_\F^c \cap (B_\F^c + k)} a_p^* a_{p-k} - \sum_{h \in B_\F \cap (B_\F -k)} a_h^* a_{h+k} \;, \end{equation} 
satisfying $d^* (k) = d (-k)$ for all $k \in \Zbb^3$.

\medskip

To prove \cref{thm:main}, we improve the bosonization method introduced in \cite{BNPSS0} for the upper bound and show that
\[\inf_{\substack{\psi \in \fock: \norm{\psi} =1\\(\Ncal_\textnormal{p}-\Ncal_\textnormal{h})\psi = 0}} \langle \psi, \Hcal_\textnormal{corr} \psi\rangle = E^\textnormal{RPA}_N + \Ocal(N^{-1/3-\alpha})\;.\]

\section{Strategy of the Proof: Approximate Bosonization}\label{sec:strategy}

The key idea is to derive, from the fermionic correlation Hamiltonian \cref{eq:corr}, a quadratic, approximately\footnote{With \emph{approximate} bosonization we refer to the fact that we construct operators that only up to an error term satisfy canonical commutator relations; this is in contrast to certain one--dimensional fermionic systems \cite{ML65} and spin systems \cite{CG12,CGS15,Ben17,NS19} which can be expressed through operators that satisfy the canonical commutator relations exactly.} bosonic, Hamiltonian which can be approximately diagonalized by a Bogoliubov transformation to obtain the ground state energy.

The starting point is the observation that the particle--hole pair operators behave approximately as bosonic creation and annihilation operators, i.\,e., they approximately satisfy canonical commutator relations:
\[[b^*(k),b^*(l)] = 0 = [b(k),b(l)]\;, \qquad [b(k),b^*(l)] = \textnormal{const.}\times(\delta_{k,l} + \textnormal{lower order})\;.\]
Thus $Q_\B$ can be understood as an approximately bosonic quadratic Hamiltonian. The terms $\Xbb$, $\Ecal_1$, and $\Ecal_2$ do not have a bosonic interpretation and are going to be estimated as smaller errors. It remains to bosonize the kinetic energy $\Hbb_0$. Because this step requires us to linearize the dispersion relation, we need to localize of the pair operators to patches $B_\alpha$, i.\,e., to $M$ small regions covering a shell around the Fermi sphere in momentum space (see \cref{fig:blub} for an illustration of the patch decomposition we have in mind; eventually the number of patches $M$ will be chosen to tend to infinity as $N\to \infty$):
\begin{equation}\label{eq:pairop}
b^*(k) \simeq \sum_{\alpha=1}^M n_\alpha(k) b^*_\alpha(k)\;, \qquad b^*_\alpha(k) := \frac{1}{n_\alpha(k)} \sum_{\substack{p\colon p \in B_\F^c \cap B_\alpha \\ p-k\in B_\F \cap B_\alpha}}  a^*_p a^*_{p-k}\;,
\end{equation}
with a normalization constant $n_\alpha(k)$ so that the one--pair states $b^*_\alpha(k)\Omega$ have norm one. 
%%%%%%%%%%%%%%%%%%%%%%%%%%%%%%%%%%%%%%%%%%%%%%%%%%%%%%%%%%%%%%%%%%%%%%
%%%%%%%%%%%%%%%%%%%%%%%%%%%%%%%%%%%%%%%%%%%%%%%%%%%%%%%%%%%%%%%%%%%%%%
\begin{figure}\centering
\begin{tikzpicture}[scale=0.7]
\def\RadiusSphere{4}
\def\angEl{20}
\def\angAz{-20}

\filldraw[ball color = white] (0,0) circle (\RadiusSphere);

\DrawLatitudeCircle[\RadiusSphere]{75+2}
\foreach \t in {0,-50,...,-250} {
  \DrawLatitudeArc{75}{(\t+50-4)*sin(62)}{\t*sin(62)}
 \DrawLongitudeArc{\t*sin(62)}{50+2}{75}
 \DrawLongitudeArc{(\t-4)*sin(62)}{50+2}{75}
  \DrawLatitudeArc{50+2}{(\t+50-4)*sin(62)}{\t*sin(62)}
 }
 \foreach \t in {0,-50,...,-300} {
   \DrawLatitudeArc{50}{(\t+50-4)*sin(37)}{\t*sin(37)}
 \DrawLongitudeArc{\t*sin(37)}{25+2}{50}
  \DrawLongitudeArc{(\t-4)*sin(37)}{25+2}{50}
   \DrawLatitudeArc{25+2}{(\t+50-4)*sin(37)}{\t*sin(37)}
 }
 \DrawLatitudeArc{50}{(-300-4)*sin(37)}{-330*sin(37)}
 \foreach \t in {0,-50,...,-450} {
    \DrawLatitudeArc{25}{(\t+50-4)*sin(23)}{\t*sin(23)}
 \DrawLongitudeArc{\t*sin(23)}{00+2}{25}
 \DrawLongitudeArc{(\t-4)*sin(23)}{00+2}{25}
 \DrawLatitudeArc{00+2}{(\t+50-4)*sin(23)}{\t*sin(23)}
 }
     \DrawLatitudeArc{25}{(-450-4)*sin(23)}{-500*sin(23)}

%center of polar cap
\fill[black] (0,3.75) circle (.07cm);

% first row from the pole
\fill[black] (1.72,3.08) circle (.07cm);
\fill[black] (.76,2.73) circle (.07cm);
\fill[black] (-.66,2.73) circle (.07cm);
\fill[black] (-1.73,3.04) circle (.07cm);
% \fill[black] (-1.81,3.54) circle (.07cm);

% second row from the pole
\fill[black] (2.25,1.5) circle (.07cm);
\fill[black] (.8,1.2) circle (.07cm);
\fill[black] (-.85,1.22) circle (.07cm);
\fill[black] (-2.27,1.5) circle (.07cm);
\fill[black] (-3.09,1.97) circle (.07cm);
\fill[black] (3.09,1.97) circle (.07cm);

% third row from the pole
\fill[black] (2.57,-.15) circle (.07cm);
\fill[black] (1.43,-.37) circle (.07cm);
\fill[black] (.155,-.48) circle (.07cm);
\fill[black] (-1.17,-.41) circle (.07cm);
\fill[black] (-2.35,-.2) circle (.07cm);
\fill[black] (-3.26,0.1) circle (.07cm);
\fill[black] (-3.79,.55) circle (.07cm);
\fill[black] (3.37,.18) circle (.07cm);
\fill[black] (3.85,.57) circle (.07cm);

%%%%%%%%%%%%%%%%%%%%%%%%%%
\end{tikzpicture}
 \caption{Decomposition of (a shell around) the Fermi surface into patches. The vectors $\hat{\omega}_\alpha$ (marked with dots) are the patch centers. The decomposition of the southern half sphere is obtained through reflection by the origin. See \cite{BNPSS0} for the details of the construction.}\label{fig:blub}
\end{figure}
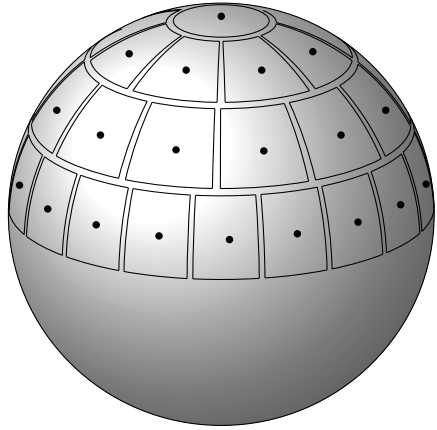
%%%%%%%%%%%%%%%%%%%%%%%%%%%%%%%%%%%%%%%%%%%%%%%%%%
%%%%%%%%%%%%%%%%%%%%%%%%%%%%%%%%%%%%%%%%%%%%%%%%%%
There is a catch here: the sum over pairs in \cref{eq:pairop} is only non--empty if the relative momentum $k$ is pointing outward from the Fermi ball, so for about half of the possible values of $\alpha$ the operators $b^*_\alpha(k)$ vanish. To be sure that many particle--hole pairs contribute to the sum defining $b^*_\alpha(k)$, we introduce a cutoff by defining the index set
\[\Ical_k^+ := \left\{\alpha \in \{1,2,\ldots,M\}: k \cdot \hat{\omega}_\alpha \geq N^{-\delta} \right\}\]
(with $\delta > 0$ to be optimized at the end) and combine the retained $b^*_\alpha(k)$--operators into
\[c^*_\alpha(k) := \left\{ \begin{array}{ll}
                            b^*_\alpha(k) & \textnormal{for }\alpha \in \Ical_k^+ \\
                            b^*_\alpha(-k) & \textnormal{for }\alpha \in \Ical_{-k}^+ \;.
                           \end{array} \right.
\]
These operators again behave approximately bosonic in the sense that
\begin{equation}\label{eq:appr_ccr} [c^*_\alpha(k), c^*_\beta(l)] = 0 = [c_\alpha(k),c_\beta(l)]\;, \quad [c_\alpha(k),c^*_\beta(l)] = \delta_{\alpha,\beta} \left( \delta_{k,l} + \mathcal{O}\left( \frac{\Ncal}{n_\alpha(k)^2} \right) \right)\;. \end{equation}
This provides important intuition on how to make the approximate bosonization rigorous: because $n_\alpha(k)^2$ counts the number of particle--hole pairs of relative momentum $k$ in patch $B_\alpha$, we need the size of the patches to be sufficiently big and we need to bound the number of excitations counted by $\Ncal$ in states close to the ground state.

By virtue of the localization to patches we can linearize the dispersion relation $e(p)$ locally in every patch, and thus find (the computation here shown for the case $\alpha \in \Ikp$)
\begin{equation}\label{eq:HDcomm}
\begin{split}[\Hbb_0, c^*_\alpha(k)] & = \frac{1}{n_\alpha(k)} \sum_{\substack{p\colon p \in B_\F^c \cap B_\alpha \\ p-k\in B_\F \cap B_\alpha}}  \left( e(p) - e(p-k) \right) a^*_p a^*_{p-k}\\
& = \frac{1}{n_\alpha(k)} \sum_{\substack{p\colon p \in B_\F^c \cap B_\alpha \\ p-k\in B_\F \cap B_\alpha}}  \hbar^2 \left( 2 p\cdot k - \lvert k\rvert^2 \right) a^*_p a^*_{p-k} \\
& \simeq \frac{1}{n_\alpha(k)} \sum_{\substack{p\colon p \in B_\F^c \cap B_\alpha \\ p-k\in B_\F \cap B_\alpha}}  2 \hbar^2 \, \omega_\alpha \cdot k\, a^*_p a^*_{p-k}\\
& \simeq [\Dbb_\B, c^*_\alpha(k)]
\end{split}\end{equation} 
if we introduce the quadratic approximately bosonic operator
\[\Dbb_\B = 2 \kappa \hbar \sum_{k \in \north} \sum_{\alpha =1}^M |k \cdot \hat{\omega}_\alpha| \, c_\alpha^* (k) c_\alpha (k)\;.\]
While the substitution of $\Hbb_0$ by $\Dbb_\textnormal{B}$ has here been motivated only in commutators with almost bosonic operators, a key step of our analysis is to justify this step also on general states close to the ground state. This step is explained in \cref{eq:Tinv} to \cref{eq:Tinv_end}.

Our further goal is to approximately (to order $\hbar$, the dominant contribution of the correlation energy) diagonalize the bosonic quadratic Hamiltonian $\Dbb_\B + Q_\B$ by an approximately bosonic Bogoliubov transformation $T$, allowing us to read off the correlation energy. Given a state $\psi \in \mathcal{F}$ such that $(\mathcal{N}_{\text{p}} - \mathcal{N}_{\text{h}}) \psi = 0$ (think of the ground state of $\Hcal_\text{corr}$), and setting $\xi := T^{*} \psi$, we write
\begin{equation}\label{eq:Ttrasfo}
\begin{split}
\langle \psi, \mathcal{H}_{\text{corr}} \psi \rangle &= \langle T \xi, \mathcal{H}_{\text{corr}} T \xi  \rangle \\
&= \langle T\xi, ( \mathbb{D}_{\text{B}} + Q_{\text{B}} ) T \xi \rangle + \langle T\xi, ( \mathbb{H}_{0} -  \mathbb{D}_{\text{B}} ) T\xi \rangle + \langle T\xi, ( \mathbb{X} + \mathcal{E}_{1} + \mathcal{E}_{2} ) T\xi \rangle\;.
\end{split}
\end{equation}
Through a suitable choice of the Bogoliubov kernel $K (k)$ (a matrix indexed by the patch labels), the approximate Bogoliubov transformation 
\begin{equation}
T = \exp \Bigg( \frac{1}{2} \sum_{k \in \north} \sum_{\alpha,\beta \in \cI^{+}_k \cup \cI^{+}_{-k}} K (k)_{\alpha,\beta} \, c_\alpha^* (k) c_\beta^* (k) - \text{h.c.} \Bigg)
\end{equation}
diagonalizes approximately the quadratic Hamiltonian $\mathbb{D}_{\text{B}} + Q_{\text{B}}$. On states with few particles (ie. with few excitations of the Fermi sea), we find as suggested by exact bosonic Bogoliubov theory that
\begin{equation} \label{eq:diag-EH}
\langle T\xi, ( \mathbb{D}_{\text{B}} + Q_{\text{B}} ) T \xi \rangle \simeq E^{\text{RPA}}_{N} + \langle \xi, \mathcal{H}^{\text{exc}}_{\text{B}} \xi \rangle\;,
\end{equation}
with the intended $E^{\text{RPA}}_{N}$ as in \cref{eq:RPA}, and for the description of the possible bosonic excitation one obtains an effective Hamiltonian of the form
\begin{equation}\label{eq:frakK}
\mathcal{H}^{\text{exc}}_{\text{B}} = \sum_{k \in \north} \sum_{\alpha,\beta \in \cI^{+}_k \cup \cI^{+}_{-k}} 2 \hbar \kappa \lvert k\rvert \mathfrak{K}(k)_{\alpha,\beta} c^*_\alpha(k) c_\beta(k) \geq 0  \;.
\end{equation} 

To make these heuristics rigorous, apart from controlling the bosonic approximation (arising from the neglect of the error term in \cref{eq:appr_ccr}) in the bosonic Bogoliubov diagonalization, we need to estimate the second and the third terms in \cref{eq:Ttrasfo}. There are two obstacles. One is to give a meaning to the heuristics $\Hbb_0 \simeq \Dbb_\B$, which, a priori, holds only as in \cref{eq:HDcomm}, at the level of commutators with the approximately bosonic operators. The other is to control the non--bosonizable term $\Ecal_1$ and the term $\Ecal_2$ which couples almost bosonic $c$--operators to non--bosonizable $d$--operators. (The exchange term $\Xbb$ instead can be controlled by more elementary estimates.)

Both problems were solved in \cite{BNPSS} under the assumption that the interaction potential $V$ is small and compactly supported in Fourier space. In the present work we overcome these limitations and prove the validity of the random--phase approximation for a much larger class of interaction potentials. The main achievements of the present paper, compared to \cite{BNPSS0, BNPSS}, are the following:

\medskip

\begin{itemize}
\item The combination $\mathbb{H}_{0} - \mathbb{D}_{\text{B}}$ is approximately invariant under conjugation with the approximately bosonic Bogoliubov transformation because its action can be expanded in commutators:
\begin{equation}\label{eq:Tinv}
\langle T\xi, ( \mathbb{H}_{0} -  \mathbb{D}_{\text{B}} ) T\xi \rangle \simeq \langle \xi, ( \mathbb{H}_{0} -  \mathbb{D}_{\text{B}} ) \xi \rangle\;.
\end{equation}
In the proof of the upper bound for the correlation energy, the vector $\xi$ coincides with the vacuum, and the right--hand side is zero. For the lower bound this is not true, and we are left with controlling the negative term $-\mathbb{D}_{\text{B}}$. In \cite{BNPSS}, this was achieved by exploiting the positivity of $\mathcal{H}^{\text{exc}}_{\text{B}}$ in \cref{eq:diag-EH}. More precisely, we proved that 
\[ \langle \xi, \mathcal{H}^{\text{exc}}_{\text{B}} \xi \rangle \geq \langle \xi, \mathbb{D}_\text{B} \xi \rangle - C \| \hat V \|_{1} \langle \xi, \mathbb{H}_{0} \xi \rangle, \]
which, for small potential, is enough to control the r.\,h.\,s.\ of \cref{eq:Tinv}. In the present paper, we need a more refined analysis. In order to compare $\cH_B^\text{exc}$ with $\mathbb{D}_\text{B}$, we need to diagonalize the matrix $\mathfrak{K}(k)_{\alpha,\beta}$ appearing on the r.\,h.\,s.\ of \cref{eq:frakK} (because $\mathbb{D}_\text{B}$ is already expressed through a diagonal matrix). This can be achieved through a second approximately bosonic Bogoliubov transformation having the form 
\begin{equation}
Z = \exp \Bigg( \sum_{k \in \north} \sum_{\alpha,\beta \in \cI^{+}_k \cup \cI^{+}_{-k}} L (k)_{\alpha,\beta} \, c_\alpha^* (k) c_\beta (k) \Bigg)
\end{equation}
for an antisymmetric matrix $L (k)_{\alpha,\beta}$. If $c^*$ and $c$ were bosonic operators, we could write $Z = \exp\big({\sum_{k \in \north} \di\Gamma (L (k))}\big) = \prod_{k \in \north} \Gamma (e^{L (k)})$ (where $\di\Gamma$ and $\Gamma$ are the operators of bosonic second quantization) and its action on \cref{eq:frakK} would be simply
\[ Z^* \cH_B^\text{exc} Z = \sum_{k \in \north} \di\Gamma (e^{-L (k)} \mathfrak{K} (k) e^{L (k)}) \;, \]
i.\,e., conjugation of $\mathfrak{K}(k)$ by the one--boson unitary $e^{L(k)}$.
This would allow us to diagonalize the matrix $\mathfrak{K} (k)$ by an appropriate choice of $L(k)$. Even though $c$ and $c^*$ are not exactly bosonic operators, this remains approximately true on states with few excitations. After this diagonalization, it is simple to compare with $\mathbb{D}_B$ and conclude that (up to subleading error terms)
\begin{equation}\label{eq:HexcD}
Z^* \mathcal{H}^{\text{exc}}_{\text{B}} Z \gtrsim \mathbb{D}_{\text{B}}\;.
\end{equation}
Since, similarly to \cref{eq:Tinv}, also $Z$ leaves the difference $\mathbb{H}_0 - \mathbb{D}_\text{B}$ almost invariant (the fact that $Z$ can be expressed in terms of almost bosonic operators by \cref{eq:HDcomm} implies $[Z, \mathbb{H}_0 - \mathbb{D}_\text{B}] \simeq 0$), we obtain, with \cref{eq:HexcD}, the desired lower bound  
\begin{equation}\label{eq:Tinv_end} \begin{split} 
\langle T Z \xi, (\mathbb{D}_\text{B} + Q_\text{B}) T Z \xi \rangle &+ \langle TZ \xi , (\mathbb{H}_0 - \mathbb{D}_\text{B}) TZ \xi \rangle \\ &\simeq  E_N^\text{RPA} + \langle Z \xi , \cH_\text{B}^\text{exc} Z \xi \rangle + \langle \xi, (\mathbb{H}_0 - \mathbb{D}_B) \xi \rangle \gtrsim E_N^\text{RPA} \, . \end{split} \end{equation}

\item In \cite{BNPSS}, we controlled the non--bosonizable error terms as, informally stated, $T^*(\Ecal_1 + \Ecal_2)T \gtrsim - C \lVert \hat{V}\rVert_{\ell^1} \Hbb_0$, explaining the necessity of the interaction potential being small to control this term by a positive $\Hbb_0$. In the present paper instead we control $\mathcal{E}_{1}$ more precisely. In particular, we prove that on states $\xi$ close to the ground state of the correlation Hamiltonian, the following  improved bound holds true (see \cref{lm:dd}):
\begin{equation}\label{eq:E1est}
\langle T\xi, \mathcal{E}_{1} T\xi \rangle \ll C\hbar\;.
\end{equation}
This means that the contribution of the non--bosonizable term $\mathcal{E}_{1}$ to the energy is subleading with respect to $E^{\text{RPA}}_{N}$, which is of order $\hbar$. Concerning $\mathcal{E}_{2}$, by the Cauchy--Schwarz inequality we get (see \cref{cor:cE})
\[\pm  \mathcal{E}_{2}  \leq C N^{\alpha}  \mathcal{E}_{1}  + C\| \hat V \|_{1}N^{-\alpha}\mathbb{H}_{0} \;.\] 
The first term in the bound is controlled by the improved bound \cref{eq:E1est}, while the second term is controlled by positivity of $\langle T\xi, \mathbb{H}_{0} T\xi \rangle$ in \cref{eq:Ttrasfo}, for $N$ large enough without any smallness assumption on $V$. 

\item Furthermore, to implement this strategy, we improve the a--priori bounds on the number and the energy of excitations: our \cref{lm:H0-apri} and \cref{cor:apri-NN} generalize estimates of \cite{BNPSS} to interaction potentials with $\hat{V} \geq 0$ and $|\cdot| \hat{V} \in \ell^1 (\bZ^3)$. Moreover, \cref{lm:kinetic} now holds uniformly in $k$.  
 \end{itemize}

 The rigorous implementation is the subject of all remaining sections.
\section{A--Priori Estimates on Excitations of the Fermi Ball} 

The following lemma shows that vectors with total energy close to the ground state energy contain also only a small amount of kinetic energy.

\begin{lemma}[A--priori bound on kinetic energy] \label{lm:H0-apri}  
Assume $\sum_{k \in \bZ^3}  \lvert \hat{V} (k)\rvert |k| < \infty$ and $\hat{V}\geq 0$. Then there exists a $C >0$ such that we have
\[ \cH_\textnormal{corr} = R_\F^* H_N R_\F - E_N^\textnormal{HF} \geq \bH_0 - C \hbar \;. \]
Hence, for every $\psi \in L^2_\textnormal{a} (\bT^{3N})$ with $\| \psi \| = 1$ and 
$\langle \psi , H_N \psi \rangle \leq E_N^\textnormal{HF} + C \hbar$
the excitation vector $\xi = R_\F^* \psi \in \cF$ satisfies
\[ \langle \xi , \bH_0 \xi \rangle \leq C \hbar \;.\]
\end{lemma}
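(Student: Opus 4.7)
The plan is to reduce the announced operator inequality to a scalar estimate on the many--body interaction $V^{ee} := \frac{1}{2N}\sum_{k,p,q}\hat V(k)\,a_{p+k}^* a_{q-k}^* a_q a_p$ on the $N$--particle space, and to prove that estimate using only the elementary positivity $\rho^*(k)\rho(k) \geq 0$ of the density operator $\rho(k) := \sum_{p\in\Zbb^3} a_p^* a_{p-k}$. Writing $H_N = \mathbb{K} + V^{ee}$ with $\mathbb{K} := \sum_p \hbar^2 p^2 a_p^* a_p$, a direct computation from \cref{eq:RaR} gives
\[
 R_\F^* \mathbb{K}\, R_\F \;=\; \sum_{p \in \BF}\hbar^2 p^2 \;+\; \bH_0 \;+\; \kappa^2\bigl(\cN_\textnormal{p} - \cN_\textnormal{h}\bigr),
\]
which agrees with $\sum_{p \in \BF}\hbar^2 p^2 + \bH_0$ on the image $R_\F^* L^2_\textnormal{a}(\bT^{3N})$. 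Since \cref{eq:HF-en} likewise splits $E_N^\textnormal{HF} = \sum_{p \in \BF}\hbar^2 p^2 + V_N^\textnormal{HF}$ with $V_N^\textnormal{HF} := \tfrac{N\hat V(0)}{2} - \tfrac{1}{2N}\sum_{k,k'\in\BF}\hat V(k-k')$, the desired $\cH_\textnormal{corr} \geq \bH_0 - C\hbar$ is equivalent to the scalar-valued statement $V^{ee} \geq V_N^\textnormal{HF} - C\hbar$ as operators on $L^2_\textnormal{a}(\bT^{3N})$.

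For the scalar bound I would use the canonical anticommutation relations \cref{eq:CAR} to rewrite $V^{ee} = \frac{1}{2N}\sum_{k}\hat V(k)\bigl[\rho^*(k)\rho(k) - \cN\bigr]$. On the $N$--particle sector ($\cN = N$, $\rho(0) = N$), reindexing the exchange sum as $\sum_{k,k' \in \BF}\hat V(k-k') = \hat V(0)\,N + \sum_{l \neq 0}\hat V(l)\,|\BF \cap (\BF+l)|$ and invoking the identity $N = |\BF \cap (\BF+k)| + |\BFc \cap (\BF+k)|$, a short calculation yields the clean representation
\[
V^{ee} - V_N^\textnormal{HF} \;=\; \frac{1}{2N}\sum_{k \neq 0}\hat V(k)\bigl[\rho^*(k)\rho(k) - |\BFc \cap (\BF+k)|\bigr].
\]
Since $\hat V(k) \geq 0$ and $\rho^*(k)\rho(k) \geq 0$, the first summand is non-negative and may be discarded; the remaining shell-volume term is controlled by the uniform geometric estimate $|\BFc \cap (\BF+k)| \leq C\, k_\textnormal{F}^2\,|k| \leq C N^{2/3}|k|$ (supplemented, for $|k| \gtrsim k_\textnormal{F}$, by the trivial bound $\leq |\BF| = N \leq C N^{2/3}|k|$). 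Combined with the hypothesis $\sum_k \hat V(k)|k| < \infty$, this gives $V^{ee} - V_N^\textnormal{HF} \geq -C\hbar \sum_{k}\hat V(k)|k| \geq -C\hbar$, as required.

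Putting the two steps together yields the operator inequality. The kinetic a-priori estimate is then immediate: for $\psi \in L^2_\textnormal{a}(\bT^{3N})$ with $\langle \psi, H_N\psi \rangle \leq E_N^\textnormal{HF} + C\hbar$, unitarity of $R_\F$ gives $\langle \xi, \cH_\textnormal{corr}\xi \rangle = \langle \psi, H_N\psi\rangle - E_N^\textnormal{HF} \leq C\hbar$ for $\xi := R_\F^* \psi$, whence $\langle \xi, \bH_0 \xi \rangle \leq 2C\hbar$ by the operator inequality just proved. The only delicate point is the uniform shell-volume bound $|\BFc \cap (\BF+k)| \leq C N^{2/3}|k|$, a standard lattice-point count; the summability hypothesis $\sum_k \hat V(k)|k| < \infty$ is invoked precisely once, to turn the individual $O(N^{-1/3}|k|)$ shell errors into a total $O(\hbar)$ error.
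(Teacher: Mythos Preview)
Your proof is correct and follows essentially the same approach as the paper. Both arguments exploit the positivity $\hat V(k)\,\rho^*(k)\rho(k)\ge 0$ (the paper phrases this in position space via $\int V(x-y)\big(\sum_j\delta(x_j-x)-N\big)\big(\sum_i\delta(x_i-y)-N\big)\,\di x\,\di y\ge 0$, which is the same thing after Fourier transform) and then reduce to the geometric shell bound $|\BFc\cap(\BF+k)|\le C|k|N^{2/3}$ summed against $\hat V(k)$. Your organization is slightly tidier in that you compute $V^{ee}-V_N^{\textnormal{HF}}$ directly and obtain the clean identity $\frac{1}{2N}\sum_{k\neq 0}\hat V(k)\bigl[\rho^*(k)\rho(k)-|\BFc\cap(\BF+k)|\bigr]$, whereas the paper first drops the positive term and then separately estimates the exchange contribution; but the content is identical.
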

\begin{rem}
In the present paper we will apply \cref{lm:H0-apri} to the ground state $\psi_\textnormal{gs}$, which by the variational principle even satisfies $\langle \psi_\textnormal{gs} , H_N \psi_\textnormal{gs} \rangle \leq E_N^\textnormal{HF}$.
\end{rem}

\begin{proof}[Proof of \cref{lm:H0-apri}] 
From $\hat{V} \geq 0$ we get 
\[ \begin{split} 
0 &\leq \int_{\bT^3 \times \bT^3} V (x-y) \Bigg( \sum_{j=1}^N \delta (x_j - x) - N \Bigg) \Bigg( \sum_{i=1}^N \delta (x_i - y) - N \Bigg) \di x \di y \\
&= 2 \sum_{i<j}^N V(x_i - x_j) + N V (0) - N^2 \hat{V} (0) \;. \end{split} \]
Thus 
\[ H_N \geq \sum_{j=1}^N -\hbar^2 \Delta_{x_j} + \frac{N}{2} \hat{V} (0) - \frac{V(0)}{2}\;. \]
Switching to Fock space $\cF$ and conjugating with $R_\F$, we conclude that 
\begin{equation}\label{eq:low-RHR} R_\F^* \cH_N R_\F \geq \sum_{p \in \bZ^3} \hbar^2 p^2 R_\F^* a_p^* a_p R_\F + \frac{N}{2} \hat{V} (0) - \frac{V(0)}{2} = \bH_0 + \sum_{p \in B_\F} \hbar^2 p^2 + \frac{N}{2} \hat{V} (0) - \frac{V(0)}{2} \;. \end{equation} 
We compare the r.\,h.\,s.\ of \cref{eq:low-RHR} with the Hartree--Fock energy \cref{eq:HF-en}. We have 
\[ \frac{1}{2N} \sum_{k,k' \in B_\F} \hat{V} (k-k')  = \frac{V(0)}{2} - \frac{1}{2N} \sum_{k \in B_\F} \sum_{k' \in B_\F^c} \hat{V} (k-k') \;. \]
Setting $q = k-k'$ and noting that $| B_\F \cap (B_\F^c + q)| \leq C |q| N^{2/3}$, we estimate   
\[ \begin{split}  \frac{1}{2N} \sum_{k \in B_\F} \sum_{k' \in B_\F^c} \hat{V} (k-k') &= \frac{1}{2N} \sum_{k \in B_\F} \sum_{q \in B_\F^c +k} \hat{V} (q) \\ &=  \frac{1}{2N} \sum_{q \in \bZ^3} \hat{V} (q) \sum_{k \in B_\F \cap (B_\F^c +q)} 1 \leq C \hbar \sum_{q \in \bZ^3} \hat{V} (q) |q| \;. \end{split}  \]
By assumption on $V$, this implies  
\[ \frac{1}{2N} \sum_{k,k' \in B_\F} \hat{V} (k-k') \geq \frac{V(0)}{2} - C \hbar \;.  \]
With \cref{eq:HF-en} and \cref{eq:low-RHR} we conclude that 
$R_\F^* \cH_N R_\F \geq E_N^\textnormal{HF} + \bH_0 - C \hbar$.
\end{proof} 

The a--priori bound from \cref{lm:H0-apri} for the kinetic energy $\bH_0$ has several consequences. First of all, it gives control on the number of excitations of the Slater determinant. Here, it is useful to introduce gapped number--of--fermions operators which are easier to control than $\Ncal$. For 
$\eps > 0$, we define the gapped number operator
\begin{equation}\label{eq:gappedN} \cN_\eps := \sum_{p \in \bZ^3 :\, ||p| - k_\F| > N^{-\eps}} a_p^* a_p \end{equation} 
measuring the number of excitations with momenta further than a distance $N^{-\eps}$ from the Fermi sphere. (The definition \cref{eq:gappedN} differs slightly from the definition used in \cite{BNPSS} but that is merely a matter of convenience.)
\begin{cor}[A--priori bounds on particle number] \label{cor:apri-NN} 
There exists a constant $C > 0$ such that, on  $\chi (\cN_\textnormal{p} - \cN_\textnormal{h} = 0) \cF$, we have
\begin{equation}\label{eq:NNe-bd}
\begin{split}  
\cN \leq C N^{2/3} \bH_0  \quad \text{and} \quad
\cN_\eps \leq C N^{1/3+\eps} \bH_0 \quad \text{for every $\eps > 0$.}
\end{split} 
\end{equation}
Assume furthermore that $\sum_{k \in \bZ^3} \lvert \hat{V} (k)\rvert |k| < \infty$ and $\hat{V}\geq 0$. Then, for $\psi \in L^2_\textnormal{a} (\bT^{3N})$ with $\| \psi \| = 1$ and $\langle \psi , H_N \psi \rangle \leq E_N^\textnormal{HF} + C \hbar$, 
the excitation vector $\xi = R_\F^* \psi \in \cF$ satisfies 
\begin{equation}\label{eq:N-apri} \langle \xi, \cN  \xi \rangle \leq C N^{1/3}  \quad \text{and} \quad \langle \xi , \cN_\eps \xi \rangle \leq C N^\eps  \quad \text{for every $\eps > 0$.}\end{equation} 
\end{cor}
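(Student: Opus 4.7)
My plan is to prove both operator inequalities pointwise in the Fock basis, since $\bH_0$, $\cN$, and $\cN_\eps$ are simultaneously diagonal in the occupation--number basis. On the constrained subspace $\chi(\cN_\textnormal{p}-\cN_\textnormal{h}=0)\cF$, a basis vector is labeled by a pair of subsets $S_\textnormal{p} \subset \BFc$, $S_\textnormal{h} \subset \BF$ of equal cardinality $n$, with $\cN = 2n$ on that vector.

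For the first inequality, the crucial observation is that on such a basis vector the contributions involving $\kappa^2 = \hbar^2 k_\textnormal{F}^2$ cancel exactly, thanks to $|S_\textnormal{p}|=|S_\textnormal{h}|=n$:
\[ \bH_0 = \hbar^2\sum_{p\in S_\textnormal{p}}(|p|^2 - k_\textnormal{F}^2) + \hbar^2\sum_{h\in S_\textnormal{h}}(k_\textnormal{F}^2 - |h|^2) = \hbar^2\Big(\sum_{p\in S_\textnormal{p}}|p|^2 - \sum_{h\in S_\textnormal{h}}|h|^2\Big).\]
Pairing particles with holes arbitrarily and exploiting that $|p|^2,|h|^2 \in \bZ_{\geq 0}$ satisfy $|p|^2 > k_\textnormal{F}^2 > |h|^2$ (so each difference is a strictly positive integer, hence $\geq 1$), this yields $\bH_0 \geq \hbar^2 n = (\hbar^2/2)\cN$, i.e.\ $\cN \leq 2N^{2/3}\bH_0$. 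I expect this exact cancellation to be the one delicate point: without the balance constraint, a single particle whose $|p|^2$ sits just above a non--integer $k_\textnormal{F}^2$ could make $e(p)$ arbitrarily small and invalidate the bound. The balance is what rescues the estimate.

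The second operator bound is more elementary and does not even require the constraint: for any $p$ counted by $\cN_\eps$,
\[ e(p) = \hbar^2\big||p|-k_\textnormal{F}\big|\,(|p|+k_\textnormal{F}) \geq \hbar^2 N^{-\eps}\,k_\textnormal{F} = \kappa N^{-1/3-\eps},\]
so $\bH_0 \geq c N^{-1/3-\eps}\cN_\eps$. The expectation bounds then follow at once by combining the two operator inequalities with \cref{lm:H0-apri}, which supplies $\langle \xi,\bH_0\xi\rangle \leq C\hbar = CN^{-1/3}$; since $\xi = R_\F^*\psi$ lies in $\chi(\cN_\textnormal{p}-\cN_\textnormal{h} = 0)\cF$ whenever $\psi \in L^2_\textnormal{a}(\bT^{3N})$, the first inequality gives $\langle\xi,\cN\xi\rangle \leq C N^{2/3}\cdot N^{-1/3} = CN^{1/3}$ and the second gives $\langle\xi,\cN_\eps \xi\rangle \leq C N^{1/3+\eps}\cdot N^{-1/3} = C N^\eps$.
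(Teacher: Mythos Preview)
Your proof is correct and matches the paper's approach: the $\cN_\eps$ bound via the factorization $e(p)=\hbar^2\big||p|-k_\F\big|\,(|p|+k_\F)\geq \kappa\hbar N^{-\eps}$ is exactly the paper's computation, and for $\cN$ the paper simply cites \cite[Lemma~2.4]{BNPSS}, whose content is precisely your integrality--and--cancellation argument on the constrained subspace. One minor slip: the second inequality in $|p|^2>k_\F^2>|h|^2$ should be $\geq$ (since $B_\F$ is defined by $|h|\leq k_\F$), but this does not affect the conclusion $|p|^2-|h|^2\geq 1$.
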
 
\begin{proof} To prove \cref{eq:NNe-bd} for $\Ncal_\varepsilon$, observe that $||p| - k_\F| > N^{-\eps}$ implies $|\hbar |p| - \kappa | > \hbar N^{-\eps}$ and thus 
\[ |\hbar^2 p^2 - \kappa^2 | \geq \kappa \hbar N^{-\eps} \;.\]
Thus
\[ \bH_0 \geq \sum_{p \in \bZ^3 :||p| - k_\F| > N^{-\eps}}  |\hbar^2 p^2 - \kappa^2 |  a_p^* a_p \geq \kappa \hbar N^{-\eps} \cN_\eps \;.\]
The bound for $\cN$ is proven in \cite[Lemma 2.4]{BNPSS}; \cref{eq:N-apri} follows using \cref{lm:H0-apri}.
\end{proof}

Furthermore, the estimate for $\bH_0$ from \cref{lm:H0-apri} allows us to bound the particle--hole pair operators $b(k)$ and $b^* (k)$ introduced in \cref{eq:bb}.
\begin{lemma}[Kinetic bound on particle--hole pairs] \label{lm:kinetic}
There exists a constant $C > 0$ such that, for all $k \in \bZ^3$, 
\begin{equation}\label{eq:num1} \sum_{p \in B_\F^c \cap (B_\F + k)} \| a_p a_{p-k} \psi \| \leq C N^{1/2}  \| \bH_0^{1/2} \psi \|\; \end{equation} 
and moreover 
\begin{equation}\label{eq:num2} \sum_{\substack{p \in B_\F^c \cap (B_\F + k) : \\  e(p ) + e(p-k) \leq C N^{-1/3-\delta}}}  \| a_p a_{p-k} \psi \| \leq C N^{1/2-\delta/2}  \| \bH_0^{1/2} \psi \| \;.\end{equation} 
\end{lemma}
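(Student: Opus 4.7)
My approach is a Cauchy--Schwarz estimate with the natural weight $\omega_p := e(p) + e(p-k)$. For $p \in B_\F^c \cap (B_\F + k)$ we have $\hbar^2 |p|^2 > \kappa^2 \geq \hbar^2 |p-k|^2$, so
\[ \omega_p = \hbar^2(|p|^2 - |p-k|^2) = \hbar^2(2 p \cdot k - |k|^2). \]
The key structural point is that $p, k \in \bZ^3$ together with the strict first inequality forces $2p\cdot k - |k|^2 \in \bN$, so $\omega_p \geq \hbar^2 = N^{-2/3}$. Cauchy--Schwarz then gives
\[ \sum_p \| a_p a_{p-k} \psi \| \leq \Bigl( \sum_p \omega_p^{-1} \Bigr)^{1/2} \Bigl( \sum_p \omega_p \| a_p a_{p-k} \psi \|^2 \Bigr)^{1/2}. \]
The second factor is handled easily: splitting $\omega_p = e(p) + e(p-k)$ and applying the operator bound $a_{p-k}^* a_p^* a_p a_{p-k} \leq a_p^* a_p$ (respectively $\leq a_{p-k}^* a_{p-k}$, both following from $a_q^* a_q \leq 1$ via one anticommutation) to the two pieces yields $\sum_p \omega_p \| a_p a_{p-k} \psi \|^2 \leq 2 \langle \psi, \bH_0 \psi \rangle$.

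The central---and only delicate---task is the lattice--counting bound $\sum_p \omega_p^{-1} \leq CN$, \emph{uniformly in $k$}, which is precisely the improvement over \cite{BNPSS}. I slice by the integer value $m := 2p\cdot k - |k|^2$: the points $p \in \bZ^3$ with this level value lie on a single affine hyperplane $\{p \cdot k = (m+|k|^2)/2\}$, on which the 2D sublattice of $\bZ^3$ has density $\gcd(k)/|k| \leq 1$ per unit area. The further constraints $|p| > k_\F$ and $|p-k| \leq k_\F$ restrict $p$ (in coordinates perpendicular to $\hat k$) to a 2D annulus whose inner and outer squared radii differ by exactly $m$, collapsing to a disc of squared radius $k_\F^2 - u^2$, $u := (m-|k|^2)/(2|k|)$, when the inner radius would become imaginary. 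This gives
\[ \#\bigl\{ p \in B_\F^c \cap (B_\F+k) : 2p\cdot k - |k|^2 = m \bigr\} \leq \frac{\pi \min(m,\, k_\F^2 - u^2)}{|k|}. \]
A short case analysis on $|k|$ versus $2k_\F$ now yields $\sum_p \omega_p^{-1} \leq CN$ uniformly: for $|k| \leq 2k_\F$, replacing the sum by a $u$-integral one obtains $\int_{-|k|/2}^{k_\F} \min\bigl(1, (k_\F^2-u^2)/(|k|^2 + 2|k|u)\bigr) du \leq 2k_\F$, so $\sum_p \omega_p^{-1} \lesssim k_\F/\hbar^2 \sim N$; for $|k| > 2k_\F$, one has $B_\F^c \cap (B_\F+k) = B_\F + k$ with $\omega_p \geq \hbar^2 |k|(|k|-2k_\F)$, and an integral estimate yields $\sum_p \omega_p^{-1} \lesssim k_\F^3/(\hbar^2 |k|^2) \leq N$. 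This proves \cref{eq:num1}.

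For \cref{eq:num2}, the additional restriction $e(p)+e(p-k) \leq CN^{-1/3-\delta}$ translates to $m \leq CN^{1/3-\delta}$; in this range the lattice count simplifies to $\pi m/|k|$ and
\[ \sum_{p\,:\,\omega_p \leq CN^{-1/3-\delta}} \omega_p^{-1} \leq \sum_{m=1}^{CN^{1/3-\delta}} \frac{\pi m/|k|}{\hbar^2 m} \leq \frac{C N^{1/3-\delta}}{\hbar^2} = CN^{1-\delta}, \]
using $|k| \geq 1$; Cauchy--Schwarz then produces the factor $N^{1/2-\delta/2}$. The main obstacle throughout is securing the uniformity in $k$ for \cref{eq:num1}, especially the transition regime $|k| \approx 2 k_\F$ where the geometry of $B_\F^c \cap (B_\F+k)$ changes from a lune to the shifted ball $B_\F + k$; this is where the case split above does the real work.
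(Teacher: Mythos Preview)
Your Cauchy--Schwarz reduction and the kinetic bound on the second factor are correct and identical to the paper's opening. The gap is in the lattice--point count. The claimed inequality
\[ \#\bigl\{ p \in B_\F^c \cap (B_\F+k) : 2p\cdot k - |k|^2 = m \bigr\} \leq \frac{\pi \min(m,\, k_\F^2 - u^2)}{|k|} \]
is a pure volume heuristic and is false for small $m$: it records (area of the planar annulus)$\times$(lattice density) but drops the boundary error entirely. Take $k=(1,0,0)$ and $m=1$; the hyperplane is $p_1=1$, the constraint becomes $k_\F^2-1 < p_2^2+p_3^2 \leq k_\F^2$, and for e.g.\ $k_\F=5$ there are twelve lattice solutions while your bound gives $\pi$. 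In general the number of integer points in a planar annulus of area $\pi m$ and outer radius $\sim k_\F$ carries a Gauss--circle error of order $k_\F^\gamma$; this error dominates the volume term precisely in the regime $m \ll k_\F$ that contributes most to $\sum_m m^{-1}$.

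The paper handles this by splitting at $m = 4N^{1/3}$. For small $m$ it invokes \cref{thm:number} (Huxley) to obtain $|B_m(k)| \leq C(m + N^{\gamma/3})$ with $\gamma<1$, after reducing the count on the affine hyperplane to a two--dimensional problem between two ellipses of uniformly bounded eccentricity; the extra $N^{\gamma/3}\sum_{m\leq 4N^{1/3}} m^{-1} \sim N^{\gamma/3}\log N$ is then $o(N^{1/3})$. For large $m$ the $\log$--loss from summing the error term would be fatal, so the paper switches to a genuine sum--to--integral comparison (the summand $1/(p^2-(p-k)^2)$ is slowly varying on unit cells once $m \geq 4N^{1/3}$) and reduces to a continuous integral bounded in \cite{FLLS}. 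Your single--regime argument cannot absorb the boundary error: the step ``replacing the sum by a $u$--integral'' converts the sum over hyperplane labels $m$, but on each hyperplane you have already committed to the incorrect count, so the error you discard is of order $k_\F^\gamma$ \emph{per slice}, not in total.
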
 

The bounds \cref{eq:num1} and \cref{eq:num2} have been established in \cite[Appendix B]{BNPSS} (and previously in \cite[Lemma 4.7]{HPR20}) for fixed $k$ (which was sufficient since there only $k$ in the compact support of $\hat{V}$ was relevant). Here, we improve the proof given in \cite{BNPSS} to obtain uniformity in $k$. We use the following number theoretic result.
\begin{prop}[Lattice points in convex bodies, \cite{Hux}]\label{thm:number} Let $K\subset \mathbb{R}^{2}$ be a smooth convex body and let $R K$ be its dilation by a factor $R>0$, $R K := \{ x\in \mathbb{R}^{2} \mid x / R \in K \}$. Consider the number of points of $\mathbb{Z}^{2}$ belonging to $R K$,
\begin{equation}
\mathfrak{N}_{K}(R) := \big| \{ n \in \mathbb{Z}^{2} \mid \frac{n}{R} \in K \} \big|\;.
\end{equation}
Let
\begin{equation}
\mathcal{E}_{K}(R) := \mathfrak{N}_{K}(R) - R^{2} |K|\;.
\end{equation}
Then, for any $\gamma > 131/208$, there exists $C_{K,\gamma}>0$ independent of $R$ such that
\begin{equation}\label{eq:numbth}
| \mathcal{E}_{K}(R) | \leq C_{K,\gamma} R^{\gamma}\;.
\end{equation}
\end{prop}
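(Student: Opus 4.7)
The plan is to follow the classical approach to the lattice point counting problem in the plane, which goes back to Gauss (for the disk) and Sierpinski, refined by van der Corput, and eventually optimized by Huxley. The proof divides into an analytic reduction step and a number-theoretic step where the exponent $131/208$ enters.

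First I would smooth the counting problem. Let $\chi$ denote the characteristic function of $K$. Then $\mathfrak{N}_K(R) = \sum_{n \in \Zbb^2} \chi(n/R)$. Introducing a non-negative bump function $\psi_\delta$ of mass one supported in a ball of radius $\delta$, I would split
\[ \chi = (\chi * \psi_\delta) + (\chi - \chi * \psi_\delta), \]
so that $\chi * \psi_\delta$ is smooth (the convolution inheriting smoothness from $\psi_\delta$) while $\chi - \chi * \psi_\delta$ is supported in a $\delta$-neighborhood of $\partial K$. The latter contributes at most $O(\delta R^2 + R)$ to the error once summed against lattice points rescaled by $R$, where smoothness of $\partial K$ is used to bound the tubular neighborhood. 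I would then apply Poisson summation to the smoothed sum:
\[ \sum_{n \in \Zbb^2} (\chi * \psi_\delta)(n/R) = R^2 \sum_{m \in \Zbb^2} \widehat{\chi}(Rm)\, \widehat{\psi_\delta}(Rm). \]
The $m=0$ term yields the main term $R^2 |K|$, and the bound on the excess $\mathcal{E}_K(R)$ reduces to controlling the remaining oscillatory sum.

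The key input is the decay of $\widehat{\chi}$ on a smooth strictly convex body. By stationary phase applied at the two points of $\partial K$ with outward normal parallel to $\xi/|\xi|$ (whose existence and uniqueness comes from strict convexity, and whose curvature enters via the Hessian determinant), one obtains
\[ \widehat{\chi}(\xi) = \frac{c_+(\xi/|\xi|)}{|\xi|^{3/2}} e^{2\pi i h_+(\xi)} + \frac{c_-(\xi/|\xi|)}{|\xi|^{3/2}} e^{-2\pi i h_-(\xi)} + O(|\xi|^{-5/2}), \]
where $h_\pm$ are the support functions of $\pm K$ and $c_\pm$ are smooth amplitudes. Substituting this expansion into the Poisson formula, the task reduces to estimating two-dimensional exponential sums of the form
\[ S(M) = \sum_{m \in \Zbb^2:\, |m| \sim M} \frac{a(m)}{|m|^{3/2}} e^{2\pi i R h(m/|m|)} \]
for dyadic $M \leq R/\delta$, with $a$ smooth of bounded derivatives.

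The main obstacle, and where Huxley's theorem goes well beyond elementary bounds, is estimating $S(M)$ with a gain of the form $R^{\gamma-1} M^{1/2}$. The trivial bound $|S(M)| \lesssim M^{1/2}$ combined with the cutoff $\delta \sim R^{-1/3}$ recovers the classical Sierpinski exponent $\gamma = 2/3$. To reach $\gamma = 131/208 + \epsilon$ I would invoke Huxley's discrete Hardy–Littlewood method, which combines the $A$- and $B$-processes of van der Corput on the one-dimensional slices of $S(M)$ with an averaging argument that exploits the smoothness of the phase $h$ on the boundary. The exponent $131/208$ arises from optimizing over the choice of exponent pairs in this scheme; selecting $\delta$ to balance the boundary-layer error with the Poisson-sum error then yields the claimed bound \cref{eq:numbth}. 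Since the full combinatorial optimization is delicate and already carried out in \cite{Hux}, I would cite the final exponent pair estimate as a black box rather than reconstructing it.
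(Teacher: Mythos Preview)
The paper does not prove this proposition at all: it is quoted from Huxley \cite{Hux} as a known number-theoretic result, with no argument given. The only further comment in the paper is the remark that the full strength $\gamma > 131/208$ is not needed---any $\gamma < 1$ suffices for the applications in the paper---and that the weaker bound with $2/3 < \gamma < 1$ already follows from the more elementary \cite[Theorem~7.7.16]{Hor}.

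Your outline is a reasonable sketch of the standard route to lattice-point bounds for smooth convex bodies (smoothing, Poisson summation, stationary phase on $\widehat{\chi}$, then exponential-sum estimates), and you correctly identify where the exponent $131/208$ comes from and that one ultimately has to cite Huxley's exponent-pair machinery as a black box. But since the paper itself treats the entire proposition as a black box, there is nothing to compare your argument against. If you were writing this for the paper's purposes, it would be more in keeping with its needs to give the short argument for the Sierpiński/van der Corput exponent $\gamma > 2/3$ (which your sketch already contains: trivial bound on $S(M)$ plus the choice $\delta \sim R^{-1/3}$) and simply note that Huxley's refinement improves the exponent further.
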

\begin{rem}The constant $C_{K,\gamma}$ in the estimate \cref{eq:numbth} depends on the curvature of the boundary of $K$. In particular, $C_{K,\gamma}$ is finite as long as the curvature is strictly positive. For us it is sufficient that \cref{eq:numbth} holds for some $\gamma  < 1$. A simple proof for $2/3< \gamma < 1$ is given in \cite[Theorem~7.7.16]{Hor} (the condition $0 \in K$ given there can always be achieved by a translation).
\end{rem}

\begin{proof}[Proof of \cref{lm:kinetic}]
We first prove \cref{eq:num1}.
%%%%%%%%%%%%%%%%%%%%%%%%%%%%%%%%%%%%%
Proceeding as in \cite[Lemma~4.7]{HPR20} by the Cauchy--Schwarz inequality we get
\begin{align*}
 & \sum_{p \in \BFc \cap (\BF +k)} \norm{a_p a_{p-k} \psi} \\
 & \leq \Bigg(\sum_{p \in \BFc \cap (\BF +k)} \frac{1}{e(p)+e(p-k)} \Bigg)^{1/2} \Bigg(\sum_{p \in \BFc \cap (\BF +k)}\left( e(p)+e(p-k) \right) \norm{a_p a_{p-k} \psi}^2 \Bigg)^{1/2}\;.
\end{align*}
The second factor is bounded by the kinetic energy as claimed,
\begin{align*}
 & \sum_{p \in \BFc \cap (\BF +k)}\left( e(p)+e(p-k) \right) \norm{a_p a_{p-k} \psi}^2 \\
 & \leq \sum_{p \in \BFc \cap (\BF +k)} e(p) \norm{a_p \psi}^2 + \sum_{p \in \BFc \cap (\BF +k)} e(p-k) \norm{a_{p-k} \psi}^2  \leq \langle \psi, \Hbb_0 \psi\rangle\;. 
\end{align*}
%%%%%%%%%%%%%%%%%%%%%%%%%%%%%%%%%%%%%
Therefore it is enough to show 
\begin{equation}\label{eq:I-1}
 \sum_{p \in B_\F^c \cap (B_\F+k)} \frac{1}{p^2 - (p-k)^2} \leq C N^{1/3} \;.
\end{equation}
If $|k| > C_0 N^{1/3}$ (for a $C_0 > 0$ large enough), we have $p^2 - (p-k)^2 > C_1 N^{2/3}$ for all $p \in B_\F^c \cap (B_\F + k)$ (with a different constant $C_1 > 0$) and \cref{eq:I-1} is clear. Thus we can assume that from now on
\[ \lvert k\rvert  \leq C_0 N^{1/3} \;. \]
We need to further distinguish the cases $p^2 - (p-k)^2 \geq 4 N^{1/3}$ and $p^2 - (p-k)^2 < 4 N^{1/3}$.
\paragraph{The case $p^2 - (p-k)^2 \geq 4 N^{1/3}$.}
We apply the argument used in \cite[Eq.~(5.13)]{FLLS}.
If $\eta \in (0, \frac{3}{2C_0})$ then for $q \in B_\eta(p)$ we have
\[ \begin{split} \lvert q^2 - (q-k)^2 \rvert & \geq \big\lvert \lvert p^2 - (p-k)^2 \rvert - \lvert 2(p-q) \cdot k\rvert \big\rvert \geq 4 N^{1/3} - 2 \eta C_0 N^{1/3}\geq N^{1/3} \;.
   \end{split} 
\]
With
\[ \nabla_q \frac{1}{q^2 - (q-k)^2} = \frac{2k}{q^2 - (q-k)^2} \frac{1}{q^2 - (q-k)^2} \]
we conclude that  
\[ \Big| \frac{1}{p^2 - (p-k)^2}  -  \frac{1}{\wt{p}^2 - (\wt{p}-k)^2} \Big| \leq  \eta 2 C_0 \sup_{q \in B_\eta (p)}   \frac{1}{q^2 - (q-k)^2} \]
for all $\wt{p} \in B_\eta (p)$. Hence, if $\eta > 0$ is small enough, we get
\[  \sup_{q \in B_\eta (p)}   \frac{1}{q^2 - (q-k)^2} \leq  \frac{2}{p^2 - (p-k)^2}  \]
and 
\[ \frac{1}{p^2 - (p-k)^2} \leq 2 \inf_{q \in B_\eta (p)}   \frac{1}{q^2 - (q-k)^2} \;.\]
Possibly choosing $\eta >0$ still smaller, the balls $B_\eta (p)$ are disjoint for different $p$, and we obtain
\[
\begin{split}
\sum_{p \in B_\F^c \cap (B_\F+k)} \frac{\chi (p^2 - (p-k)^2 \geq 4N^{1/3})}{p^2 - (p-k)^2} &\leq C \int_{p \in B_\F^c \cap (B_\F+k)} \frac{1}{p^2 - (p-k)^2}  \di p \\ &\leq C N^{1/3} \int_{|p| > 1, |p-k'| < 1} \frac{1}{p^2 - (p-k')^2} \di p
\end{split}
\]
where we defined $k' := k/ k_\F$. With
\[
p^2 - (p-k)^2 = (p^2 - 1) + (1- (p-k)^2) \geq 2 \big( p^2 - 1 \big)^{1/2} \big(1- (p-k)^2 \big)^{1/2}
\]
we conclude that 
\[
\begin{split}
\sum_{p \in B_\F^c \cap (B_\F+k)} \frac{\chi (p^2 - (p-k)^2 \geq 4N^{1/3})}{p^2 - (p-k)^2} &\leq C N^{1/3} \int_{\substack{\lvert p\rvert > 1,\\\lvert p-k'\rvert < 1}} \frac{1}{(p^2-1)^{1/2} (1 - (p-k')^2)^{1/2}} \di p\\
&\leq C N^{1/3}
\end{split}
\]
uniformly in $k$, as shown in \cite[Lemma 3.4]{FLLS}.

\paragraph{The case $p^2 - (p-k)^2 < 4 N^{1/3}$.} We observe that $p \in B_\F^c$ and $p-k \in B_\F$ together imply the lower bound (recall that all momenta are elements of $\Zbb^3$)
\[
1 \leq p^2 - (p-k)^2 = 2p \cdot k - k^2 =: m \in \bN \;.
\]
Since moreover $p^2 > k_\F^2$ and $(p-k)^2 = p^2 - m \leq k_\F^2$, we find 
\[ k_\F^2 < p^2 \leq k_\F^2 + m\;.\]
We obtain   
\begin{equation}\label{eq:I-bd} 
 \sum_{p \in B_\F^c \cap (B_\F+k)} \frac{\chi (p^2 - (p-k)^2 \leq 4N^{1/3})}{p^2 - (p-k)^2} \leq \sum_{m=1}^{4N^{1/3}} \frac{1}{m} |B_m (k)| 
\end{equation}
with 
\[
B_m (k) := \Big\{ p \in \bZ^3 : k_\F^2 < \lvert p\rvert^2 \leq k_\F^2 +m \text{ and }  2 p\cdot k - \lvert k\rvert^2 = m \Big\} \;.
\]  
Without loss of generality $|k_1| \geq |k_2|$ and $|k_1| \geq |k_3|$ (in particular, since $k \not = 0$, we have $k_1 \not = 0$). Then, for $p = (p_1 , p_2, p_3) \in B_m (k)$, the condition $2p\cdot k - \lvert k\rvert^2 =m$ is solved by 
\begin{equation}\label{eq:p1-sol}  p_1 = \frac{m+k^2}{2k_1} - p_2 \frac{k_2}{k_1} - p_3 \frac{k_3}{k_1} \;. \end{equation} 
Thus $|B_m (k)|$ is bounded by the number of points $(p_2, p_3) \in \bZ^2$ with 
\begin{equation}\label{eq:ellip} k_\F^2 \leq \left( \frac{m+k^2}{2k_1} - p_2 \frac{k_2}{k_1} - p_3 \frac{k_3}{k_1} \right)^2 + p_2^2 + p_3^2 \leq k_\F^2 + m \;.\end{equation}
(This is only an upper bound because $(p_2, p_3) \in \bZ^2$ for which the r.\,h.\,s.\ of \cref{eq:p1-sol} is not integer do not contribute to $B_m (k)$). On the $(p_2, p_3)$--plane, we define new variables $(q_2, q_3)$ by
\begin{equation}\label{eq:ellip2} \begin{split} 
p_2 & := \frac{k_2}{\sqrt{k_2^2+k_3^2}}  q_2 - \frac{k_3}{\sqrt{k_2^2+ k_3^2}} q_3 + \frac{k^2 +m}{2|k|} \frac{\sqrt{k_2^2+ k_3^2}}{|k|} \;, \\
p_3 & := \frac{k_3}{\sqrt{k_2^2+ k_3^2}} q_2 +  \frac{k_2}{\sqrt{k_2^2+k_3^2}}  q_3 \;.\end{split} \end{equation} 
In terms of these new variables, we can rewrite \cref{eq:ellip} as 
\begin{equation}\label{eq:ellip3} k_\F^2 - \left( \frac{k^2+m}{2|k|} \right)^2  \leq \frac{k^2}{k_1^2} q_2^2 + q_3^2 \leq k_\F^2 +m - \left( \frac{k^2+m}{2|k|} \right)^2  \;.\end{equation}
We can therefore apply \cref{thm:number} to estimate the number of points $(p_2, p_3) \in \bZ^2$ contained between the two ellipses described by \cref{eq:ellip3}. (From the assumptions $|k_1| \geq |k_2|$ and $|k_1| \geq |k_3|$ we have $1 \leq |k| / |k_1| \leq 3$, which implies that the error term in \cref{eq:numbth} is uniform in $k$.) We conclude that 
\[ |B_m (k)| \leq \pi \frac{k_1}{|k|}  m + C k_\F^\gamma \leq C (m+N^{\gamma/3})   \qquad \textnormal{for a } \gamma > \frac{131}{208}.\]
Inserting this bound in \cref{eq:I-bd} and choosing $\gamma < 1$ we arrive at 
\[ \sum_{p \in B_\F^c \cap (B_\F+k)} \frac{\chi (p^2 - (p-k)^2 \leq 4N^{1/3})}{p^2 - (p-k)^2} \leq C \sum_{m=1}^{4N^{1/3}} \frac{1}{m} (m+N^{\gamma/3}) \leq C N^{1/3} \;. \]

To show \cref{eq:num2}, we proceed analogously. The only difference is that now the sum in \cref{eq:I-bd} can be restricted to $m \leq C N^{1/3 - \delta}$ (here, the case $p^2 - (p-k)^2 \geq 4N^{1/3}$ is not relevant).  
\end{proof}

From \cref{lm:kinetic}, we immediately obtain  a bound on the operators $b (k)$ and $b^* (k)$. For details, see \cite[Lemma 2.3]{BNPSS}.
\begin{cor}[Kinetic bound on pair operators] \label{lm:bb}
There exists a $C > 0$ such that for all $k \in \bZ^3$ we have
\[ b^* (k) b (k) \leq C N  \bH_0\;, \qquad b (k) b^* (k) \leq C N (\bH_0 + \hbar) \;. \] 
\end{cor}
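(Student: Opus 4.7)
The plan is to derive both operator inequalities directly from the kinetic pair estimate \cref{lm:kinetic}, combined with an elementary canonical anticommutation relation computation. For the first bound, I would begin by applying the triangle inequality to the definition \cref{eq:bb} of $b(k)$: for any $\psi \in \cF$,
\[ \|b(k)\psi\| \le \sum_{p \in B_\F^c \cap (B_\F + k)} \|a_{p-k} a_p \psi\|\;. \]
The right-hand side is precisely the quantity estimated, uniformly in $k$, by \cref{eq:num1}, which gives $\|b(k)\psi\| \le CN^{1/2}\|\bH_0^{1/2}\psi\|$. Squaring and using $\|b(k)\psi\|^2 = \langle \psi, b^*(k)b(k)\psi\rangle$ translates this directly into the operator inequality $b^*(k)b(k) \le CN\bH_0$.

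For the second bound I would use the identity $b(k)b^*(k) = b^*(k)b(k) + [b(k), b^*(k)]$ and compute the commutator via \cref{eq:CAR}. The key observation is that for any $p, p' \in B_\F^c \cap (B_\F+k)$ one has $p, p' \in B_\F^c$ and $p-k, p'-k \in B_\F$, so when reordering $a_{p-k}a_p a_{p'}^* a_{p'-k}^*$ only the diagonal anticommutators $\{a_p, a_{p'}^*\} = \delta_{p,p'}$ and $\{a_{p-k}, a_{p'-k}^*\} = \delta_{p,p'}$ survive, while the mixed $\{a_p, a_{p'-k}^*\}$ and $\{a_{p-k}, a_{p'}^*\}$ vanish. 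A careful but routine rearrangement (tracking the fermionic signs) then yields
\[ [b(k), b^*(k)] = \sum_{p \in B_\F^c \cap (B_\F+k)} \big(1 - a_p^* a_p - a_{p-k}^* a_{p-k}\big) \le |B_\F^c \cap (B_\F+k)|\;. \]
A simple volume argument shows that $B_\F^c \cap (B_\F+k)$ is contained in the spherical shell $\{k_\F < |p| \le k_\F + |k|\}$ and hence has cardinality $\le C|k|N^{2/3}$, which in the regime relevant for the applications is of size $O(N\hbar)$. Together with the first bound this produces $b(k)b^*(k) \le CN\bH_0 + CN\hbar$.

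The only delicate step is the commutator identity, where one must be careful with the signs introduced by each anticommutation; once this is in place, both bounds follow immediately from \cref{lm:kinetic}. The uniformity in $k$ asserted in the statement is, for the second inequality, effective in the sense that the $|k|$-factor arising from the pair counting is later absorbed by the summability assumption $|\cdot|\hat V \in \ell^1(\bZ^3)$ of \cref{thm:main} whenever $b(k)b^*(k)$ appears multiplied by $\hat V(k)$.
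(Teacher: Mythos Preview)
Your approach is correct and is essentially the argument behind the reference \cite[Lemma 2.3]{BNPSS} that the paper cites: the first inequality follows directly from \cref{eq:num1} via the triangle inequality, and the second from the commutator identity you derive.

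Your caveat about the $|k|$--dependence in the second bound is well taken and in fact unavoidable: testing on the vacuum gives $\langle\Omega, b(k)b^*(k)\Omega\rangle = |B_\F^c\cap(B_\F+k)|$, which for $|k|>2k_\F$ equals $N \gg N\hbar$, so no constant uniform in all $k\in\bZ^3$ can work as literally stated. The correct version of the second inequality carries a factor $|k|$ (or $\min(|k|,k_\F)$) in front of $\hbar$, exactly as your commutator bound produces; as you note, this extra factor is harmless in every application in the paper (\cref{cor:QB}, \cref{cor:cE}, \cref{lm:QB}) because it is absorbed by the summability assumption $\sum_k |k|\hat V(k)<\infty$, and in fact \cref{cor:cE} and \cref{lm:QB} only invoke the first inequality anyway.
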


Using the last corollary, we obtain an a--priori bound for the bosonizable interaction $Q_\textnormal{B}$.  
\begin{cor}[Bosonizable interaction] \label{cor:QB}
Assume $\lVert \hat{V} \rVert_1 < \infty$. Then there exists $C > 0$ such that 
\[ -C (\bH_0 + \hbar) \leq Q_\textnormal{B} \leq C (\bH_0 + \hbar) \;.\]
\end{cor}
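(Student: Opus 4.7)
The plan is to split $Q_\B$ into its diagonal and off-diagonal pieces and bound each using \cref{lm:bb}, leveraging $\hat V \geq 0$ and $\lVert \hat V\rVert_{\ell^1}<\infty$ to handle the sum over $k$. Since both the upper and lower bounds are of the same form $\pm C(\bH_0 + \hbar)$, it suffices to prove the two-sided operator estimate $\pm Q_\B \leq C(\bH_0 + \hbar)$.

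First I would treat the diagonal part $\frac{1}{N}\sum_{k}\hat V(k)\, b^*(k)b(k)$. This is non-negative as a positive operator with non-negative weights, and by \cref{lm:bb} each summand satisfies $b^*(k) b(k) \leq CN\, \bH_0$ uniformly in $k$. Multiplying by $\hat V(k)/N \geq 0$ and summing, the hypothesis $\lVert \hat V \rVert_{\ell^1}<\infty$ yields the two-sided estimate
\[ 0 \leq \frac{1}{N}\sum_{k}\hat V(k)\, b^*(k)b(k) \leq C\lVert \hat V\rVert_{\ell^1}\, \bH_0. \]

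For the off-diagonal part $\frac{1}{2N}\sum_k \hat V(k)\bigl(b^*(k)b^*(-k) + b(-k)b(k)\bigr)$, the key step is the operator Cauchy--Schwarz inequality $\pm(X^* Y + Y^* X)\leq X^* X + Y^* Y$ applied with $X = b(k)$ and $Y = b^*(-k)$, which gives
\[ \pm \bigl(b^*(k)b^*(-k) + b(-k)b(k)\bigr) \leq b^*(k) b(k) + b(-k) b^*(-k). \]
Now both bounds of \cref{lm:bb} enter: the first summand is controlled by $CN\,\bH_0$, while the second summand requires the normal-ordered bound $b(-k)b^*(-k) \leq CN(\bH_0 + \hbar)$. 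Weighting by the non-negative factor $\hat V(k)/(2N)$ and summing in $k$ produces
\[ \pm\,\frac{1}{2N}\sum_k \hat V(k)\bigl(b^*(k)b^*(-k)+b(-k)b(k)\bigr) \leq C\lVert\hat V\rVert_{\ell^1}(\bH_0 + \hbar). \]
Adding the diagonal and off-diagonal contributions gives the claim.

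There is no substantive obstacle in this argument: the uniform-in-$k$ kinetic bounds of \cref{lm:bb}, the positivity of $\hat V$, and $\ell^1$-summability (which follows a fortiori from the hypothesis $\sum_k |k|\hat V(k)<\infty$ of \cref{thm:main}) do all the work. The only conceptual point worth highlighting is the appearance of $\hbar$ in the right-hand side: it is not optional, as it originates from the commutator correction $[b(k), b^*(k)]$ hidden inside $b(-k)b^*(-k)$. This is precisely the manifestation of the fact that the pair operators are only \emph{approximately} bosonic (cf.\ \cref{eq:appr_ccr}): the commutator contributes an $O(N)$ number-like operator which, after division by $N$, yields the subleading term proportional to $\hbar = N^{-1/3}$.
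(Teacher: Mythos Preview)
Your proof is correct and is essentially the same as the paper's: the operator Cauchy--Schwarz inequality $\pm(X^*Y + Y^*X)\leq X^*X + Y^*Y$ is exactly the positivity statement $(b^*(k)\pm b(-k))(b(k)\pm b^*(-k))\geq 0$ that the paper expands, and both arguments then invoke \cref{lm:bb} to bound $b^*(k)b(k)$ and $b(-k)b^*(-k)$ uniformly in $k$ before summing against $\hat V(k)\geq 0$. Your write-up is slightly more explicit in separating the diagonal from the off-diagonal piece, but the substance is identical.
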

\begin{proof} 
We observe that, for any $k \in \bZ^3$, by \cref{lm:bb},  
\[ \begin{split} 0 &\leq (b^* (k) \pm b (-k)) (b (k) \pm b (-k)) \\ &= b^* (k) b (k) + b (-k) b^* (-k) \pm \left[ b^* (k) b^* (-k) + b(-k) b(k) \right] \\ &\leq C  N (\bH_0 + \hbar) \pm \left[ b^* (k) b^* (-k) + b(-k) b(k) \right] \;.\end{split} \]
Hence 
\[ -C N (\bH_0 + \hbar) \leq  b^* (k) b^* (-k) + b(-k) b(k) \leq C N (\bH_0 + \hbar) \, . \]
After summing over $k$, this implies the desired estimate for $Q_\textnormal{B}$.
\end{proof} 

Finally, the a--priori bound for $\bH_0$ (and the resulting estimates on $\cN$ and $\cN_\eps$ from \cref{cor:apri-NN}) imply that the error terms in \cref{eq:errors} are negligible. 
First of all, the exchange operator $\bX$ can be bounded with the following lemma, taken from \cite[Lemma 2.5]{BNPSS}.
\begin{lemma}[Exchange term] \label{lm:X}
Assume $\| \hat{V} \|_1 < C$. Then there exists a $C >0$ such that for all $\xi \in \chi (\cN^p - \cN^h = 0) \cF$ we have
\[ |\langle \xi, \bX \xi \rangle | \leq C N^{-1/3} \langle \xi, \bH_0 \xi \rangle \;.\]
\end{lemma}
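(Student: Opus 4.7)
The plan is straightforward: bound the one--body sums inside $\bX$ crudely by the total number of excitations $\cN$, and then invoke the a--priori estimate $\cN \leq C N^{2/3} \bH_0$ from \cref{cor:apri-NN}. No information about the thin--shell structure of $\BFc \cap (\BF + k)$ is needed; this is why the hypothesis only involves $\|\hat V\|_{\ell^1}$ and not the stronger weighted norm used elsewhere in the paper.

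From the explicit form of $\bX$ given in \cref{eq:errors}, together with $\hat V \geq 0$ and the fact that the two one--body operators appearing are non--negative, one has
\[
|\langle \xi, \bX \xi\rangle| = \frac{1}{2N} \sum_{k \in \bZ^3} \hat V(k) \left( \Big\langle \xi, \sum_{p \in \BFc \cap (\BF+k)} a_p^* a_p \,\xi\Big\rangle + \Big\langle \xi, \sum_{h \in \BF \cap (\BFc-k)} a_h^* a_h \,\xi\Big\rangle \right).
\]
Since the restricted sums are pointwise dominated by the full particle-- and hole--number operators, one obtains the operator inequalities
\[
\sum_{p \in \BFc \cap (\BF+k)} a_p^* a_p \leq \cN_\textnormal{p}, \qquad \sum_{h \in \BF \cap (\BFc-k)} a_h^* a_h \leq \cN_\textnormal{h},
\]
uniformly in $k$. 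Summing over $k$, using $\cN = \cN_\textnormal{p} + \cN_\textnormal{h}$ and $\|\hat V\|_{\ell^1(\bZ^3)} \leq C$, yields
\[
|\langle \xi, \bX \xi\rangle| \leq \frac{C}{N} \langle \xi, \cN \,\xi\rangle.
\]

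To conclude, observe that $\bX$ is diagonal in momentum and thus manifestly commutes with $\cN_\textnormal{p} - \cN_\textnormal{h}$, so the hypothesis $\xi \in \chi(\cN_\textnormal{p} - \cN_\textnormal{h} = 0)\fock$ is compatible with the bound and \cref{cor:apri-NN} applies, giving $\langle \xi, \cN \,\xi\rangle \leq C N^{2/3} \langle \xi, \bH_0 \xi\rangle$. Chaining the two inequalities produces $|\langle \xi, \bX \xi\rangle| \leq C N^{-1/3} \langle \xi, \bH_0 \xi\rangle$, as claimed. There is no real obstacle; the argument is in fact strikingly wasteful, discarding that $|\BFc \cap (\BF+k)| \sim |k| N^{2/3}$ and treating all momenta $p$ in the shell as equally "costly" as a free fermion. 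A sharper bound exploiting the shell geometry would yield a better constant but is not needed, since the losses are fully absorbed by the $\ell^1$ norm of $\hat V$.
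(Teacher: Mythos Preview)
Your proof is correct and is precisely the natural argument: bound the restricted particle and hole sums by $\cN_\textnormal{p}$ and $\cN_\textnormal{h}$, sum over $k$ using $\|\hat V\|_1$, and then apply $\cN \leq C N^{2/3}\bH_0$ from \cref{cor:apri-NN}. The paper does not give its own proof but simply cites \cite[Lemma~2.5]{BNPSS}; your argument is essentially what that citation contains.
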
 

The next lemma provides control on the error term $\cE_1$ in \cref{eq:errors}. It is one of the key achievements of the present paper.
\begin{lemma}[Non--bosonizable interaction]\label{lm:dd} 
Assume $\| \hat{V} \|_1 < \infty$. Fix $0 < \eps < 1/3$ and $131/208 < \gamma < 1$. Then there exists $C > 0$ such that for all $\xi \in \chi (\cN_\textnormal{h} - \cN_\textnormal{p} = 0) \cF$ we have
\begin{equation}\label{eq:eps1} 
\begin{split} 
\langle \xi , \cE_1 \xi \rangle \leq \; &C N^{-1} \| (\cN+1)^{3/2} \xi \| \| \cN_{1/3-\eps}^{1/2} \xi \| +C N^{\eps-1} (N^\eps + N^{\gamma/3})  \| \cN^{1/2} \xi \|^2 \;.
\end{split} \end{equation}  
\end{lemma}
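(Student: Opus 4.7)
The plan is to use the non-negativity of $\cE_1$ to write
\[ \langle \xi, \cE_1 \xi\rangle = \frac{1}{2N} \sum_{k \in \bZ^3} \hat{V}(k) \, \| d(k)\xi\|^2, \]
and to estimate $\| d(k)\xi\|^2$ for each $k$ in a way that reflects the two momentum scales present in $d(k) = D_1(k) - D_2(k)$, where $D_1(k) = \sum_{p \in \BFc \cap (\BFc - k)} a_p^* a_{p+k}$ is the particle-particle piece and $D_2(k)$ the analogous hole-hole piece (handled identically by the $\BF \leftrightarrow \BFc$ symmetry). The central idea is to split $D_1(k) = D_1^{\textnormal{far}}(k) + D_1^{\textnormal{near}}(k)$ using the same threshold built into $\cN_{1/3-\eps}$: the ``near'' part retains only summands with both $\lvert \lvert p\rvert - \kF\rvert, \lvert\lvert p+k\rvert - \kF\rvert \leq N^{\eps-1/3}$, while the ``far'' part retains the rest (at least one momentum in the gapped region counted by $\cN_{1/3-\eps}$).

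For the \emph{far} part I expand $\| D_1^{\textnormal{far}}(k)\xi\|^2 = \langle \xi, (D_1^{\textnormal{far}}(k))^* D_1^{\textnormal{far}}(k)\xi\rangle$ and normal-order by the canonical anticommutation relations. This produces a bilinear contribution $\sum_p a_{p+k}^* a_{p+k}$ over the far set, bounded by $\cN_{1/3-\eps}$ when the gap condition sits directly on $p+k$, and by the commutator identity $\| D\xi\|^2 = \| D^*\xi\|^2 + \langle\xi, [D^*,D]\xi\rangle$ transferred onto $\sum a_p^* a_p$ when it is $p$ that is gapped. The normal-ordered quartic residue can be rewritten as $\sum_{p,q} \langle a_q a_{p+k}\xi, a_p a_{q+k}\xi\rangle$ and controlled via the operator Cauchy-Schwarz by $\langle \xi, \cN\,\cN_{1/3-\eps}\xi\rangle$. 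Since $\cN$ and $\cN_{1/3-\eps}$ commute with $\cN_{1/3-\eps} \leq \cN$, a final Cauchy-Schwarz --- factoring $\cN\cN_{1/3-\eps} = \cN^{3/2}\cdot \cN^{-1/2}\cN_{1/3-\eps}$ and using $\cN^{-1}\cN_{1/3-\eps}^2 \leq \cN_{1/3-\eps}$ --- yields $\langle \xi, \cN\cN_{1/3-\eps}\xi\rangle \leq \|(\cN+1)^{3/2}\xi\|\,\|\cN_{1/3-\eps}^{1/2}\xi\|$. Summing over $k$ against $\hat{V}(k)/(2N)$ and using $\hat{V} \in \ell^1$ (implied by the hypothesis) delivers the first term of the claimed bound.

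For the \emph{near} part the summand count $\lvert S_k^n\rvert$ is small, and the triangle and Cauchy-Schwarz inequalities give $\| D_1^{\textnormal{near}}(k)\xi\| \leq \lvert S_k^n\rvert^{1/2}\, \|\cN^{1/2}\xi\|$. The main combinatorial input is a uniform-in-$k$ bound on $\lvert S_k^n\rvert$: choosing coordinates with $k$ aligned to the $p_1$-axis (as in the proof of \cref{lm:kinetic}), the admissible $p_1$ fill an interval of length $O(N^{\eps}/\lvert k\rvert)$, and for each such $p_1$ the two shell conditions on $p_2^2+p_3^2$ reduce to counting lattice points of $\bZ^2$ in an annulus. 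Applying \cref{thm:number} slice by slice then yields $\lvert S_k^n\rvert \leq C(N^{2\eps} + N^{\eps+\gamma/3})/\lvert k\rvert$ for $\gamma > 131/208$. Since $d(0)\xi = 0$ on $\chi(\cN_\textnormal{p} - \cN_\textnormal{h} = 0)\cF$ and $\lvert k\rvert \geq 1$ otherwise on $\bZ^3$, summation against $\hat{V}(k)/N$ produces the second term of the claimed bound.

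The hard part is the uniform lattice-point estimate of step three: \cref{thm:number} must be applied with constants independent of the direction of $k$, and one must separate cleanly the main-order volume contribution $N^{2\eps}$ (a thin tube around the circle on which the two Fermi spheres meet) from the lattice error $N^{\eps+\gamma/3}$ (the $O(N^{\gamma/3})$ annular correction integrated across the $O(N^{\eps}/\lvert k\rvert)$ admissible values of $p_1$). The mixed sub-case in step two --- where $p$ rather than $p+k$ carries the gap --- is the other delicate point and is handled by the commutator trick described above, which shifts the bilinear restriction from $\sum a_{p+k}^* a_{p+k}$ onto $\sum a_p^* a_p$ at the cost of error terms that are absorbed into the same bounds.
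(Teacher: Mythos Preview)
Your overall strategy---split each $d(k)$ into ``far'' and ``near'' pieces according to whether at least one of the two momenta lies outside the $N^{-1/3+\eps}$--shell, control the far piece through $\cN_{1/3-\eps}$ and $\cN$ after normal ordering, and bound the near piece by the size of the admissible momentum set via lattice--point counting---is exactly the route the paper takes. The far estimate and the factorization $\langle\xi,\cN\cN_{1/3-\eps}\xi\rangle\leq\|(\cN+1)^{3/2}\xi\|\,\|\cN_{1/3-\eps}^{1/2}\xi\|$ match the paper's argument.

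There is, however, a genuine gap in your description of the near--set count. You write ``choosing coordinates with $k$ aligned to the $p_1$--axis'', and then slice at integer values of $p_1$. But rotations do \emph{not} preserve $\bZ^3$, so for general $k\in\bZ^3$ there is no integer coordinate along $k$ to slice by, and \cref{thm:number} cannot be applied to the transverse ``slices'' directly. The proof of \cref{lm:kinetic} you cite does something different: it never rotates; instead it fixes the \emph{integer} $m=2q\cdot k-k^2$ (which takes $O(N^\eps)$ values on the near set, not $O(N^\eps/|k|)$), assumes without loss of generality $|k_1|\geq|k_2|,|k_3|$, solves the linear constraint for $p_1$ in terms of $(p_2,p_3)\in\bZ^2$, and then applies \cref{thm:number} to the resulting elliptical annulus---whose aspect ratio $|k|/|k_1|\leq\sqrt{3}$ is bounded independently of $k$, which is what gives uniformity. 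This yields $|\tilde B_m(k)|\leq C(N^\eps+N^{\gamma/3})$ per slice, hence the near--set bound $CN^\eps(N^\eps+N^{\gamma/3})$ without any $1/|k|$ factor. Since you only ever use $|k|\geq 1$ afterward, the missing $1/|k|$ is harmless for the final inequality, but the mechanism you describe for obtaining it is not valid; replace it by the parametrization argument from \cref{lm:kinetic}.
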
 

\begin{rem}
With a localization argument, we will be able to restrict our attention to states for which $\cN \leq C N^{1/3}$ and $\cN_\delta \leq C N^\delta$  (for the expectation value as stated in \cref{cor:apri-NN}, but also for higher moments). Applying \cref{eq:eps1} for such states, choosing $\gamma < 1$ and $\eps > 0$ small enough, we conclude that $\cE_1 \ll N^{-1/3}$ and therefore that $\cE_1$ does not contribute to the correlation energy, to leading order.
\end{rem}

\begin{proof}[Proof of \cref{lm:dd}]
Recall the definition \cref{eq:dd} of the operators $d^* (k)$ and $d(k)$. 
Since $d(0) = d^* (0) = 0$ on $\chi (\cN_\textnormal{h} - \cN_\textnormal{p} = 0) \cF$, we find
\[ \begin{split} 
\langle \xi , \cE_1 \xi \rangle = \frac{1}{2N} \sum_{k \in \bZ^3 \backslash \{ 0 \}} \hat{V} (k) \sum_{q_1,q_2 \in [B_\F^c \cap (B_\F^c + k)] \cup [B_\F \cap (B_\F -k)]} \hspace{-.5cm}  \sigma_{q_1} \sigma_{q_2}  \langle \xi , a_{q_1}^* a_{q_1 - \sigma_{q_1} k} a^*_{q_2 -\sigma_2 k} a_{q_2} \xi \rangle \end{split} \]
where we introduced the notation $\sigma_q = 1$, if $q \in B_\F^c \cap (B_\F^c + k)$, and $\sigma_q = -1$, if $q \in B_\F \cap (B_\F +k)$. With the canonical anticommutation relations \cref{eq:CAR}, we obtain
\begin{align}
 \langle \xi , \cE_1 \xi \rangle = &- \frac{1}{2N} \sum_{k \in \bZ^3 \backslash \{ 0 \}} \hat{V} (k) \sum_{q_1,q_2 \in [B_\F^c \cap (B_\F^c + k)] \cup [B_\F \cap (B_\F -k)]} \hspace{-.5cm}  \sigma_{q_1} \sigma_{q_2}  \langle \xi , a_{q_1}^* a^*_{q_2 -\sigma_2 k} a_{q_1 - \sigma_{q_1} k}  a_{q_2} \xi \rangle \nonumber\\
 &+  \frac{1}{2N} \sum_{k \in \bZ^3 \backslash \{ 0 \}} \hat{V} (k) \sum_{q_1 \in [B_\F^c \cap (B_\F^c + k)] \cup [B_\F \cap (B_\F -k)]} \hspace{-.3cm}  \langle \xi , a_{q_1}^*  a_{q_1} \xi \rangle \;. \label{eq:commu} 
\end{align} 
The second term can be estimated by 
\[ \frac{1}{2N} \sum_{k \in \bZ^3 \backslash \{ 0 \}} \hat{V} (k) \sum_{q_1 \in [B_\F^c \cap (B_\F^c + k)] \cup [B_\F \cap (B_\F -k)]}  \| a_{q_1} \xi \|^2 \leq C  N^{-1} \| \cN^{1/2} \xi \|^2 \;.\]
Let us focus on the first term on the r.\,h.\,s.\ of \cref{eq:commu}. The first observation is that contributions with at least one of the four momenta $q_1$, $q_1 - \sigma_1 k$, $q_2$, $q_2 - \sigma_2 k$ at distances larger than $N^{-1/3 + \eps}$ from the Fermi sphere, for an $0 < \eps < 1/3$ to be chosen later, can be bounded using a combination of $\cN$ and of the gapped number operator $\cN_{1/3-\eps}$ defined in \cref{eq:gappedN}. In fact, considering for example the case $||q_1| - k_\F| > N^{-1/3 + \eps}$ (and dropping, for an upper bound, all other restrictions on $q_1$ and $q_2$), we have  
\[
\begin{split}
\frac{1}{N} \sum_{k \in \bZ^3 \backslash \{ 0 \}} &\hat{V} (k) \sum_{q_1, q_2 \in \bZ^3 : ||q_1| - k_\F| > N^{-1/3 + \eps}} | \langle \xi , a_{q_1}^* a^*_{q_2 - \sigma_2 k} 
a_{q_1 - \sigma_{q_1} k}  a_{q_2} \xi \rangle | \\ 
\leq \; & \frac{1}{N} \sum_{k \in \bZ^3 \backslash \{ 0 \}} \hat{V} (k) \left( \sum_{q_1, q_2 \in \bZ^3 : ||q_1| - k_\F| > N^{-1/3 + \eps}} \| a_{q_1} a_{q_2 - \sigma_2 k} (\cN + 1)^{-1/2} \xi \|^2 \right)^{1/2} \\ & \hspace{2.3cm} \times \left( \sum_{q_1, q_2 \in \bZ^3}  \| a_{q_1 - \sigma_{q_1} k}  a_{q_2} (\cN+1)^{1/2} \xi \|^2 \right)^{1/2} \\
\leq \; &C N^{-1}  \| \cN_{1/3 - \eps}^{1/2} \xi \| \| (\cN+1)^{3/2} \xi \| \end{split}
\]
where we used $a^*_p \cN = (\cN-1) a^*_p$ for all $p\in \Zbb^3$. Thus
\begin{equation}\label{eq:cE1-3}
\begin{split}
\langle \xi , \cE_1 \xi \rangle \leq \; &C N^{-1} \| \cN^{1/2} \xi \|^2 + C N^{-1} \| \cN_{1/3 - \eps}^{1/2} \xi \| \| (\cN+1)^{3/2} \xi \| \\
&+ \frac{1}{N} \sum_{k \in \bZ^3 \backslash \{ 0 \}} \hat{V} (k) \sum_{q_1,q_2  \in A^\textnormal{p}_k \cup A^\textnormal{h}_k} | \langle \xi , a_{q_1}^* a_{q_2 - \sigma_2 k}^* a_{q_1 -\sigma_1 k} a_{q_2} \xi \rangle |
\end{split}
\end{equation} 
where we defined the momentum sets
\[
\begin{split} 
A^\textnormal{p}_k & := \left\{ q \in \bZ^3 : k_\F < |q| < k_\F + N^{-1/3 + \eps} \quad \textnormal{and} \quad k_\F < |q-k| < k_\F + N^{-1/3 + \eps} \right\}\;, \\
A^\textnormal{h}_k & := \left\{ q \in \bZ^3 : k_\F  - N^{-1/3 + \eps} < |q| \leq k_\F  \quad \textnormal{and} \quad k_\F - N^{-1/3 + \eps} < |q + k| \leq k_\F \right\}\;.
\end{split}
\]
Note that for $q_1 \in A^\textnormal{p}_k$ we have  
\[
\begin{split} 
k_\F^2 &\leq (q_1 - k)^2 = q_1^2 + k^2 -2 q_1 \cdot k \\ &\leq (k_\F + N^{-1/3+ \eps})^2 + k^2 -2 q_1 \cdot k \leq k_\F^2 + C N^\eps + k^2 -2 q_1 \cdot k
\end{split} \]
and thus $2 q_1 \cdot k - k^2 \leq C N^{\eps}$. Inverting the roles of $q_1$ and $q_1 - k$, we also obtain $2 q_1 \cdot k -k^2 \geq - C N^\eps$. Arguing similarly for $q_1 \in A^\textnormal{h}_k$, we conclude that 
\begin{equation}\label{eq:qdot0} 
-C N^\eps \leq 2 q_1 \cdot k - k^2  \leq C N^\eps 
\end{equation} 
for all $q_1 \in A^\textnormal{p}_k \cup A^\textnormal{h}_k$ (which means that the set $A^\textnormal{p}_k \cup A^\textnormal{h}_k$ is localized close to the equator of the Fermi sphere, thinking of the direction of $k$ as defining the north pole). 

Using the Cauchy--Schwarz inequality and $\| a_{q_1} \|_\op \leq 1$, $\| a_{q_1 - \sigma_1 k} \|_\op \leq 1$, we conclude that the last term on the r.\,h.\,s.\ of \cref{eq:cE1-3} can be bounded by
\begin{equation}\label{eq:simpler}
\begin{split}
\frac{1}{N} &\sum_{k \in \bZ^3 \backslash \{ 0 \}} \hat{V} (k) \sum_{q_1,q_2  \in A^\textnormal{p}_k \cup A^\textnormal{h}_k} | \langle \xi , a_{q_1}^* a_{q_2 - \sigma_2 k}^* a_{q_1 -\sigma_1 k} a_{q_2} \xi \rangle |  \\
&\leq \frac{1}{N} \sum_{k \in \bZ^3 \backslash \{ 0 \}} \hat{V} (k)  |A^\textnormal{p}_k \cup A^\textnormal{h}_k| \| \cN^{1/2} \xi \|^2 \leq \frac{\| \cN^{1/2} \xi \|^2}{N}  \sum_{k \in \bZ^3 \backslash \{ 0 \}} \hat{V} (k)  \sum_{m=-CN^\eps}^{CN^\eps} |B_{m,k}| 
\end{split}
\end{equation} 
where we defined 
\begin{equation}
\label{eq:Bsk}
\widetilde{B}_{m}(k) := \{ q \in \bZ^3 : k_\F - N^{-1/3 + \eps} \leq |q| \leq k_\F + N^{-1/3+ \eps} \text{ and } 2 q \cdot k - k^2 = m \} \; .
\end{equation} 
Proceeding as in the proof of \cref{lm:kinetic} following \cref{eq:I-bd}, we find, for $131/208 < \gamma < 1$, 
\[  |\widetilde{B}_{m}(k)| \leq C  (N^{\varepsilon} + N^{\gamma/3}) \;. \]
Inserting in \cref{eq:simpler} and using $\| \hat{V} \|_1 < \infty$, we obtain 
\[ \begin{split} \frac{1}{N} &\sum_{k \in \bZ^3 \backslash \{ 0 \}} \hat{V} (k) \sum_{q_1,q_2  \in A^\textnormal{p}_k \cup A^\textnormal{h}_k} | \langle \xi , a_{q_1}^* a_{q_2 - \sigma_2 k}^* a_{q_1 -\sigma_1 k} a_{q_2} \xi \rangle |  \leq \frac{C}{N} N^{\eps} (N^\eps + N^{\gamma/3})  \| \cN^{1/2} \xi \|^2 \;.
\end{split} \]
With \cref{eq:cE1-3} this concludes the proof of \cref{lm:dd}.
 \end{proof}

\cref{lm:dd} proves that the error term $\cE_1$ is negligible (in the ground state and, more generally, on low--energy states with correlation energy of order $\hbar$). Together with \cref{lm:bb}, it also allows us to neglect the term $\cE_2$ in \cref{eq:errors}. The following corollary improves \cite[Lemma 9.1]{BNPSS} in not requiring smallness of $V$, and is also simpler to prove.
\begin{cor}[Coupling of bosonizable and non--bosonizable terms] \label{cor:cE}
Assume $\| \hat{V} \|_1 < \infty$ and $\hat{V} \geq 0$. With the error terms $\cE_1$, $\cE_2$ defined as in \cref{eq:errors}, we have
\begin{equation}\label{eq:CS-cE} \pm \cE_2 \leq N^\alpha \cE_1 + C N^{-\alpha} \bH_0  \qquad \text{for every $\alpha \geq 0$.}\end{equation} 
With \cref{lm:dd}, we conclude that for $131/208 < \gamma < 1$ and $\eps > 0$ small enough (choosing $\alpha = \eps/4$ in \cref{eq:CS-cE}), there exists a constant $C > 0$ such that 
\begin{equation}
\label{eq:cE1cE2} 
\begin{split} 
\langle \xi , (\cE_1 + \cE_2) \xi \rangle \geq \; &- C N^{-1+ \eps/4} \| (\cN+1)^{3/2} \xi \| \| \cN^{1/2}_{1/3-\eps} \xi \| - C N^{5\eps/4  + \gamma/3 - 1} \| \cN^{1/2} \xi \|^2 \\ &- C N^{-\eps/4} \| \bH_0^{1/2} \xi \|_2^2 
\end{split} \end{equation} 
for all $\xi \in \chi (\cN_\textnormal{h} - \cN_\textnormal{p} = 0) \cF$. 
\end{cor}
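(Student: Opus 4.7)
The plan is to derive the operator inequality \cref{eq:CS-cE} by a mode--wise Cauchy--Schwarz bound, and then to combine it with \cref{lm:bb} and \cref{lm:dd} to obtain the quantitative lower bound \cref{eq:cE1cE2}.

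For \cref{eq:CS-cE}, the starting point is the operator Cauchy--Schwarz inequality
\[ \pm \big[ d^*(k)\, b(-k) + b^*(-k)\, d(k) \big] \leq N^\alpha\, d^*(k)\, d(k) + N^{-\alpha}\, b^*(-k)\, b(-k), \]
valid for every $\alpha \geq 0$ and every $k \in \bZ^3$. Multiplying by the non--negative weight $\hat V(k)/(2N)$ and summing over $k$, the first summand reproduces $N^\alpha \cE_1$, while the second summand is controlled using the kinetic bound $b^*(-k)\, b(-k) \leq C N \bH_0$ of \cref{lm:bb}. Together with $\|\hat V\|_1 < \infty$, this yields a contribution bounded by $C N^{-\alpha} \bH_0$, giving \cref{eq:CS-cE}.

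For \cref{eq:cE1cE2}, I specialize $\alpha = \eps/4$ in the lower--bound version of \cref{eq:CS-cE}, which produces
\[ \langle \xi, (\cE_1 + \cE_2) \xi \rangle \geq \big( 1 - N^{\eps/4} \big) \langle \xi, \cE_1 \xi\rangle - C N^{-\eps/4} \langle \xi, \bH_0 \xi\rangle. \]
Since $\hat V \geq 0$ implies $\cE_1 \geq 0$, the (negative) coefficient $1 - N^{\eps/4}$ may be bounded from below by $-N^{\eps/4}$, so the upper bound of \cref{lm:dd} applies directly to the remaining $N^{\eps/4} \langle \xi, \cE_1 \xi\rangle$. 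Pulling the factor $N^{\eps/4}$ into the two summands of \cref{eq:eps1} produces exactly $C N^{-1+\eps/4} \|(\cN+1)^{3/2}\xi\|\,\|\cN_{1/3-\eps}^{1/2}\xi\|$, together with a term of prefactor $C N^{5\eps/4 - 1}(N^\eps + N^{\gamma/3})$. For $\eps$ sufficiently small, and in particular for $\eps < \gamma/3$ (which is admissible since $\gamma > 131/208$), one has $N^\eps \leq N^{\gamma/3}$, so $N^\eps + N^{\gamma/3} \leq 2 N^{\gamma/3}$ and the second prefactor simplifies to $C N^{5\eps/4 + \gamma/3 - 1}$, yielding \cref{eq:cE1cE2}.

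No real obstacle remains: the heavy lifting has been done in \cref{lm:bb} and \cref{lm:dd}. The only judgement call is the choice of the splitting exponent $\alpha = \eps/4$ in the Cauchy--Schwarz step, which has to be small enough that the resulting $N^{-\eps/4} \bH_0$ term is subleading once controlled by the a--priori bound on $\bH_0$ from \cref{lm:H0-apri}, and yet large enough that the loss $N^{\eps/4}$ on $\cE_1$ is still absorbed by the strong upper bound of \cref{lm:dd}. The previous smallness assumption on $V$ used in \cite{BNPSS} is not needed here, precisely because \cref{lm:dd} already controls $\cE_1$ by quantities proportional to $\|\hat V\|_1$ with a sub--$\hbar$ prefactor.
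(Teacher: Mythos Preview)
Your proof is correct and follows essentially the same route as the paper: the paper's proof of \cref{eq:CS-cE} is the one-line Cauchy--Schwarz argument you describe (written there with $b^*(k)b(k)$ after relabeling $k\mapsto -k$), and the derivation of \cref{eq:cE1cE2} from \cref{eq:CS-cE} and \cref{lm:dd} is left implicit in the paper but is exactly what you spell out. Your handling of the coefficient $1-N^{\eps/4}$ via $\cE_1\ge 0$ and the simplification $N^\eps+N^{\gamma/3}\le 2N^{\gamma/3}$ for $\eps<\gamma/3$ are both correct and make explicit what the paper leaves to the reader.
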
 

\begin{rem}The choice $\alpha = \eps/4$ optimizes the sum of the first and the last term on the r.\,h.\,s.\ of \cref{eq:cE1cE2}, counting (following the argument in the remark after \cref{lm:dd}) $\| (\cN+1)^{3/2} \xi \| \lesssim N^{1/2}$, $\| \cN^{1/2}_{1/3-\eps} \xi \| \lesssim N^{1/6-\eps/2}$, and $\| \bH^{1/2}_0 \xi \|^2 \lesssim N^{-1/3}$. The second term on the r.\,h.\,s.\ of \cref{eq:cE1cE2} is of lower order if $\gamma$ is chosen small enough.
\end{rem}

\begin{proof}[Proof of \cref{cor:cE}] 
By Cauchy--Schwarz, \cref{lm:bb}, and $\| \hat{V} \|_1 < \infty$, we find 
\[ \pm \cE_2 \leq N^\alpha \cE_1 + N^{-\alpha - 1} \sum_{k \in \bZ^3} \hat{V} (k) b^* (k) b(k) \leq N^\alpha \cE_1 + C N^{-\alpha} \bH_0 \; .  \qedhere\]
\end{proof}

\section{Patch Decomposition and Almost Bosonic Operators} 
\label{sec:patch} 

The bounds in last section allow us to approximate the correlation Hamiltonian \cref{eq:corr} by $\bH_0 + Q_\textnormal{B}$, with $\bH_0$ and $Q_\textnormal{B}$ defined in \cref{eq:RHR-main}. The term $Q_\textnormal{B}$, arising from the interaction, is quadratic in the particle--hole pair creation and annihilation operators $b^* (k)$, $b(k)$. It turns out that, on states with few excitations of the Fermi ball, the operators $b^* (k)$ and $b(k)$ satisfy approximately bosonic commutation relations. 

In order to express also the kinetic energy $\bH_0$ in terms of almost bosonic creation and annihilation operators, we have to decompose a layer around the Fermi sphere $\partial B_\F$ into $M$ patches $\{ B_\alpha \}_{\alpha=1}^M$, for the number of patches $M \in \bN$ to be chosen as a function of $N$ at the end of the paper. Such a decomposition has been constructed in \cite{BNPSS0}. One starts by decomposing a half sphere in $M/2$ patches. The sidelengths of the patches are comparable (they are both of order $N^{1/3} / M^{1/2}$). The patches have thickness
\[1 \ll 2R \ll N^{1/3}\]
in the radial direction (later we will impose stronger conditions). Furthermore, the patches are disjoint and separated by corridors, larger than $R$. We denote by $\omega_\alpha$ the center of the patch $B_\alpha$. Finally, the patch decomposition of the first half sphere is mirrored by the map $k \mapsto -k$ onto the other half sphere. The construction is so that the area of the radial projection $p_\alpha$ of the patch $B_\alpha$ on the unit sphere $\mathbb{S}_2$ has area $4\pi / M$, up to corrections of order $N^{-1/3} M^{-1/2}$, and diameter bounded by $C/\sqrt{M}$, for all $\alpha = 1, \dots , M$; see \cite[Section 3.2]{BNPSS0} for the details.

For fixed $k \in \bZ^3$ with $|k| < R$, we are going to exclude patches in a small strip around the equator (thinking of the direction of $k$ as defining the north direction) of the Fermi sphere. More precisely, for $0 < \delta < 1/6$, we define $\cI_k := \cI_k^+ \cup \cI_k^-$, with 
\begin{equation}\label{eq:Ik-def}
\begin{split} 
\cI_k^+ & := \{ \alpha \in \{ 1, \dots , M \} : k \cdot \hat{\omega}_\alpha \geq N^{-\delta} \} \;,\\
\cI_k^- & := \{ \alpha \in \{ 1, \dots , M \} : k \cdot \hat{\omega}_\alpha \leq - N^{-\delta} \} \;.
\end{split}
\end{equation}
Given $k \in \bZ^3$, $|k| < R$ and $\alpha \in \cI_k^+$, we introduce the  
particle--hole pair creation operator  
\begin{equation}
\label{eq:normali}
b^*_\alpha(k) := \frac{1}{n_\alpha(k)} \sum_{\substack{p\colon p \in B_\F^c \cap B_\alpha \\ p-k\in B_\F \cap B_\alpha}}  a^*_p a^*_{p-k}
\end{equation}
with the normalization constant  
\[
n_\alpha(k)^2 := \sum_{\substack{p\colon p \in B_\F^c \cap B_\alpha\\p-k\in B_\F \cap B_\alpha}} 1
\]
counting the number of particle--hole pairs of relative momentum $k$ in $B_\alpha$. The normalization constant $n_\alpha(k)$ should be large (the more summands contribute to \cref{eq:normali}, the less the $b^*$-operators are affected by the Pauli principle, and the more bosonic they behave). The following lemma is a variation of \cite[Prop.~3.1]{BNPSS0} and \cite[Lemma 5.1]{BNPSS}.
\begin{lemma}[Number of pairs per patch]\label{lem:counting} 
Assume that $N^{2\delta} R^2 \ll M \ll N^{\frac{2}{3}-2\delta} R^{-4}$. Then for all $k \in \bZ^3$ with $|k| < R$ and $\alpha \in \cI_k$, we have 
\[
n_\alpha(k)^2  =  \frac{4\pi k_\F^2}{M} \lvert k \cdot \hat\omega_\alpha \rvert  \left( 1 + o(1) \right)\;.
\]
\end{lemma}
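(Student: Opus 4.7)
The plan is to identify $n_\alpha(k)^2$ as a three--dimensional lattice point count and compare it with a Euclidean volume.

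\textbf{First,} by construction of the patch decomposition in \cite{BNPSS0}, adjacent patches $B_\alpha$ are separated by corridors of width greater than $R$. Hence for $|k| < R$, the constraint $p \in B_\alpha$ together with $|p-k| \leq \kF$ forces $p-k \in B_\alpha$ automatically (the shift $k$ cannot carry $p$ across a corridor), and one of the two patch constraints in \cref{eq:normali} is redundant. Thus
\[ n_\alpha(k)^2 = \big| \{ p \in \bZ^3 : p \in B_\alpha,\ |p| > \kF,\ |p-k| \leq \kF \} \big| =: |\Omega_\alpha(k) \cap \bZ^3|. \]

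\textbf{Second,} I would compute the Lebesgue volume of $\Omega_\alpha(k)$ in spherical coordinates $p = \rho \hat{u}$, with $\hat{u}$ ranging over the angular projection $p_\alpha \subset \Sbb^2$ (of area $4\pi/M + O(N^{-1/3} M^{-1/2})$ and diameter $O(M^{-1/2})$). The inequality $|p-k|^2 \leq \kF^2$ reduces, for $\rho$ close to $\kF$, to
\[ \rho \leq \kF + k \cdot \hat{u} - \frac{|k|^2}{2\kF} + O\!\left( \frac{R^2 + |k|^2}{\kF} \right), \]
so the radial thickness at each $\hat{u}$ equals $k \cdot \hat{u}$ up to relative error $O(R N^{-1/3})$. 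Since $\alpha \in \cI_k$ implies $|k \cdot \hat{\omega}_\alpha| \geq N^{-\delta}$ and since the patch has angular diameter $O(M^{-1/2})$, the variation of $k \cdot \hat{u}$ across the patch is $O(R/\sqrt{M}) \ll N^{-\delta}$ by the hypothesis $M \gg N^{2\delta} R^2$. Integrating over the patch and using $\rho^2 = \kF^2(1+o(1))$ gives
\[ |\Omega_\alpha(k)| = \frac{4\pi \kF^2}{M}\, |k \cdot \hat{\omega}_\alpha|\,(1+o(1)). \]

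\textbf{Third,} I would bound the lattice point discrepancy by slicing $\Omega_\alpha(k)$ along the integer--valued function $m(p) := 2 p \cdot k - |k|^2$. For each integer $m \geq 1$, the intersection of $\Omega_\alpha(k)$ with the plane $\{m(p)=m\}$ is a planar region of diameter $O(\kF/\sqrt{M})$, bounded by arcs of the two ellipses coming from $|p|=\kF$ and $|p-k|=\kF$ together with (at most) four straight sides inherited from the patch $B_\alpha$. Applying \cref{thm:number} to each such slice, the number of lattice points equals the planar area up to an error $O((\kF/\sqrt{M})^\gamma)$ for any $\gamma > 131/208$. By the computation in Step~2, the image of $\Omega_\alpha(k)$ under $m(\cdot)$ has length $O(\kF R/\sqrt{M} + R^2)$, so the number of non--empty slabs is of the same order. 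Summing the per--slice areas reconstructs $|\Omega_\alpha(k)|$ up to negligible boundary effects, while the accumulated Huxley error is subleading with respect to the main term $\kF^2 |k\cdot\hat{\omega}_\alpha|/M \geq \kF^2 N^{-\delta}/M$ precisely under the assumed two--sided bound $N^{2\delta} R^2 \ll M \ll N^{2/3-2\delta} R^{-4}$.

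\textbf{The main obstacle} is to organize the three--dimensional lattice count so that the per--slice error is governed by the small tangential scale $\kF/\sqrt{M}$ instead of by $\kF$, while simultaneously keeping the number of slices small enough that the accumulated error stays below the main term of order $\kF^2 N^{-\delta}/M$. Balancing these two competing contributions is exactly what forces the two--sided range of $M$ assumed in the statement.
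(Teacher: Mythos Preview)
Your approach differs substantially from the paper's, and it carries a real gap.

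\textbf{What the paper does.} The paper does not invoke \cref{thm:number} at all for this lemma. It simply refers to the geometric counting argument of \cite[Section~6]{BNPSS0} and tracks two refinements. First, the lower bound $M\gg N^{2\delta}R^2$ is a purely geometric requirement ensuring that $k$ still points from inside to outside the Fermi ball at every point of the patch (your Step~2 contains the same computation). Second, the dominant error is the \emph{corridor} contribution: pairs $(p,p-k)$ with $p\in B_\alpha$ but $p-k$ falling into a corridor rather than $B_\alpha$. This loss is at most (length of the patch boundary on the Fermi sphere) $\times$ (radial thickness) $\times$ (tangential displacement), i.e.\ $O\big(\tfrac{N^{1/3}}{\sqrt{M}}\,|k|^2\big)$, and comparing with the main term $\tfrac{4\pi k_\F^2}{M}|k\cdot\hat\omega_\alpha|\geq C N^{2/3-\delta}/M$ yields exactly the upper condition $M\ll N^{2/3-2\delta}R^{-4}$. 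No number--theoretic input is needed.

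\textbf{Where your proposal goes wrong.} In Step~1 you claim that $p\in B_\alpha$ and $|p-k|\leq k_\F$ force $p-k\in B_\alpha$. This is false: the shift cannot carry $p-k$ into a \emph{different patch}, but it can carry it into a corridor, where it lies in no patch at all. That discrepancy is precisely the paper's main error term, not a negligible side remark. In Step~3 you then propose to control the lattice discrepancy by slicing along $m=2p\cdot k-|k|^2$ and applying \cref{thm:number}. But \cref{thm:number} requires a \emph{smooth convex} body with strictly positive boundary curvature (the paper's remark after the proposition is explicit about this), whereas your slices are bounded by circular arcs \emph{and up to four straight sides} inherited from the patch walls, with corners where they meet. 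The constant $C_{K,\gamma}$ is not uniform over such regions, so the per--slice bound $O\big((k_\F/\sqrt{M})^\gamma\big)$ is unjustified. Even granting it, your accumulated error $R\,k_\F^{1+\gamma}/M^{(1+\gamma)/2}$ divided by the main term $k_\F^2 N^{-\delta}/M$ equals $R\,N^{(\gamma-1)/3+\delta}M^{(1-\gamma)/2}$; at the top of the allowed range $M\sim N^{2/3-2\delta}R^{-4}$ this is $R^{2\gamma-1}N^{\delta\gamma}$, which does \emph{not} tend to zero. So the claim that the two--sided bound on $M$ is ``exactly'' what balances your errors is incorrect. The paper's corridor estimate, by contrast, matches the hypothesis precisely.
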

\begin{proof}
The proof follows the argument given in \cite[Section~6]{BNPSS0}; only the control of the error terms needs to be refined in two respects.

First, in order for the vector $k$ to point from inside the Fermi ball to outside the Fermi ball even at the boundaries of the patch, we need $N^{2\delta} R^2 \ll M$, as can be verified by elementary geometry. This condition is illustrated in \cref{fig:smallangle}.

Second, the error term arising from the loss of particle--hole pairs near the boundary of the patch (thus proportional to the number of pairs in the patch of thickness $\lvert k\rvert \leq R$ not more than a distance $\lvert k\rvert \leq R$ from the patch boundary on the Fermi sphere) implies
\begin{align} \label{eq:nalphak}
 n_\alpha(k)^2 = \frac{4\pi k_{\textnormal{F}}^2}{M} \lvert k \cdot \hat{\omega}_\alpha \rvert + \mathcal{O}\left(\frac{N^{1/3}}{\sqrt{M}} \lvert k\rvert^2 \right)= \frac{4\pi k_{\textnormal{F}}^2}{M} \lvert k \cdot \hat{\omega}_\alpha \rvert \left(1 + \Ocal\left(\frac{\sqrt{M} \lvert k\rvert^2}{N^{1/3} \lvert k \cdot \hat{\omega}_\alpha\rvert}\right)\right)\;.
\end{align}
The error term becomes $o(1)$ since by assumption $\sqrt{M} R^2 N^{-1/3} N^\delta \ll 1$.
\begin{figure}
\vspace{-3cm}
\centering
\begin{tikzpicture}[scale=.5]
\clip (-6,-6) rectangle (6,16.5);
\draw[very thick] (0,-6) circle (11);
\node [above] at (-2.8,5) {$\scriptscriptstyle{k}$};
\draw[thick,->] (0,-6)--(0,5);
\node [right] at (0,4.2) {$\scriptscriptstyle{\omega_\alpha}$};
\draw(-5,5)--(5,5);
\node [above] at (-.3,-4.3) {{$\scriptscriptstyle{\theta_1}$}};
\draw (0,-6) +(90:1.8) arc (90:110:1.8);
\draw (-2.03,5) +(0:1.8) arc (0:20:1.8);
\node [right] at (-.3,5.4) {$\scriptscriptstyle{\theta_2}$};
\draw (0,-6)--(-4.5,6.5);
\draw (0,-6)--(4.5,6.5);
\draw [fill]  (-3.73,4.37) circle [radius=0.07];
\draw[very thick,->](-3.73,4.37) -- (-2,5.5);
\draw(-6,3.57) -- (1.8,6.37);
 \end{tikzpicture}
   \caption{Illustration for the condition $N^{2\delta}R^2 \ll M$ of \cref{lem:counting}. The angle between patch center and patch boundary is $\theta_1 \sim 1/\sqrt{M}$. The angle between the tangent at the center and at the boundary is $\theta_2 = \theta_1$ by elementary geometry. We know $k\cdot\hat{\omega}_\alpha \geq N^{-\delta}$ by definition of $\Ical_k$. This means that the angle between $k$ and the tangent at the center (being perpendicular to $\omega_\alpha$) is at least of order $\sim N^{-\delta}/R$. To have $k$ pointing from the inside to the outside of the Fermi ball even at the boundary we need $N^{-\delta}/R \gg 1/\sqrt{M}$.
   }\label{fig:smallangle}
\end{figure}
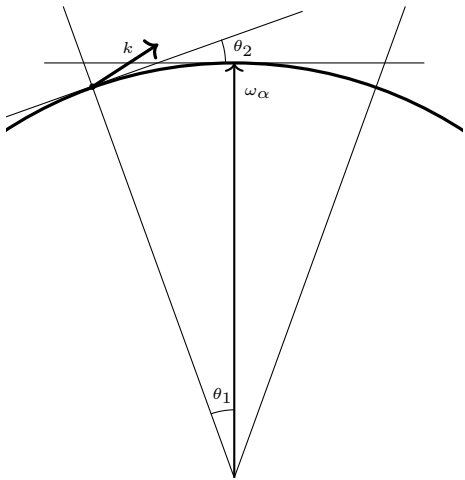
\end{proof}
It will be convenient to combine modes associated with $k$ and $-k$. To this end, we set 
\begin{equation} \label{eq:calpha}
c_\alpha^* (k) := \left\{ \begin{array}{ll} b^*_\alpha (k) &\quad \text{for $\alpha \in \cI_k^+$} \\
b_\alpha^* (-k) &\quad \text{for $\alpha \in \cI_k^-$} \end{array} \right. \end{equation} 
for every $k \in \north$. Here, we introduce the notation 
\begin{equation}\label{eq:north}
\begin{split}
\north := &\Big\{ k = (k_1, k_2, k_3) \in \bZ^3  \text{ with } |k| < R: \; k_3 > 0 \text{ or } (k_3 = 0  \text{ and } k_2 > 0) \\ &\hspace{7cm} \text{ or } (k_3 = k_2= 0 \text{ and } k_1 > 0) \Big\}
\end{split}
\end{equation} 
so that $\north \cap (-\north) = \emptyset$ and $\north \cup (-\north) = B_R (0) \backslash \{ 0 \}$. Note that compared to \cite{BNPSS}, in the definition of $\north$ we replaced the restriction $k \in \supp \hat{V}$ by $\lvert k\rvert < R$, with the parameter $R$ to be optimized at the end.  
    
Our analysis is based on the observation that the pair operators $c_\alpha^* (k)$ and $c_\alpha (k)$ behave approximately as bosonic creation and annihilation operators, on states with few excitations. This is established by the following lemma, taken from \cite[Lemma 4.1]{BNPSS0} and \cite[Lemma 5.2]{BNPSS}.
\begin{lemma}[Approximate bosonic CCR]
\label{lem:ccr}
Let $k, \ell \in \north$. Let $\alpha \in \cI_k$ and $\beta \in \cI_\ell$. Then
\begin{equation}
[c_\alpha(k),c_\beta( \ell )] = 0 = [c^*_\alpha(k),c^*_\beta (\ell)]\;, \quad [c_\alpha(k),c^*_\beta (\ell)]  = \delta_{\alpha,\beta}\big( \delta_{k,\ell} + \cE_\alpha (k,\ell) \big)\;, 
\label{eq:approximateCCR}
\end{equation}
where the error operator $\cE_\alpha(k,\ell)$ is controlled by the bounds
\begin{align} \label{eq:ccr-0}
\sum_{\alpha\in \cI_k \cap \cI_\ell} \lvert \cE_\alpha(k,\ell) \rvert^2  \leq C (M N^{-\frac{2}{3}+\delta} \cN)^2
\end{align}
and
\begin{align} \label{eq:ccr-1}
\sum_{\alpha\in \cI_k \cap \cI_\ell} \norm{ \cE_\alpha (k,\ell) \psi }  \leq C M^{\frac{3}{2}}N^{-\frac{2}{3}+\delta} \norm{\cN \psi} \qquad \text{for all $\psi \in \cF$.}
\end{align}
\end{lemma}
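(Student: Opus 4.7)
My plan is to compute the commutators directly from \cref{eq:CAR} and then to estimate the resulting error operators using the pair count in \cref{lem:counting}. The vanishing commutators $[c_\alpha(k), c_\beta(\ell)] = 0 = [c^*_\alpha(k), c^*_\beta(\ell)]$ are immediate: by \cref{eq:normali}--\cref{eq:calpha}, each $c_\alpha(k)$ is a linear combination of products $a_{p-k} a_p$ of two fermion annihilation operators (with $k$ replaced by $-k$ when $\alpha \in \cI_k^-$), so of even degree; any two even monomials in the $a$'s and $a^*$'s commute as a direct consequence of $\{a_p, a_q\} = 0$.

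For the mixed commutator I would, for definiteness, treat $\alpha \in \cI_k^+$ and $\beta \in \cI_\ell^+$; the remaining three sign cases reduce to this one after replacing $k$ or $\ell$ by its negative. A direct CAR computation gives
\[
[a_{p-k} a_p,\, a^*_{p'} a^*_{p'-\ell}] = \delta_{p,p'}\delta_{p-k,p'-\ell} - \delta_{p-k,p'}\delta_{p,p'-\ell} + \sum_{j=1}^{4} (\pm\delta_{\cdot,\cdot})\, a^*_{x_j} a_{y_j}\;.
\]
Summing over $p \in B_F^c \cap B_\alpha$ with $p-k \in B_F \cap B_\alpha$ and over $p'$ analogously in $B_\beta$, one observes: (a) the second numerical term and two of the four $a^*a$-terms vanish because the corresponding $\delta$ forces a single momentum to lie simultaneously in $B_F$ and in $B_F^c$; (b) disjointness of the patches forces $\alpha = \beta$ in all surviving contributions. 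The remaining numerical term equals $n_\alpha(k)^2\,\delta_{\alpha,\beta}\delta_{k,\ell}$ and, after dividing by $n_\alpha(k)n_\alpha(\ell)$, reproduces the identity $\delta_{\alpha,\beta}\delta_{k,\ell}$. What remains defines $\cE_\alpha(k,\ell)$ as a sum of two one-body operators on the patch, each of the schematic form $\frac{1}{n_\alpha(k) n_\alpha(\ell)}\, d\Gamma(K_\alpha)$ with $K_\alpha$ a partial isometry supported in $B_\alpha$, so $\|K_\alpha\|_\op \leq 1$.

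For the quantitative bounds, $\alpha \in \cI_k \cap \cI_\ell$ ensures $|k\cdot\hat\omega_\alpha|,\, |\ell\cdot\hat\omega_\alpha| \geq N^{-\delta}$, hence by \cref{lem:counting} $n_\alpha(k) n_\alpha(\ell) \geq c M^{-1} N^{2/3-\delta}$. Decomposing $\psi$ orthogonally by the eigenvalues of $\cN_\alpha := \sum_{p \in B_\alpha} a^*_p a_p$ and using that on the sector with $n$ particles in $B_\alpha$ the operator $d\Gamma(K_\alpha)$ acts with norm at most $n \|K_\alpha\|_\op$, one obtains the sharp bound $\|d\Gamma(K_\alpha)\psi\| \leq \|K_\alpha\|_\op\,\|\cN_\alpha \psi\|$, whence
\[
\|\cE_\alpha(k,\ell)\psi\| \leq C M N^{-2/3+\delta}\,\|\cN_\alpha \psi\|\;.
\]
Disjointness of the patches gives $\sum_\alpha \cN_\alpha^2 \leq (\sum_\alpha \cN_\alpha)^2 \leq \cN^2$ (the $\cN_\alpha$ commute and are positive), so summing yields $\sum_\alpha \|\cE_\alpha(k,\ell)\psi\|^2 \leq C^2 M^2 N^{-4/3+2\delta}\|\cN\psi\|^2$, which is \cref{eq:ccr-0}; and Cauchy--Schwarz in the $\alpha$-sum (whose cardinality is at most $M$) gives $\sum_\alpha \|\cE_\alpha\psi\| \leq C M^{3/2} N^{-2/3+\delta}\|\cN\psi\|$, which is \cref{eq:ccr-1}.

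The main obstacle I expect is the localization claim $\|d\Gamma(K_\alpha)\psi\| \leq \|\cN_\alpha \psi\|$: bounding the errors by the \emph{patch} number operator rather than the full $\cN$ is what turns $M^2$ into $M^{3/2}$ in \cref{eq:ccr-1}, and hence what makes the bosonization usable as $M \to \infty$. This in turn relies crucially on the observation that in each surviving $a^*a$-contribution both momenta are forced into $B_\alpha$ by the patch conditions already present in the definition \cref{eq:normali} of $b^*_\alpha(k)$.
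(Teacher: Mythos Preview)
Your argument is correct and is essentially the standard proof from the references the paper cites (\cite[Lemma~4.1]{BNPSS0} and \cite[Lemma~5.2]{BNPSS}); the present paper does not give its own proof of this lemma but simply quotes those results. Your identification of the two surviving $a^* a$ contributions, the patch--localization $\|d\Gamma(K_\alpha)\psi\| \leq \|\cN_\alpha \psi\|$ via the eigenspace decomposition of $\cN_\alpha$, and the passage from $\sum_\alpha \cN_\alpha^2 \leq \cN^2$ to \cref{eq:ccr-0} and then via Cauchy--Schwarz to \cref{eq:ccr-1}, all match the argument in those references.
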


Another important property of the operators $c_\alpha^* (k)$ and $c_\alpha (k)$ is that they can be controlled in terms of the gapped number of particles operator $\cN_\delta$ introduced in \cref{eq:gappedN}, with $\delta > 0$ the parameter introduced in \cref{eq:Ik-def} to exclude a strip around the equator of the Fermi sphere in the definition of the sets $\cI_k$. The point is that, since we are away from the equator, $k$ has a component orthogonal to the Fermi sphere, which makes sure that the momentum of either the particle or of the hole annihilated by $c_\alpha (k)$ is at least at distance $N^{-\delta}$ from the Fermi sphere. More precisely, we have the following lemma, whose proof can be found in \cite[Lemmas 5.3 and 5.4]{BNPSS} (the first estimate in \cref{eq:c-N} and in \cref{eq:cg-N} are not stated explicitly in \cite[Lemmas 5.3 and 5.4]{BNPSS} but can be proven like the second bounds).
\begin{lemma}[Bounds on pair operators]\label{lm:c-N} 
Assume $M \gg R^2 N^{2\delta}$ and $R \ll N^{1/6-\delta/2}$. For all $k \in \north$ we have 
\begin{equation}\label{eq:c-N0}
\sum_{\alpha \in \cI_k} c_\alpha^* (k) c_\alpha (k) \leq \cN_\delta \;.
\end{equation} 
Moreover, for any $f \in \ell^2 (\cI_k)$, 
\begin{equation}\label{eq:c-N}
\Big\| \sum_{\alpha \in \cI_k} f_\alpha c_\alpha (k) \psi \Big\| \leq \| f \|_2 \| \cN_\delta^{1/2} \psi \|\; , \qquad  \Big\| \sum_{\alpha \in \cI_k} f_\alpha c^*_\alpha (k) \psi \Big\| \leq \| f \|_2 \| (\cN_\delta+1)^{1/2} \psi \| \;.
\end{equation} 
For $k \in \north$, $\alpha \in \cI_k$ and $g : \bZ^3 \times \bZ^3 \to \bR$, we define the weighted pair operator 
\[ c_\alpha^g (k) := \frac{1}{n_\alpha (k)} \sum_{\substack{p : p \in B_\F^c \cap B_\alpha \\ p- \sigma_\alpha k \in B_\F \cap B_\alpha}}  g(p,k) a_{p- \sigma_\alpha k} a_p \]  
with $\sigma_\alpha = 1$ if $\alpha \in \cI_k^+$, and $\sigma_\alpha = -1$ if $\alpha \in \cI_k^-$. 
Similarly to \cref{eq:c-N0} and \cref{eq:c-N}, we find
\[ \sum_{\alpha \in \cI_k} c_\alpha^{g*} (k) c^g_\alpha (k) \leq \| g \|_\infty^2 \cN_\delta \;.\]
Furthermore
\begin{equation}\label{eq:cg-Nsum}
\begin{split} 
\sum_{\alpha \in \cI_k} \Big\| c^g_\alpha (k) \psi \Big\| &\leq CM^{1/2} \| g \|_\infty \| \cN_\delta^{1/2} \psi \|^2 \;, \\ 
 \sum_{\alpha \in \cI_k} \Big\|  c^{g*}_\alpha (k) \psi \Big\| &\leq C M^{1/2} \| g \|_\infty \| (\cN_\delta+M)^{1/2} \psi \|^2
\end{split}
\end{equation} 
and, for $f \in \ell^2(\Ik)$,
\begin{equation}\label{eq:cg-N}
\begin{split} 
\Big\| \sum_{\alpha \in \cI_k} f_\alpha c^g_\alpha (k) \psi \Big\| &\leq \| f \|_2 \| g \|_\infty \| \cN_\delta^{1/2} \psi \|^2 \;, \\ 
 \Big\| \sum_{\alpha \in \cI_k} f_\alpha c^{g*}_\alpha (k) \psi \Big\| &\leq \| f \|_2 \| g \|_\infty \| (\cN_\delta+1)^{1/2} \psi \|^2 \;.
\end{split}
\end{equation} 
\end{lemma}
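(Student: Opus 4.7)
\emph{Proof proposal.} The master estimate is the operator inequality \cref{eq:c-N0}; once it is in hand, all the weighted bounds follow by routine Cauchy--Schwarz in the patch index together with the approximate bosonic CCR from \cref{lem:ccr}. I would proceed in three steps, closely following the strategy of \cite[Lemmas~5.3 and 5.4]{BNPSS} and supplying the first bound in \cref{eq:c-N} and in \cref{eq:cg-N} (the genuinely new claims) by the same scheme.

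\emph{Step 1: the operator inequality \cref{eq:c-N0}.} Expanding the definition \cref{eq:normali} and \cref{eq:calpha},
\[
 \sum_{\alpha \in \cI_k} c^*_\alpha(k) c_\alpha(k) = \sum_\alpha \frac{1}{n_\alpha(k)^2} \sum_{p, p' \in S_\alpha(k)} a^*_p a^*_{p-\sigma_\alpha k} a_{p'-\sigma_\alpha k} a_{p'},
\]
where $S_\alpha(k)$ denotes the particle--hole index set appearing in \cref{eq:normali} and $\sigma_\alpha = \pm 1$ according as $\alpha \in \cI_k^\pm$. Normal-ordering the 4-operator product yields a dominant 2-operator contribution proportional to $\sum_{p \in S_\alpha(k)} (a^*_p a_p + a^*_{p-\sigma_\alpha k} a_{p-\sigma_\alpha k})$ plus positive off-diagonal terms. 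The crucial geometric input is that for $\alpha \in \cI_k$ the restriction $|k\cdot\hat\omega_\alpha| \geq N^{-\delta}$, combined with $|k| < R \ll N^{1/6-\delta/2}$ and $p \in B_\alpha$, forces
\[ \bigl| |p|^2 - (p-\sigma_\alpha k)^2 \bigr| \;=\; |2p \cdot k - |k|^2| \;\geq\; c\, k_\F N^{-\delta}, \]
so that at least one of $p$, $p-\sigma_\alpha k$ sits outside the strip $\{q: ||q|-k_\F| < N^{-\delta}\}$ and is therefore counted by $\cN_\delta$. Since the patches $B_\alpha$ are disjoint, each such mode is charged at most once by the sum over $\alpha$, giving \cref{eq:c-N0}.

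\emph{Step 2: weighted bounds \cref{eq:c-N}.} The first estimate (the new one) follows directly from \cref{eq:c-N0} by Cauchy--Schwarz in $\alpha$:
\[
\Big\| \sum_{\alpha \in \cI_k} f_\alpha c_\alpha(k) \psi \Big\|^2 \leq \|f\|_2^2 \sum_\alpha \|c_\alpha(k) \psi\|^2 = \|f\|_2^2 \Big\langle \psi, \sum_\alpha c^*_\alpha(k) c_\alpha(k) \psi \Big\rangle \leq \|f\|_2^2 \|\cN_\delta^{1/2} \psi\|^2.
\]
For the $c^*$ variant I would invoke \cref{lem:ccr} to commute $c_\alpha(k) c^*_\beta(k) = c^*_\beta(k) c_\alpha(k) + \delta_{\alpha\beta}(1 + \cE_\alpha(k,k))$; the diagonal $+1$ contributes the additional $\|f\|_2^2 \|\psi\|^2$ term responsible for the $\cN_\delta + 1$ on the right-hand side, while the sum of the CCR-errors $\cE_\alpha(k,k)$ is controlled by \cref{eq:ccr-1} and is subleading under the standing conditions $M \gg R^2 N^{2\delta}$.

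\emph{Step 3: weighted operators $c^g_\alpha(k)$.} The identical strategy applies, with $|g(p,k)| \leq \|g\|_\infty$ factoring globally through the estimate of Step~1. Thus $\sum_\alpha c^{g*}_\alpha(k) c^g_\alpha(k) \leq \|g\|_\infty^2 \cN_\delta$, from which \cref{eq:cg-N} follows as in Step~2. The sum-of-norms bounds \cref{eq:cg-Nsum} require one more Cauchy--Schwarz in $\alpha$, producing the factor $M^{1/2}$ from $\sum_\alpha \|c^g_\alpha(k)\psi\| \leq M^{1/2} (\sum_\alpha \|c^g_\alpha(k)\psi\|^2)^{1/2}$; for the $c^{g*}_\alpha$ version the additional $+M$ inside the square root appears from summing the CCR diagonal corrections over the $M$ patches.

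\emph{Main obstacle.} The genuinely delicate step is Step~1, where one must establish the operator inequality \cref{eq:c-N0} with the correct constant. The challenge is to handle the 4-operator off-diagonal ($p \neq p'$) remainder without losing factors of $M$ or $N^\delta$; this is achieved by noting that these terms are themselves quadratic in pair operators and can be absorbed using positivity together with the disjointness of the $B_\alpha$ and the geometric gap $|k \cdot \hat\omega_\alpha| \geq N^{-\delta}$ that couples every retained patch to an excitation counted by $\cN_\delta$.
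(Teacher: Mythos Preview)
Your overall structure is right, and the geometric observation in Step~1 --- that for $\alpha \in \cI_k$ at least one of $p$, $p-\sigma_\alpha k$ lies outside the strip $\{||q|-k_\F| \leq N^{-\delta}\}$ --- is indeed the key input. But the mechanism you describe for Step~1 is not how the inequality \cref{eq:c-N0} is actually proved, and would not work as stated. There is no ``normal-ordering'' step and no separate handling of off-diagonal $p\neq p'$ terms. Instead one applies the operator Cauchy--Schwarz inequality
\[
\Big(\sum_{p\in S_\alpha(k)} X_p\Big)^*\Big(\sum_{p\in S_\alpha(k)} X_p\Big) \;\leq\; n_\alpha(k)^2 \sum_{p\in S_\alpha(k)} X_p^* X_p \, , \qquad X_p := a_{p-\sigma_\alpha k} a_p\,,
\]
which immediately gives $c_\alpha^*(k) c_\alpha(k) \leq \sum_{p\in S_\alpha(k)} a_p^* a_{p-\sigma_\alpha k}^* a_{p-\sigma_\alpha k} a_p \leq \sum_{p\in S_\alpha(k)} a_{q(p)}^* a_{q(p)}$, with $q(p)\in\{p,p-\sigma_\alpha k\}$ the gapped momentum. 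Summing over $\alpha$ and using disjointness of the patches yields \cref{eq:c-N0} directly; the cross terms never need to be ``absorbed''. Your ``Main obstacle'' paragraph is therefore aimed at a difficulty that does not arise.

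In Step~2, your derivation of the annihilation bound is correct. For the $c^*$ bound, however, you should not invoke the error estimates \cref{eq:ccr-1}: those involve $\cN$ rather than $\cN_\delta$ and would not yield the clean $(\cN_\delta+1)$ on the right. The point is that the error operator $\cE_\alpha(k,k)$ appearing in \cref{eq:approximateCCR} is \emph{exactly} nonpositive (it equals $-n_\alpha(k)^{-2}\sum_p(a_p^*a_p + a_{p-\sigma_\alpha k}^* a_{p-\sigma_\alpha k})$), so $[c_\alpha(k),c_\alpha^*(k)] \leq 1$ as an operator inequality with no remainder. Commuting through and then applying the annihilation bound gives $\|\sum_\alpha f_\alpha c_\alpha^*\psi\|^2 \leq \|\sum_\alpha \bar f_\alpha c_\alpha\psi\|^2 + \|f\|_2^2\|\psi\|^2$, which is exactly \cref{eq:c-N}. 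The same sign observation, not smallness of CCR errors, is what makes Step~3 work with the stated constants.
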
 

\section{Reduction to an Almost Bosonic Quadratic Hamiltonian} 
Comparing \cref{eq:bb} with \cref{eq:calpha}, we find 
\[
b^* (k)  \simeq \sum_{\alpha \in \cI_k^+} n_\alpha (k) c^*_\alpha (k)\; , \qquad b^*(-k) \simeq  \sum_{\alpha \in \cI_k^-} n_\alpha (k) c^*_\alpha (k)
\]
for all $k \in \north$ (these are only approximate decompositions since, on the r.\,h.\,s., pairs in corridors and close to the equator are missing). Inserting this decomposition in \cref{eq:RHR-main} we find the following approximation for $Q_\textnormal{B}$, quadratic in $c$-- and $c^*$--operators:
\begin{equation}\label{eq:QBR}
\begin{split} 
Q_\textnormal{B}^R = \; & \frac{1}{N} \sum_{k \in \north} \hat{V} (k) \Bigg( \sum_{\alpha, \beta \in \cI_k^+} n_\alpha (k) n_\beta (k) c_\alpha^* (k) c_\beta (k) + \sum_{\alpha, \beta \in \cI_k^-} n_\alpha (k) n_\beta (k) c_\alpha^* (k) c_\beta (k) \\ &\hspace{2.5em} + \sum_{\alpha \in \cI_k^+ , \beta \in \cI_k^-} n_\alpha (k) n_\beta (k) c^*_\alpha (k) c^*_\beta (k) +  \sum_{\alpha \in \cI_k^-, \beta \in \cI_k^+} n_\alpha (k) n_\beta (k) c_\alpha (k) c_\beta (k) \Bigg)\,.
\end{split}
\end{equation}  
The difference between $Q_\textnormal{B}$ and $Q_\textnormal{B}^R$ is estimated in the following lemma, which we take from \cite[Lemma 4.1]{BNPSS}. Compared to \cite{BNPSS}, here we only need to compare $Q_\textnormal{B}$ with $Q_\textnormal{B}^R$ since we already controlled $\cE_2$ in \cref{cor:cE}; therefore the bound also does not use $\Ecal_1$.
\begin{lemma}[Removing corridors and removing patches near the equator]\label{lm:QB}
Assume that $\sum_{k \in \bZ^3} \lvert \hat{V} (k)\rvert |k| < \infty$. Then there exists $C > 0$ such that for all $\psi \in \fock$ we have
\[ \begin{split} 
\lvert \langle \psi, \left(Q_\textnormal{B} - Q_\textnormal{B}^R \right) \psi\rangle \rvert \leq C (N^{-\delta/2} + R^{1/2} M^{1/4} N^{-1/6+ \delta /2} + R^{-1 /2}) \langle \psi,(\bH_0 + \hbar) \psi\rangle \;. \end{split} \]
\end{lemma}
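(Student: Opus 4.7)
The plan is to decompose $Q_\textnormal{B} - Q_\textnormal{B}^R$ into three pieces, one for each approximation made in passing from $Q_\textnormal{B}$ to $Q_\textnormal{B}^R$: (i) the high--momentum tail $|k|\geq R$, entirely absent from $Q_\textnormal{B}^R$; (ii) the ``equatorial'' contribution from patches $\alpha\notin\cI_k$ excluded because $|k\cdot\hat\omega_\alpha|<N^{-\delta}$; and (iii) the ``corridor/split'' contribution from pairs $(p,p-k)$ with $p$ or $p-k$ in a corridor or with $p,p-k$ in distinct patches. For each, the goal is an upper bound by a multiple of $\langle\psi,(\bH_0+\hbar)\psi\rangle$.

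For (i), I use \cref{lm:bb} and Cauchy--Schwarz exactly as in the proof of \cref{cor:QB} to bound each $k$-contribution by $C\hat V(k)\langle\psi,(\bH_0+\hbar)\psi\rangle$. Then, since $1\leq |k|^{1/2}/R^{1/2}$ for $|k|\geq R$, I obtain $\sum_{|k|\geq R}\hat V(k)\leq R^{-1/2}\sum_k \hat V(k)|k|^{1/2}\leq CR^{-1/2}$, where finiteness of $\sum_k\hat V(k)|k|^{1/2}$ follows from the AM--GM bound $\hat V(k)|k|^{1/2}\leq\tfrac12\hat V(k)(1+|k|)$ combined with the hypotheses $\hat V\geq 0$ and $|\cdot|\hat V\in\ell^1$ (which also implies $\hat V\in\ell^1$). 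This yields the $R^{-1/2}$ term.

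For (ii) and (iii), I write $b^*(k)=\tilde b^*(k)+e^*(k)$ with $\tilde b^*(k):=\sum_{\alpha\in\cI_k^+}n_\alpha(k) c_\alpha^*(k)$ (and analogously for $b^*(-k)$), so that $b^*(k)b(k)-\tilde b^*(k)\tilde b(k)=e^*(k)b(k)+\tilde b^*(k)e(k)$ and similarly for the off--diagonal terms. By Cauchy--Schwarz it suffices to bound $\|e(k)\psi\|$, which I split as $e=e^{\textnormal{eq}}+e^{\textnormal{corr}}$ according to whether the missing pair sits in an equatorial patch or in the corridor/cross-patch region. For the equator piece, the affected momenta $p$ satisfy $|k\cdot\hat\omega_\alpha|<N^{-\delta}$ and hence $e(p)+e(p-k)=\hbar^2|2p\cdot k-k^2|\lesssim \hbar N^{-\delta}$ (using $|k|<R\ll N^{1/6-\delta/2}$). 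Applying \cref{eq:num2} of \cref{lm:kinetic} gives $\|e^{\textnormal{eq}}(k)\psi\|\leq CN^{1/2-\delta/2}\|\bH_0^{1/2}\psi\|$; combining with $\|b(k)\psi\|\leq CN^{1/2}\|\bH_0^{1/2}\psi\|$ (from \cref{eq:num1}) and summing against $\hat V$ yields the $N^{-\delta/2}$ term.

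For the corridor/split piece, the missing momenta form a set of cardinality $\sim |k|RN^{1/3}\sqrt M$ (each patch contributes tangential boundary area $\sim RN^{1/3}/\sqrt M$, and the ``boundary layer'' a pair of relative momentum $k$ can cross has radial thickness $|k|$); since these momenta still belong to patches in $\cI_k$, we have $e(p)+e(p-k)\gtrsim \hbar N^{-\delta}$, and Cauchy--Schwarz as in the proof of \cref{lm:kinetic} (splitting the norm as $(\sum 1/(e+e))^{1/2}\,(\sum(e+e)\|a_p a_{p-k}\psi\|^2)^{1/2}$) yields $\|e^{\textnormal{corr}}(k)\psi\|\leq C|k|^{1/2}R^{1/2}N^{1/3+\delta/2}M^{1/4}\|\bH_0^{1/2}\psi\|$. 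Multiplying by $\|b(k)\psi\|$, weighting by $\hat V(k)/N$, and summing against $\sum_k\hat V(k)|k|^{1/2}<\infty$ produces the $R^{1/2}M^{1/4}N^{-1/6+\delta/2}$ term. The main obstacle will be the geometric accounting just sketched: extracting the anisotropic count $|k|RN^{1/3}\sqrt M$ (rather than the crude $R^2 N^{1/3}\sqrt M$) and the accompanying $|k|^{1/2}$ factor in $\|e^{\textnormal{corr}}(k)\psi\|$, which are essential to reach $R^{1/2}$ (as opposed to $R$) after summing against $\hat V$, and to verify that once the equator has been peeled off, the lower bound $e(p)+e(p-k)\gtrsim \hbar N^{-\delta}$ indeed applies uniformly on the remaining missing pairs.
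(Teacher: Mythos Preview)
Your proposal is correct and follows essentially the same route as the paper. The only notable difference is in how you organise the split of the error $e(k)$: you propose a geometric split $e=e^{\textnormal{eq}}+e^{\textnormal{corr}}$ (equatorial patches versus corridors), whereas the paper splits the missing--pair index set $U_k$ by the energy threshold itself, setting $Y_k:=\{p\in U_k: e(p)+e(p-k)\leq 4N^{-1/3-\delta}\}$ and treating $Y_k$ via \cref{eq:num2} and $U_k\setminus Y_k$ via the counting $|U_k\setminus Y_k|\leq CR|k|N^{1/3}M^{1/2}$. This energy--based split is precisely the clean resolution of the obstacle you flag at the end: it guarantees tautologically that $e(p)+e(p-k)>4N^{-1/3-\delta}$ on the ``corridor'' piece, so no separate verification is needed that corridors near the equator do not spoil the lower bound. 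Your geometric split works too, but you would have to reabsorb low--energy corridor pairs into $e^{\textnormal{eq}}$; the paper's definition of $Y_k$ does this automatically.
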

\begin{proof}
We consider the difference 
\[ b(k) - \sum_{\alpha \in \cI_k^+} n_\alpha (k) c_\alpha (k) = \sum_{p \in U_k} a_{p-k} a_p \]
where $U_k$ consists of all momenta $p \in B_\F^c$ with $p-k \in B_\F$ that do not belong to any patch. 
For $|k| < R$, we bound 
\[ \Big\| \Big( b(k) - \sum_{\alpha \in \cI_k^+} n_\alpha (k) c_\alpha (k) \Big) \psi  \Big\| \leq
 \sum_{p \in Y_k} \| a_{p-k} a_p \psi \| +  \sum_{p \in U_k \backslash Y_k}  \| a_{p-k} a_p \psi \| \]
 with 
\[ Y_k := \{ p \in U_k  :  e(p ) + e(p-k) \leq 4 N^{-1/3-\delta} \} \]
containing pairs close to the equator. Proceeding as in the proof of \cite[Lemma 4.1]{BNPSS} and using \cref{eq:num2}, we obtain 
\[ \sum_{p \in Y_k} \| a_{p-k} a_p \psi \| \leq C N^{1/2-\delta/2} \| \bH_0^{1/2} \psi \| \]
and (again under the assumption that $|k| < R$) 
\[ \sum_{p \in U_k \backslash Y_k}  \| a_{p-k} a_p \psi \| \leq C |k|^{1/2}  R^{1/2} M^{1/4} N^{1/3+\delta/2}  \| \bH_0^{1/2} \psi \|\;. \]
Here we estimated $|U_k \backslash Y_k| \leq C R |k| N^{1/3} M^{1/2}$ (for $|k| < R$, the set $U_k \backslash Y_k$ contains momenta $p \in \bZ^3$ localized in a shell of thickness $|k|$ around the Fermi sphere, so that either the projection of $p$ or the projection of $p-k$ onto the Fermi sphere falls in  corridors of size $R$ between patches). For $|k| > R$, on the other hand, we use \cref{lm:bb}. We conclude that  
\[ \begin{split} \Big\| \Big( b(k) - &\sum_{\alpha \in \cI_k^+} n_\alpha (k) c_\alpha (k) \Big) \psi  \Big\| \\ &\leq C \left( N^{1/2-\delta/2} + |k|^{1/2} R^{1/2} M^{1/4}  N^{1/3+\delta/2} + \chi (|k| > R) N^{1/2} \right) \| \bH_0^{1/2} \psi \|\;. \end{split} \]
Proceeding as in the last part of the proof of \cite[Lemma 4.1]{BNPSS}, using \cref{lm:bb} and  the assumption $\sum_{k \in \bZ^3} \hat{V} (k) |k| < \infty$, we arrive at the intended bound.
\end{proof}
 
To understand how the kinetic energy $\bH_0$, defined in \cref{eq:RHR-main}, can be expressed through the patch--wise particle--hole creation and annihilation operators, we compute the commutator 
\begin{equation*}  \begin{split} [ \bH_0 , c_\alpha^* (k)] &= \left[ \sum_{q \in \bZ^3} e(q) a_q^* a_q , \frac{1}{n_\alpha (k)} \sum_{p \in B_\F^c \cap (B_\F+k) \cap B_\alpha} a_p^* a_{p-k}^* \right] \\ &= \frac{1}{n_\alpha (k)} \sum_{ p \in B_\F^c \cap (B_\F+k) \cap B_\alpha} (e(p) + e(p-k)) a_p^* a_{p-k}^* \;. \end{split} \end{equation*} 
With $e(p) + e(p-k) = \hbar^2 p^2 - \hbar^2 (p-k)^2 \simeq 2 \hbar \kappa |k \cdot \hat{\omega}_\alpha|$ (with $\hat{\omega}_\alpha = \omega_\alpha / |\omega_\alpha|$ the normalized vector pointing to the center of the $\alpha$-th patch), we obtain
\begin{equation} \label{eq:H0-comm1}  [ \bH_0 , c_\alpha^* (k)] \simeq 2\hbar \kappa \, |k \cdot \hat{\omega}_\alpha| c_\alpha^* (k) \end{equation} 
which suggests that, in a sense to be made precise, 
\begin{equation}\label{eq:H0-comm2}  \bH_0 \simeq  2 \kappa \hbar \sum_{k \in \north} \sum_{\alpha =1}^M |k \cdot \hat{\omega}_\alpha| \, c_\alpha^* (k) c_\alpha (k) = : \bD_\textnormal{B} \;.\end{equation} 
Based on this heuristic observation, we expect that the correlation Hamiltonian \cref{eq:corr} 
can be approximated by
\begin{equation}\label{eq:quad-app} \cH_\textnormal{corr} \simeq \bD_\textnormal{B} + Q^R_B = \sum_{k \in \north} 2\hbar \kappa |k| h_\text{eff} (k) \end{equation} 
with the quadratic (in $c$-- and $c^*$--operators) expression 
\begin{equation}\label{eq:heff-def} h_\text{eff} (k)  = \sum_{\alpha,\beta \in \cI_k} \left( (D(k) + W(k))_{\alpha,\beta} c_\alpha^* (k) c_\beta (k) + \frac{1}{2} \wt{W} (k)_{\alpha,\beta} \big( c_\alpha^* (k) c_\beta^* (k) + c_\beta (k) c_\alpha (k) \big) \right) \end{equation}
where $D(k)$, $W(k)$, and $\wt{W} (k)$ are $|\cI_k| \times |\cI_k|$ real symmetric matrices with entries 
\begin{equation}\label{eq:DWwtW} \begin{split} 
D(k)_{\alpha,\beta} &= \delta_{\alpha,\beta} |\hat{k} \cdot \hat{\omega}_\alpha | \;, \quad \text{for all } \alpha, \beta \in \cI_k \\
W(k)_{\alpha, \beta} &= \frac{\hat{V} (k)}{2\hbar \kappa N |k|} \times \left\{ \begin{array}{ll} n_\alpha (k) n_\beta (k) \quad &\text{ if $\alpha, \beta \in \cI_k^+$ or $\alpha,\beta \in \cI_k^-$} \\
0 \quad &\text{ otherwise}\,, \end{array} \right. \\
\wt{W} (k)_{\alpha,\beta} &= \frac{\hat{V} (k)}{2\hbar \kappa N |k|} \times \left\{ \begin{array}{ll}  0 \quad &\text{ if $\alpha, \beta \in \cI_k^+$ or $\alpha,\beta \in \cI_k^-$} \\
n_\alpha (k) n_\beta (k) \quad &\text{ otherwise}\,. \end{array} \right. \end{split} \end{equation}

\section{Approximate Bogoliubov Transformations} 
\label{sec:BT} 
If the $c$-- and $c^*$--operators were exactly bosonic, we could write 
\[ h_\text{eff} (k) = \bH - \frac{1}{2} \tr \, (D (k) +W (k)) \]
with the quadratic Hamiltonian (in the following discussion we omit the fixed argument $k$)
\begin{equation}\label{eq:H-quad}
\bH := \frac{1}{2} ((c^*)^T , c^T)  \left( \begin{array}{ll} D+W & \wt{W} \\ \wt{W} & D+W \end{array} \right) \left( \begin{array}{c} c \\ c^* \end{array} \right) \;.
\end{equation} 
Introducing the $|\cI_k | \times |\cI_k |$ matrix 
\[ E := \left[ (D + W - \wt{W})^{1/2} (D + W + \wt{W}) (D + W -\wt{W})^{1/2} \right]^{1/2} \,  \]
and setting $S_1 := (D+W-\wt{W})^{1/2} E^{-1/2}$, $S_2 := (D+W-\wt{W})^{-1/2} E^{1/2}$ (so that $S_1 S_2^T = S_2 S_1^T = 1$) and 
\begin{equation}\label{eq:Sdef} S := \left( \begin{array}{ll} S_1 & 0 \\ 0 & S_2 \end{array} \right) \end{equation} 
we can decompose 
\begin{equation} \label{eq:dec-quad} \left( \begin{array}{ll} D+W & \wt{W} \\ \wt{W}  & D+W \end{array} \right) = \left( \begin{array}{ll} \frac{S_1 + S_2}{2} & \frac{S_1 - S_2}{2} \\ \frac{S_ 1 - S_2}{2} & \frac{S_1 + S_2}{2} \end{array} \right)^T  \left( \begin{array}{ll} E & 0 \\ 0 & E \end{array} \right) \left( \begin{array}{ll} \frac{S_1 + S_2}{2} & \frac{S_1 - S_2}{2} \\ \frac{S_ 1 - S_2}{2} & \frac{S_1 + S_2}{2} \end{array} \right) \;.\end{equation} 

Using the polar decomposition $S_1 = O |S_1|$ with an orthogonal matrix $O$ and the positive matrix $|S_1| = (S_1^T S_1)^{1/2}$ we obtain $S_2 = O |S_1|^{-1}$ from $S_2 S_1^T = 1$. Moreover, $|S_1^T| = O |S_1| O^T$ and thus $S_1 = |S_1^T| O$, $S_2 = |S_1^T| O$ and, from \cref{eq:dec-quad}, 
\[ \begin{split}  \left( \begin{array}{ll} D+W & \wt{W} \\ \wt{W}  & D+W \end{array} \right) &= \left( \begin{array}{ll} \frac{|S_1^T| + |S_1^T|^{-1}}{2} & \frac{|S^T_1| - |S_1^T|^{-1}}{2} \\ \frac{|S^T_ 1| - |S^T_1|^{-1}}{2} & \frac{|S_1^T| + |S_1^T|^{-1}}{2} \end{array} \right)  \left( \begin{array}{ll} O & 0 \\ 0 & O \end{array} \right)  \left( \begin{array}{ll} E & 0 \\ 0 & E \end{array} \right) \\ &\hspace{3cm} \times  \left( \begin{array}{ll} O & 0 \\ 0 & O \end{array} \right)^T \left( \begin{array}{ll} \frac{|S_1^T| + |S_1^T|^{-1}}{2} & \frac{|S^T_1| - |S_1^T|^{-1}}{2} \\ \frac{|S^T_ 1| - |S^T_1|^{-1}}{2} & \frac{|S_1^T| + |S_1^T|^{-1}}{2} \end{array} \right)\;. \end{split} \]
Defining
\[K := \log |S_1^T|\]
we obtain 
\begin{equation}\label{eq:deco-quad2} \begin{split} \left( \begin{array}{ll} D+W & \wt{W} \\ \wt{W} & D+W \end{array} \right) &= 
 \left( \begin{array}{ll} \cosh (K)  & \sinh (K)  \\ \sinh (K)  & \cosh (K) \end{array} \right) \left( \begin{array}{ll} O & 0 \\ 0 & O \end{array} \right) \left( \begin{array}{ll} E & 0 \\ 0 & E \end{array} \right) \\ &\hspace{3cm} \times \left( \begin{array}{ll} O & 0 \\ 0 & O \end{array} \right)^T    \left( \begin{array}{ll} \cosh (K)  & \sinh (K)  \\ \sinh (K)  & \cosh (K) \end{array} \right)\;.  \end{split} \end{equation} 
Hence, a symplectic conjugation of the $2|\cI_k| \times 2 |\cI_k|$ matrix defining the quadratic Hamiltonian \cref{eq:H-quad} is sufficient to obtain a block--diagonal matrix (with $|\cI_k| \times |\cI_k|$ blocks $O E O^T$) corresponding to a ``diagonal'' quadratic Hamiltonian in the sense of containing only terms of the form $c^* c$ and none of the form $c^* c^*$ or $c c$.

However, it will be important to further transform the block--diagonal matrix as to make the resulting quadratic Hamiltonian comparable with the bosonic kinetic energy $\bD_\textnormal{B}$, defined in \cref{eq:H0-comm2}.
To reach this goal we have to look more closely at $E$, decomposing it further into blocks associated to the index sets $\cI_k^+$ and $\cI_k^-$ (associated with patches in the north and south hemisphere, respectively). Note that $I = |\cI_k^+| = |\cI_k^-| = |\cI_k| /2$. With \cref{eq:DWwtW} we write
\begin{equation}\label{eq:db-intro} D = \left( \begin{array}{ll} d & 0 \\ 0 &  d \end{array} \right) , \quad W = \left( \begin{array}{ll} b & 0 \\ 0 & b \end{array} \right) , \quad \wt{W} = \left( \begin{array}{ll} 0 & b \\ b & 0 \end{array} \right) \end{equation} 
where $d = \text{diag} \{ u_\alpha^2 ,  \alpha = 1, \dots , I \}$ and $b = g |v \rangle \langle v |$. Here we introduced 
\[ g = \frac{\kappa}{2} \hat{V} (k)\; , \quad u_\alpha = |\hat{k} \cdot \hat{\omega}_\alpha|^{1/2}\;, \quad v_\alpha = \frac{\hbar}{\kappa \sqrt{|k|}}  n_\alpha (k) \qquad \textnormal{for } \alpha = 1, \dots , I\;. \]
It will play an important role in the proof of \cref{lm:LwtL} that, as a consequence of \cref{eq:Ik-def} and \cref{lem:counting}, we have  
\begin{equation}\label{eq:prop-uv} 
N^{-\delta} \leq u_\alpha^2 \leq 1\;, \qquad  |v_\alpha | \leq C \frac{u_\alpha}{M^{1/2}}  
\end{equation} 
which implies $\norm{v} \leq C$ and $\norm{ d^{-1/2} v } \leq C$.

To block--diagonalize $E$ (w.\,r.\,t.\ the decomposition $\cI_k = \cI_k^+ \cup \cI_k^-$), we introduce
\begin{equation} \label{eq:U-def} U := \frac{1}{\sqrt{2}} \left( \begin{array}{ll} \id & \id \\ \id & -\id \end{array} \right) \end{equation} 
(where $\id$ is the $I\times I$ identity matrix)
and observe
\[ U^T (D+W+\wt{W}) U = \left( \begin{array}{ll} d+2b & 0 \\ 0 & d \end{array} \right)\; , \qquad U^T (D+W - \wt{W} ) U = \left( \begin{array}{ll} d & 0 \\ 0 & d+ 2b \end{array} \right)\;. \]
This implies that 
\begin{equation} \label{eq:UEU} U^T E U = \left( \begin{array}{ll} [d^{1/2} (d+2b) d^{1/2} ]^{1/2} & 0 \\ 0 & [ (d+2b)^{1/2} d (d+2b)^{1/2} ]^{1/2} \end{array} \right)\;.  \end{equation}
The upper--left entry is clearly larger than the operator $d$. It seems more difficult to compare the lower--right entry with $d$ (thus, it seems difficult to compare $U^T E U$ with $D$). To solve this problem, we define the $I \times I$ matrix $X := (d+2b)^{1/2} d^{1/2}$ and consider its polar decomposition $X = A P$, with $A$ orthogonal and $P := (X^* X)^{1/2}$. Then, from \cref{eq:UEU}, we have 
\[ \begin{split}  U^T E U  &=   \left(  \begin{array}{ll} (X^* X)^{1/2}  & 0 \\ 0 & (X X^*)^{1/2} \end{array} \right) \\ &= \left(  \begin{array}{ll} P  & 0 \\ 0 & A P A^T \end{array} \right) = \left(  \begin{array}{ll} 1  & 0 \\ 0 & A  \end{array} \right) \left(  \begin{array}{ll} P  & 0 \\ 0 & P  \end{array} \right)\left(  \begin{array}{ll} 1  & 0 \\ 0 & A^T \end{array} \right)\;. \end{split} \]
Using the easily--checked invariance of the matrix with blocks $P$ on the diagonal w.\,r.\,t.\ conjugation with $U$ we conclude that 
\[ E = \wt{O} \wt{P} \wt{O}^T\;, \]
where we defined 
\begin{equation} \label{eq:wtO-def} \wt{O} := U  \left(  \begin{array}{ll} 1  & 0 \\ 0 & A  \end{array} \right) U^T \;, \qquad \wt{P} := \left(  \begin{array}{ll} P  & 0 \\ 0 & P  \end{array} \right) \;.\end{equation} 
Inserting in \cref{eq:deco-quad2}, we arrive at 
\begin{align*}
\left( \begin{array}{ll} D+W & \wt{W} \\ \wt{W} & D+W \end{array} \right) &= 
 \left( \begin{array}{ll} \cosh (K)  & \sinh (K)  \\ \sinh (K)  & \cosh (K) \end{array} \right) \left( \begin{array}{ll} O & 0 \\ 0 & O \end{array} \right)  \left( \begin{array}{ll} \wt{O} & 0 \\ 0 & \wt{O} \end{array} \right)  \left( \begin{array}{ll} \wt{P} & 0  \\ 0 &  \wt{P} \end{array} \right) \\
 & \quad \times 
 \left( \begin{array}{ll} \wt{O} & 0 \\ 0 & \wt{O} \end{array} \right)^T 
 \left( \begin{array}{ll} O & 0 \\ 0 & O \end{array} \right)^T    \left( \begin{array}{ll} \cosh (K)  & \sinh (K)  \\ \sinh (K)  & \cosh (K) \end{array} \right)\;.   \tagg{eq:deco-quad3}
 \end{align*}

If the $c$-- and $c^*$--operators were exactly bosonic we could therefore bring the quadratic operator \cref{eq:H-quad} into a diagonal form comparable to the bosonic kinetic energy $\bD_\textnormal{B}$ by means of the two Bogoliubov transformations\footnote{The transformation $Z$ is a trivial Bogoliubov transformation, corresponding to only a change of basis in the one-boson Hilbert space. In the language of bosonic second-quantized operators, it corresponds to a transformation of the form $e^{\di\Gamma(L)} = \Gamma(e^L)$, where $e^L$ is an orthogonal matrix acting on the one-boson space.} 
\begin{equation}\label{eq:3BT}
\begin{split} 
T &= \exp \left( \frac{1}{2} \sum_{k \in \north} \sum_{\alpha,\beta \in \cI_k} K (k)_{\alpha,\beta} \, c_\alpha^* (k) c_\beta^* (k) - \text{h.c.} \right) \;, \\
Z &= \exp \left( \sum_{k \in \north} \sum_{\alpha,\beta \in \cI_k} L_{\alpha,\beta} (k) \, c_\alpha^* (k) c_\beta (k) \right) \;,
\end{split}
\end{equation} 
where (re-inserting now the dependence on $k$ in the notation) we introduced the matrix
\begin{equation}
L(k) := \log \left( O(k) \wt{O}(k) \right) \;.
\end{equation}
Recall that $O (k)$ and $ \wt{O} (k)$ are orthogonal matrices, i.\,e., all their eigenvalues are on the unit circle. The function $\log$ denotes an arbitrary branch of the complex logarithm with $\Im \log 1 = 0$. The matrix $L (k)$ is by definition antisymmetric, so that $Z$ is a unitary operator on Fock space. If the $c$-- and $c^*$--operators were exactly bosonic, we would find 
\begin{equation}\label{eq:diagonal-bos}
Z^* T^* \bH T Z  = \frac{1}{2} \sum_{\alpha, \beta \in \cI_k} \wt{P}_{\alpha, \beta}  \left( c_\alpha^* (k) c_\beta (k) + \delta_{\alpha,\beta} \right) \;.
\end{equation}
Recall that $\tr \tilde{P} = \tr E$. Since $P = (X^* X)^{1/2} = [ d^{1/2} (d+2b) d^{1/2} ]^{1/2} \geq d$, we could use $\wt{P} \geq D$ to conclude that 
\begin{equation}\label{eq:diagonal-bos2}
Z^* T^* \bH T Z \geq \sum_{\alpha \in \cI_k} u_\alpha^2 (k) c_\alpha^* (k) c_\alpha (k) + \frac{1}{2}\tr E = \bD_\textnormal{B} + \frac{1}{2}\tr E\;.
\end{equation}
This comparison is not surprising in view of the discussion of the spectrum of $E(k)$ in \cite{Ben20}. There the problem is reduced to a rank--one perturbation of the matrix $D(k)$; the perturbed eigenvalues are all larger than the corresponding unperturbed eigenvalues. However, $E(k)$ and $D(k)$ cannot be simultaneously diagonalized, so we do not have an operator inequality between $E(k)$ and $D(k)$. This problem is overcome here noting that $E(k)$ can be diagonalized by a Bogoliubov transformation which leaves $\Hbb_0 - \Dbb_\B$ (though not $\Dbb_\B$ alone) invariant. 

Since the $c$-- and $c^*$--operators are not exactly bosonic, we can expect \cref{eq:diagonal-bos} to hold only approximatively, on states with few excitations of the Fermi ball. To prove that this is indeed the case, we need some estimates on the kernels $K(k)$ and $L(k)$. The following bound for $K(k)$ has already been shown in \cite[Lemma~2.5]{BNPSS2}.
\begin{lemma}[Bogoliubov kernel] \label{lm:K}
There exists a $C >0$ such that for all $k \in \north$ we have 
\[ |K(k)_{\alpha,\beta}| \leq C \frac{\hat{V} (k)}{M} \qquad \text{for all $\alpha,\beta \in \cI_k$.}\]
In particular $\| K (k) \|_\textnormal{HS} \leq C \hat{V} (k)$. 
\end{lemma}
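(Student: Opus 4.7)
The plan is to exploit the rank-one structure of $W$ and $\wt W$ established in \cref{eq:db-intro}: both matrices reduce to $b = g|v\rangle\langle v|$ with $g = (\kappa/2)\hat V(k)$ and, crucially, $|v_\alpha| \leq C u_\alpha M^{-1/2}$. This $M^{-1/2}$ per component is ultimately what produces the $M^{-1}$ per matrix entry in the claimed bound.

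First I would apply the $U$-rotation of \cref{eq:U-def} on the patch indices. Because $U$ commutes with the $(c,c^*)$ doubling and diagonalizes both $D+W+\wt W$ and $D+W-\wt W$ into block-diagonal matrices with blocks $d$ and $d+2b$, every matrix built from them by functional calculus---and in particular $E$, $S_1$, $|S_1^T|$, and hence $K = \log|S_1^T|$---inherits the same $U$-block structure. The original $2I\times 2I$ problem thus decouples into two independent $I\times I$ problems, each involving only $d = \operatorname{diag}(u_\alpha^2)$ together with the rank-one perturbation $d + 2b$.

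Next, in each sub-block I would compute $|S_1^T|^2 = d^{1/2}E^{-1}d^{1/2}$ explicitly by using the integral formula $X^{-1/2} = \pi^{-1}\int_0^\infty t^{-1/2}(t+X)^{-1}\,dt$ together with the Sherman--Morrison identity applied to $d^2 + 2g|w\rangle\langle w|$, where $w = d^{1/2}v$. This yields $|S_1^T|^2 = I - R$ with $R_{\alpha\beta} = v_\alpha v_\beta\, F(u_\alpha^2,u_\beta^2)$ for an explicit scalar kernel $F$; evaluating the remaining $t$-integral by partial fractions gives $|F(u_\alpha^2,u_\beta^2)| \leq C\hat V(k)/(u_\alpha^2 + u_\beta^2)$. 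Here $\hat V\geq 0$ keeps the Sherman--Morrison denominator $\geq 1$, and $u_\alpha^2 \geq N^{-\delta}$ guarantees that all integrals converge uniformly, so the bound needs no smallness assumption on $V$. Combining with $|v_\alpha v_\beta|\leq C(u_\alpha^2+u_\beta^2)M^{-1}$ (from $|v_\alpha|\leq C u_\alpha M^{-1/2}$ and $u_\alpha u_\beta\leq \tfrac12(u_\alpha^2+u_\beta^2)$) produces $|R_{\alpha\beta}|\leq C\hat V(k)/M$.

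Finally I would lift the bound from $R$ to $K = \tfrac12\log(I-R)$. Since $R$ admits an integral representation as a weighted sum of rank-one operators $|\phi(t)\rangle\langle\phi(t)|$ with $\phi(t)_\alpha$ proportional to $v_\alpha$ times a scalar function of $u_\alpha^2$, an analogous spectral representation of the matrix logarithm preserves this bilinear-in-$v$ structure, giving $K(k)_{\alpha\beta} = v_\alpha v_\beta\, G(u_\alpha^2,u_\beta^2)$ with $G$ uniformly bounded. The entry-wise estimate $|K(k)_{\alpha\beta}|\leq C\hat V(k)/M$ follows as before, and then $\|K(k)\|_{\text{HS}}^2 \leq |\cI_k|^2\,(C\hat V(k)/M)^2 \leq C\hat V(k)^2$ using $|\cI_k|\leq M$. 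The main obstacle is precisely this final step: a crude expansion $\log(I-R) = -\sum_n R^n/n$ would lose the $M^{-1}$ per entry, since each intermediate matrix product sums over $\sim M$ indices; one must instead use that the range of $R$ is concentrated in the one-dimensional subspace spanned by $v$ (modulated by functions of $d$), so that the bilinear-in-$v$ form is genuinely preserved under the logarithm. This is the core of the computation recorded in \cite[Lemma~2.5]{BNPSS2}.
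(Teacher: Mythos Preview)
The paper does not give its own proof of this lemma; it simply records the statement and cites \cite[Lemma~2.5]{BNPSS2}. So there is no in-paper argument to compare against beyond that pointer.

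Your sketch is a reasonable outline of the approach behind the cited result. The $U$-rotation reducing to the two $I\times I$ blocks built from $d$ and $d+2b$, the integral representation of $(X^*X)^{-1/2}$ together with Sherman--Morrison for the rank-one perturbation $d^2 + 2g|d^{1/2}v\rangle\langle d^{1/2}v|$, and the resulting identity $|S_1^T|^2 = I - R$ with $R_{\alpha\beta} = v_\alpha v_\beta\,F(u_\alpha^2,u_\beta^2)$ and $|F(u_\alpha^2,u_\beta^2)|\le C\hat V(k)/(u_\alpha^2+u_\beta^2)$ all check out; the partial-fraction integral indeed evaluates to $\pi/(u_\alpha^2+u_\beta^2)$, and the Sherman--Morrison denominator is $\ge 1$ by $\hat V\ge 0$. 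The inequality $|v_\alpha v_\beta|\le C u_\alpha u_\beta/M \le C(u_\alpha^2+u_\beta^2)/(2M)$ then yields $|R_{\alpha\beta}|\le C\hat V(k)/M$ exactly as you say.

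You are also right that the passage from $R$ to $K=\tfrac12\log(I-R)$ is where the real work lies, and you correctly diagnose why a crude bound $|(R^n)_{\alpha\beta}|\le (C\hat V(k))^n/M$ from iterated entrywise multiplication would only sum for small $\hat V(k)$. Your remedy---that each $R^n$ retains the form $v_\alpha v_\beta\times(\text{bounded kernel in }u_\alpha,u_\beta)$ because the intermediate index sums always pair two factors of $v$ and are controlled via $\sum_\beta v_\beta^2 f(u_\beta)\sim$ Riemann sum---is the right mechanism, but you leave its execution as a pointer back to \cite{BNPSS2}, which is precisely what the paper itself does. So your proposal and the paper's treatment are aligned: both defer the detailed entrywise control of the logarithm to the cited reference.
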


The following bounds for the antisymmetric matrix $L(k)$ are new. 
\begin{lemma}[Kernel of one--particle transformation]\label{lm:LwtL} 
Suppose that the parameters $\delta, M, R$ used to define the patch decomposition in Section \ref{sec:patch} are such that $M \gg R^2 N^{2\delta}$. Then there exists a $C >0$ such that   for all $k \in \north$ we have
\begin{equation} \label{eq:L-bd}
\begin{split} 
\| L (k) \|_\HS &\leq C \hat{V} (k) \, . 
\end{split}
\end{equation}  
\end{lemma}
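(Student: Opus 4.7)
The plan is to exploit the rank-one form $b = g|v\rangle\langle v|$ (with $g = \kappa\hat{V}(k)/2$) together with the block structure afforded by the unitary $U$ from \cref{eq:U-def}, in the spirit of the proof of \cref{lm:K}.

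First, I would conjugate by $U$ to reduce $\log(O\tilde{O})$ to a block-diagonal form. Using $U^T(D+W-\tilde{W})U = \text{diag}(d, d+2b)$ and $U^T E U = \text{diag}(P, APA^T)$ (from \cref{eq:UEU}), together with the identity $AP^{-1/2}A^T = {P'}^{-1/2}$, where $P' = [(d+2b)^{1/2}d(d+2b)^{1/2}]^{1/2}$ (which follows from $APA^T = P'$, itself a consequence of $X^TX = P^2$ and $XX^T = P'^2$ for $X = AP$), one gets
\[ U^T S_1 U = \text{diag}(Y_1, Y_2), \qquad Y_1 = d^{1/2}P^{-1/2}, \qquad Y_2 = (d+2b)^{1/2}{P'}^{-1/2}. \]
If $O_1, O_2$ denote the orthogonal parts of $Y_1, Y_2$ in their polar decompositions and $\wt{O}$ is as in \cref{eq:wtO-def}, then $U^T(O\tilde{O})U = \text{diag}(O_1, O_2A)$ and, since $U$ is orthogonal,
\[ \|L(k)\|_\HS^2 = \|\log O_1\|_\HS^2 + \|\log (O_2 A)\|_\HS^2. \]
When $\hat{V}(k) = 0$, $b = 0$ and all blocks reduce to the identity, consistent with $L(k) = 0$.

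Next, I would derive integral representations for the relevant differences. From $P^2 - d^2 = 2d^{1/2}bd^{1/2} = 2g|w\rangle\langle w|$ with $w = d^{1/2}v$, the Duhamel identity $P(P-d)+(P-d)d = P^2-d^2$ yields
\[ P - d = 2g\int_0^\infty e^{-Pt}|w\rangle\langle w|e^{-dt}\,dt. \]
Combined with $P\geq d$ and the bound $\|w\|^2 = \sum_\alpha u_\alpha^2|v_\alpha|^2 \leq C$ coming from \cref{eq:prop-uv}, this gives $\|P-d\|_\HS \leq C\hat{V}(k)$. An analogous representation (using $P'^2 - (d+2b)^2 = -2(d+2b)^{1/2}b(d+2b)^{1/2}$) bounds $\|P'-(d+2b)\|_\HS$, and similar integral representations for $P^{-1/2}-d^{-1/2}$ and $(d+2b)^{1/2}-d^{1/2}$ control $X-d$ and hence $A-1$.

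Finally, I would convert these bounds into bounds on $\|\log O_1\|_\HS$ and $\|\log(O_2A)\|_\HS$. Writing $Y_1 - 1 = d^{1/2}(P^{-1/2}-d^{-1/2})$ and using the integral representation for $P^{-1/2}-d^{-1/2}$, each contribution reduces to estimating a rank-one operator of the form $(d+s)^{-1}d^{1/2}|w\rangle\langle w|(P+s)^{-1}$; since $(d+s)^{-1}d^{1/2}|w\rangle = (d+s)^{-1}d|v\rangle$ has norm bounded uniformly in $s$ by $\|v\|\leq C$, the resulting integrals over $s$ converge to something of order $\hat{V}(k)$, giving $\|Y_j-1\|_\HS \leq C\hat{V}(k)$. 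The polar decomposition $O_j = Y_j|Y_j|^{-1}$ then translates this to $\|\log O_j\|_\HS \leq C\hat{V}(k)$ via the relation $\|\log R\|_\HS \leq C\|R-1\|_\HS$ for orthogonal $R$ close to identity (and the trivial bound $\|\log R\|_\HS \leq \pi\sqrt{\text{rk}(R-1)}$ in the regime where the perturbation is not small, exploiting that the effectively low-rank structure of $b$ limits the number of eigenvalues of $R$ away from $1$).

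The main obstacle is avoiding spurious factors of $\|d^{-1/2}\|_\op = N^{\delta/2}$, which would ruin the uniform-in-$N$ bound. The resolution is that whenever an inverse power of $d$ appears in the integrals, it acts on the vector $v$ and is absorbed by the finer bound $\|d^{-1/2}v\|\leq C$ from \cref{eq:prop-uv}, rather than being estimated via the operator norm of $d^{-1/2}$ alone. This is precisely where the rank-one structure of $b = g|v\rangle\langle v|$ is essential, rather than just the operator norm $\|b\|\leq Cg$.
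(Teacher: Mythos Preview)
Your architecture matches the paper's: both block--diagonalize via $U$ and reduce to bounding $\|O_1-1\|_\HS$, $\|O_2-1\|_\HS$, $\|A-1\|_\HS$ (your $O_1$ and $O_2A$ are a minor repackaging), then handle these via integral representations exposing the rank--one structure of $b$. The Duhamel bound on $\|P-d\|_\HS$ works cleanly (Cauchy--Schwarz in $t$ gives $\int_0^\infty\|e^{-Pt}w\|^2\,dt=\tfrac12\langle w,P^{-1}w\rangle\leq\tfrac12\|v\|^2$). But the core estimate for $\|Y_1-1\|_\HS$ has two real problems. First, in the representation $P^{-1/2}-d^{-1/2}=C\int s^{-1/2}[(s+P)^{-1}-(s+d)^{-1}]\,ds$ the integrand involves $P-d$, which is \emph{not} rank one; only $P^2-d^2$ is. To get a genuinely rank--one integrand you must work with squares, e.g.\ $P^{-1/2}=(P^2)^{-1/4}$ via $z^{-1/4}=C\int s^{-1/4}(s+z)^{-1}\,ds$, producing $(s+d^2)^{-1}|w\rangle\langle w|(s+P^2)^{-1}$; this is what the paper does. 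Second, and more seriously, your uniform bound $\|(d+s)^{-1}dv\|\leq\|v\|$ (or its analogue at the level of $d^2$) is too crude: combined with $\|(s+P^2)^{-1}w\|\leq Cs^{-1}$ it gives an integrand $\sim s^{-5/4}$, divergent at $s=0$. What is actually needed is the finer estimate $\|(s+d^2)^{-1}d^kv\|\leq C\min\{s^{-1},s^{-1+(1+k)/2}\}$ (or $|\log s|^{1/2}$ in borderline cases), which the paper extracts by interpreting $\sum_\alpha u_\alpha^{4k+2}/(s+u_\alpha^4)^2$ as a Riemann sum for $\int_0^1 t^{2k+1}/(s+t^2)^2\,dt$; this step uses the quasi--uniform distribution of the $u_\alpha$ and is where the hypothesis $M\gg R^2N^{2\delta}$ actually enters. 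The bounds $\|v\|,\|d^{-1/2}v\|\leq C$ from \cref{eq:prop-uv} are ingredients here, but not sufficient on their own.

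Two smaller points. The rank argument for ``not small'' $\hat V(k)$ fails: $O_1-1$ is generically full rank even though $b$ has rank one. Fortunately it is unnecessary, since $\|\log R\|_\HS\leq\tfrac{\pi}{2}\|R-1\|_\HS$ for \emph{any} orthogonal $R$ (eigenvalue--wise, $|\theta|\leq\pi|\sin(\theta/2)|$ on $[-\pi,\pi]$). And the passage $\|Y_j-1\|_\HS\to\|O_j-1\|_\HS$ requires a uniform bound on $\||Y_j|^{-1}\|_\op$, which for large $\hat V(k)$ is not automatic; the paper obtains it from $\|d^{-1/2}bd^{-1/2}\|_\op\leq C$ (hence $X^*X\leq Cd^2$, hence $|Y_1|^2\geq c>0$). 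You should make this step explicit.
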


\begin{rem} Since $L (k)$ is the logarithm of an orthogonal matrix, we always have $\| L(k) \|_\op \leq 2\pi$. From Lemma \ref{lm:LwtL}, we also have $\| L(k) \|_\op \leq C \hat{V} (k)$, which improves the bound if $\hat{V} (k)$ is small.
\end{rem}

\begin{proof}%[Proof of \cref{lm:LwtL}]
All matrices depend on $k$ but in this proof we do not indicate this dependence explicitly. We split the bound in two parts by
\[ \begin{split} 
 \norm{L}_\textnormal{HS} & = \| \log (O\wt{O}) \|_\HS \leq C \| O\wt{O}-1 \|_\HS \leq C \| O \|_\op \| \wt{O}-1 \|_\HS + C \| O - 1 \|_\HS\;.
\end{split} \]
Since $O$ is orthogonal we have $\| O \|_\op =1$ and we only need to estimate $\| \wt{O}-1 \|_\HS$ and $\| O - 1 \|_\HS$. The same applies for the operator norm.

\paragraph{Bound for $\| \wt{O}-1 \|_\HS$.}
From the definition \cref{eq:wtO-def}, we get 
\begin{equation}\label{eq:OthrouA} \| \wt{O} - 1 \|_\HS = \| A - 1 \|_\HS \end{equation} 
with $A$ the orthogonal matrix arising from the polar decomposition of $X = (d+2b)^{1/2} d^{1/2}$, i.\,e., $A = X (X^* X)^{-1/2}$. We have 
\begin{equation}\label{eq:wtO1} \begin{split} 
\| A - 1 \|_\HS &= \left\| X \frac{1}{\sqrt{X^* X}} - 1 \right\|_\HS \leq \left\| X \left( \frac{1}{\sqrt{X^* X}} - \frac{1}{d} \right) \right\|_\HS + \left\| X \frac{1}{d} - 1 \right\|_\HS \;.\end{split} \end{equation}
To bound the second term on the r.\,h.\,s.\ of the last equation, we use the representation
\begin{equation}\label{eq:repre} \sqrt{z} = \frac{1}{\pi} \int_0^\infty \frac{\di s}{\sqrt{s}} \left( 1 -  \frac{s}{s+z} \right) \end{equation} 
to write by means of a resolvent identity
\begin{equation}\label{eq:Xd-1}  \begin{split} 
X \frac{1}{d} - 1 =  \left( (d+2b)^{1/2}  - d^{1/2} \right) \frac{1}{d^{1/2}}&= -\frac{1}{\pi}  \int_0^\infty \di s \, \sqrt{s} \, \left( \frac{1}{s+d+2b} - \frac{1}{s+d} \right) \frac{1}{d^{1/2}} \\ &= \frac{2}{\pi} \int_0^\infty \di s \sqrt{s}\, \frac{1}{s+d+2b} \, b \, \frac{1}{s+d} \, \frac{1}{d^{1/2}} \;.\end{split} \end{equation} 
Recalling that $b = g |v \rangle \langle v |$ with $g = \kappa \hat{V} (k) /2$ we find 
\begin{equation}
\label{eq:Xd1}
\big\| X \frac{1}{d} - 1 \big\|_\text{HS} \leq C \hat{V} (k) \int_0^\infty \di s \, \sqrt{s} \, \Big\| \frac{1}{s+d+2b} v \Big\| \,  \Big\| \frac{1}{s+d} \frac{1}{d^{1/2}} v \Big\|\;.
\end{equation}
To control the norms in this integral (and similar norms that will arise in the rest of the proof), we use \cref{eq:prop-uv} so that, for $j=1,2$ and $-1/2 \leq k \leq j-1$, we have
\begin{equation}
\label{eq:riem0}
\Big\| \frac{1}{s+d^j} d^k v \Big\|^2 = \Big\langle v , \frac{d^{2k}}{(s+d^j)^2} v \Big\rangle = \sum_{\alpha \in \Ikp}  \frac{v_\alpha^2 u_\alpha^{4k}}{(s+u_\alpha^{2j})^2} \leq \frac{C}{M} \sum_\alpha  \frac{u_\alpha^{4k+2}}{(s+u_\alpha^{2j})^2}  \;.
\end{equation}
Recall that $u_\alpha^2 = |\hat{k} \cdot \hat{\omega}_\alpha| = \cos \theta_\alpha$ where $\theta_\alpha \in (0;\pi/2)$ is the inclination angle of the center $\omega_\alpha$ of the patch $B_\alpha$, measured w.\,r.\,t.\ the vector $k$. We consider then the sum on the r.\,h.\,s.\ of \cref{eq:riem0} as a Riemann sum for a surface integral on the northern hemisphere of the unit sphere, parametrized by the angles $\theta \in (0,\pi/2)$ and $\varphi \in (0,2\pi)$. To estimate the error in going from the Riemann sum to the integral, we set
\[
f(\theta) = \frac{\cos^{2k+1} \theta}{(s+\cos^j \theta)^2}
\]
and compute its derivative, finding
\[
f' (\theta) = f(\theta) \left( (2k+1) \frac{\sin \theta}{\cos \theta} - 2j \frac{\cos^{j-1} \theta \sin \theta}{(s+\cos^j \theta)} \right)\;.
\]
Let $p_\alpha$ denote the surface area on the unit sphere $\mathbb{S}_2$ covered by the patch $B_\alpha$. With slight abuse of notation, let us also write $p_\alpha$ for the set of inclination angles $\theta \in (0,\pi/2)$ corresponding to points in $p_\alpha$. For all $\theta, \tilde\theta \in p_\alpha$ we have $|\theta - \tilde\theta| \leq C M^{-1/2}$ (this being the order of the diameter of the patch). According to the definition \cref{eq:Ik-def} of the index set, for $\alpha \in \Ikp$ we have $\cos \theta_\alpha \geq R^{-1} N^{-\delta}$. Thus for all $\theta \in p_\alpha$ we have
\[
\cos \theta \geq \cos \theta_\alpha - \lvert \cos \theta - \cos \theta_\alpha \rvert \geq R^{-1} N^{-\delta} - CM^{-1/2} \geq \frac{1}{2} R^{-1} N^{-\delta}\,,
\]
where we recall the assumption $M \gg R^2 N^{2\delta}$. Moreover, by the mean value theorem (if necessary enlarging the set of angles $p_\alpha$ to its convex hull in all the following supremuma to make sure that $\theta_0$ is contained)
\[
| f(\theta) - f(\tilde\theta)| \leq \sup_{\theta_0 \in p_\alpha} \lvert f'(\theta_0)\rvert \lvert \theta - \tilde\theta \rvert \leq C \frac{R N^\delta}{\sqrt{M}} \sup_{\theta_0 \in p_\alpha} f(\theta_0) \;.
\]
This implies $| f(\theta) - f(\tilde\theta_\alpha)| \leq 2^{-1} \sup_{\theta_0 \in p_\alpha} f(\theta_0)$. Thus for all $\theta \in p_\alpha$ we have
\[\sup_{\tilde\theta \in p_\alpha} f(\tilde\theta) \leq \sup_{\tilde\theta \in p_\alpha} |f(\tilde\theta) - f(\theta) | + f(\theta) \leq \frac{1}{2} \sup_{\tilde\theta \in p_\alpha} f(\tilde\theta) + f(\theta)\;;\]
in particular $f(\theta_\alpha) \leq 2 f(\theta)$ for all $\theta \in p_\alpha$. Therefore
\[
\Big\| \frac{1}{s+d^j} d^k v \Big\|^2  \leq C \sum_{\alpha \in \Ikp} \int_{p_\alpha} \frac{ \cos^{2k+1} \theta}{(s+\cos^j \theta)^2} \sin \theta \di\theta \di\varphi \leq C  \int_0^1 \frac{t^{2k+1}}{(s+t^j)^2} \di t \;.
\]
We conclude that 
\begin{equation}
\label{eq:riem}
\Big\| \frac{1}{s+d^j} d^k v \Big\| \leq C \left\{ \begin{array}{ll}  \min \{ s^{-1} , s^{- 1+ (1+k)/j} \}  \; &\text{if } \; 1+k < j \\
\min \{ s^{-1} , |\log s|^{1/2} \} \; &\text{if } \; 1+k = j \;. \end{array} \right.
\end{equation}
In particular, with $j=1$, $k=-1/2$, we find 
\[
\begin{split}
\Big\| \frac{1}{s+d} \frac{1}{d^{1/2}} v \Big\| \leq C \min \{ s^{-1} ; s^{-1/2} \} \;. \end{split}
\]
To bound the other norm in the integral in \cref{eq:Xd1}, we write
\[  \frac{1}{s+d+2b} v = \frac{1}{s+d} v - 2 \frac{1}{s+d+2b} b \frac{1}{s+d} v = \frac{1}{s+d} v - 2\Big\langle v, \frac{1}{s+d} v \Big\rangle \, \frac{1}{s+d+2b}  v \]
which implies, applying \cref{eq:riem} with $j=1$ and $k=0$,
\[
\Big\| \frac{1}{s+d+2b} v  \Big\| \leq \Big\| \frac{1}{s+d} v \Big\| \leq  C \min  \{ s^{-1}, |\log s|^{1/2}  \}  \;.
\]
Inserting this bound in \cref{eq:Xd1} and integrating the variable $s$ separately over the intervals $[0,1]$ and $[1,\infty)$, we conclude that
\[
\big\| X \frac{1}{d} - 1 \big\|_\text{HS} \leq C \hat{V} (k) \;.
\]

As for the first term on the r.\,h.\,s.\ of \cref{eq:wtO1}, we proceed analogously, writing 
\[ \begin{split} X \left( \frac{1}{\sqrt{X^* X}} - \frac{1}{d} \right) &= \frac{1}{\pi} \int_0^\infty \frac{\di s}{\sqrt{s}} \, X \left( \frac{1}{s+d^{1/2} (d+2b) d^{1/2}} - \frac{1}{s+d^2} \right) \\
& = -\frac{2}{\pi} \int_0^\infty \frac{\di s}{\sqrt{s}} \, (d+2b)^{1/2} d^{1/2} \, \frac{1}{s+d^{1/2} (d+2b) d^{1/2}} d^{1/2} \, b \, d^{1/2} \frac{1}{s+d^2} \;.
\end{split} \]
We write $b = g |v \rangle \langle v |$. We can bound $\| d^{-1/2} v \| \leq C$, as well as
 \[
 \begin{split}
 \left\| (d+2b)^{1/2} d^{1/2} \, \frac{1}{s+d^{1/2} (d+2b) d^{1/2}} \, d^{1/2} (d+2b)^{1/2} \right\|_\op &\leq 1 \;,
\\   \| (d+2b)^{-1/2} d^{1/2} \|_\op &\leq 1 \;,
\end{split}
\]
and, using \cref{eq:riem} with $j=2$ and $k=1/2$,
\[
\Big\| \frac{1}{s+d^2} d^{1/2} v \Big\| \leq C \min \{ s^{-1} , s^{-1/4} \} \;.
\]
We conclude that 
\[ \left\| X \left( \frac{1}{\sqrt{X^* X}} - \frac{1}{d} \right) \right\|_\HS \leq C \hat{V} (k) \, .  \]
Combined with \cref{eq:wtO1} and \cref{eq:Xd1}, this implies 
\[
\| A - 1 \|_\HS \leq C \hat{V} (k) \;.
\]

\paragraph{Bound for $\norm{O-1}_\HS$.} Recall that $O$ arises from the polar decomposition \cref{eq:Sdef} of $S_1$, i.\,e., 
\[ O = S_1 |S_1|^{-1} =  (D+W-\wt{W})^{1/2} E^{-1/2} \frac{1}{\sqrt{E^{-1/2} (D+W-\wt{W}) E^{-1/2}}}\;. \]
Using the orthogonal matrix $U$ defined in \cref{eq:U-def} and the fact that $O-1$ and $U^T (O-1) U$ have the same spectrum we obtain 
\begin{equation}\label{eq:LHS-deco}
\begin{split}
\| O-1 \|_\HS  \leq \; & \Big\| d^{1/2} (X^* X)^{-1/4} \frac{1}{\sqrt{(X^* X)^{-1/4} d (X^* X)^{-1/4}}} -1 \Big\|_\HS  \\ &+ \Big\| 
(d+2b)^{1/2} (X X^*)^{-1/4} \frac{1}{\sqrt{(X X^*)^{-1/4} (d+2b) (X X^*)^{-1/4}}} -1 \Big\|_\HS\;.
\end{split}
\end{equation}  
To estimate the first norm on the r.\,h.\,s.\ of \cref{eq:LHS-deco} we decompose 
\begin{equation} \label{eq:deco-2} 
\begin{split} 
& d^{1/2} (X^* X)^{-1/4} \frac{1}{\sqrt{(X^* X)^{-1/4} d (X^* X)^{-1/4}}} -1 \\
& = d^{1/2} \left( (X^* X)^{-1/4} - d^{-1/2} \right)\frac{1}{\sqrt{(X^* X)^{-1/4} d (X^* X)^{-1/4}}}  \\
& \quad + \frac{1}{\sqrt{(X^* X)^{-1/4} d (X^* X)^{-1/4}}} -1 \;.
\end{split} \end{equation} 
We start with the first summand on the r.\,h.\,s.\ of \cref{eq:deco-2}. 
With an integral representation similar to \cref{eq:repre} and using 
$X^* X - d^2 = 2d^{1/2} b d^{1/2}$, we write it as 
\begin{equation}\label{eq:term1a} \begin{split} 
d^{1/2} &\left( (X^* X)^{-1/4} - d^{-1/2} \right) \frac{1}{\sqrt{(X^* X)^{-1/4} d (X^* X)^{-1/4}}} \\
&= C \int_0^\infty \frac{\di s}{s^{1/4}} \, d^{1/2} \frac{1}{s+d^2} \, d^{1/2} \, b \, d^{1/2} \frac{1}{s+ X^* X} \frac{1}{\sqrt{(X^* X)^{-1/4} d (X^* X)^{-1/4}}}\;. \end{split} \end{equation} 
We estimate $\| d^{-1/2} v \| \leq C$ and  
\[
\begin{split} 
\Big\| d &\frac{1}{s+X^* X}  \frac{1}{\sqrt{(X^* X)^{-1/4} d (X^* X)^{-1/4}}}  \Big\|_\op^2  \\ &\leq \Big\| d \frac{1}{s+X^* X}  \frac{1}{(X^*X)^{-1/4} d (X^*X)^{-1/4}} \frac{1}{s+X^* X}  \, d \,  \Big\|_\op  \\ 
&\leq  \| d (X^* X)^{-1/2} \|_\op \Big\|  \frac{(X^* X)^{1/4}}{s+X^* X} \Big\|_\op \| (X^* X)^{1/2} d^{-1} \|_\op \Big\|  \frac{(X^* X)^{3/4}}{s+X^* X} \Big\|_\op \| (X^* X)^{-1/2} d \|_\op \\ &\leq C \min \{ s^{-2} ,s^{-1} \} \;.
\end{split}
\]
Here we used (recalling $X^* X = d^{1/2} (d+2b) d^{1/2}$) that $\| d (X^* X)^{-1/2} \|_\op \leq 1$ and also 
\begin{equation}
\label{eq:XXd}
\| (X^* X)^{1/2} d^{-1} \|_\op^2 = \| 1 + d^{-1/2} b d^{-1/2} \|_\op \leq C\;.
\end{equation}
Using \cref{eq:riem} with $j=2$, $k=1$, we obtain
\[
\Big\| \frac{1}{s+d^2} d v \Big\| \leq C \min  \{ s^{-1} , |\log s |^{1/2} \} \;.
\]
We conclude therefore that  
\begin{equation}
\label{eq:term1a-fin}
\Big\| d^{1/2} \left( (X^* X)^{-1/4} - d^{-1/2} \right) \frac{1}{\sqrt{(X^* X)^{-1/4} d (X^* X)^{-1/4}}} \Big\|_\HS \leq C\hat{V} (k)\;.
\end{equation}
 
 Let us now consider the second summand on the r.\,h.\,s.\ of \cref{eq:deco-2}. Since $X^* X = d^{1/2} (d+2b) d^{1/2} \geq d^2$, we observe that  
 \[
 d^{1/2} (X^*X)^{-1/2} d^{1/2} \leq 1\;.
 \]
From $d^{-1/2} b d^{-1/2} \leq C$ (uniformly in $N$ and in $k$, since $\hat{V}$ is bounded), we also have $X^* X \leq C d^2$ and thus 
 \[   d^{1/2} (X^*X)^{-1/2} d^{1/2} \geq c \]
 for a constant $c > 0$, independent of $N$ and $k$. The last two bounds imply that $c \leq  (X^* X)^{-1/4} d (X^* X)^{-1/4}  \leq 1$
 and therefore that with 
 \[
J := 1 -  (X^* X)^{-1/4} d (X^* X)^{-1/4}  
 \]
we have
\[0 \leq J \leq 1-c < 1\;.\] 
We write 
\[ \frac{1}{\sqrt{(X^* X)^{-1/4} d (X^* X)^{-1/4}}} -1  = \frac{1}{\sqrt{1 - J}} - 1 = \frac{1}{\pi} \int_0^\infty \frac{\di s}{\sqrt{s}} \frac{1}{s + 1 - J} J \frac{1}{s+1} \;. \]
With $1-J \geq c > 0$, we conclude that 
\begin{equation}\label{eq:redu-J}
\Big\|  \frac{1}{\sqrt{(X^* X)^{-1/4} d (X^* X)^{-1/4}}} -1  \Big\|_\HS \leq C \| J \|_\HS\;.
\end{equation} 
To estimate the Hilbert-Schmidt norm of $J$, we expand, similarly as we did in \cref{eq:Xd-1}, 
\[ \begin{split} 
J &= (X^* X)^{-1/4} ((X^*X)^{1/2} - d) (X^* X)^{-1/4} \\ &= \frac{1}{\pi} \int_0^\infty \di s \, \sqrt{s} \, (X^* X)^{-1/4} \frac{1}{s+X^*X} \, d^{1/2} \, b \, d^{1/2} \, \frac{1}{s+d^2}\,  (X^*X)^{-1/4} \;.
\end{split} \]
Writing again $b = g |v \rangle \langle v |$ and using the bounds $\| d^{-1/2} v \| \leq C$, $\| (X^*X)^{-1/4} d^{1/2} \|_\op \leq C$, and $\| d (X^*X)^{-1/2} \|_\op \leq C$ (the latter two bounds are simple consequences of $X^* X  \geq d^2$),
\[ \begin{split} 
\| (X^*X)^{1/4} (s+ X^*X)^{-1} \|_\op &\leq \min \big\{ s^{-3/4} , s^{-1} \big\} \end{split} \]
and also (\ref{eq:riem}) with $j=2$, $k=0$ to bound 
\[ \Big\| \frac{1}{s+d^2}  v \Big\| \leq \min \{ s^{-1}, s^{-1/2} \} \, ,  \]
we arrive at $\| J \|_\HS \leq C \hat{V} (k)$. 
Inserting in \cref{eq:redu-J} and combining the resulting bound with \cref{eq:term1a-fin}, we conclude that
\begin{equation}\label{eq:fin1}  \Big\| d^{1/2} (X^* X)^{-1/4} \frac{1}{\sqrt{(X^* X)^{-1/4} d (X^* X)^{-1/4}}} -1 \Big\|_\HS \leq C \hat{V} (k) \;. \end{equation} 

We turn to the second term on the r.\,h.\,s.\ of \cref{eq:LHS-deco}. Similarly as for the first term
\begin{align*}
& (d+2b)^{1/2} (X X^*)^{-1/4} \frac{1}{\sqrt{(X X^*)^{-1/4} (d+2b) (X X^*)^{-1/4}}} -1 \\
& = (d+2b)^{1/2} \left( (X X^*)^{-1/4} - (d+2b)^{-1/2} \right) \frac{1}{\sqrt{(X X^*)^{-1/4} (d+2b) (X X^*)^{-1/4}}} \\ & \quad + \frac{1}{\sqrt{(X X^*)^{-1/4} (d+2b) (X X^*)^{-1/4}}}  -1\;.   \tagg{eq:deco-3}
\end{align*}
The term on the first line can be bounded analogously as we did with the first term on the r.\,h.\,s.\ of \cref{eq:deco-2}. With $XX^* - (d+2b)^2 = - 2 (d+2b)^{1/2} \, b \, (d+2b)^{1/2}$ we find
\begin{equation}\label{eq:term2a}
\begin{split}
& (d+2b)^{1/2} \left( (X X^*)^{-1/4} - (d+2b)^{-1/2} \right) \frac{1}{\sqrt{(X X^*)^{-1/4} (d+2b) (X X^*)^{-1/4}}} \\ & = C \int_0^\infty \frac{\di s}{s^{1/4}}  \, (d+2b)^{1/2} \, \frac{1}{s+ (d+2b)^2} (d+2b)^{1/2}  \, b \, (d+2b)^{1/2} \, \frac{1}{s+XX^*}  \\ &\hspace{6cm} \times \frac{1}{\sqrt{(X X^*)^{-1/4} (d+2b) (X X^*)^{-1/4}}} \;.
\end{split}
\end{equation}
From $\| d^{1/2} (d+2b)^{-1/2} \|_\op \leq C$ and $\| d^{-1/2} v \| \leq C$, we obtain $\| (d+2b)^{-1/2} v \| \leq C$. Moreover, we find 
\[
\begin{split}
& \Big\| (d+2b) \frac{1}{s+XX^*} \frac{1}{\sqrt{(X X^*)^{-1/4} (d+2b) (X X^*)^{-1/4}}} \Big\|_\op^2 \\ &= \Big\| (d+2b) \frac{1}{s+XX^*} \frac{1}{(X X^*)^{-1/4} (d+2b) (X X^*)^{-1/4}}  \frac{1}{s+XX^*}  (d+2b) \Big\|_\op \\  &\leq \| 
(d+2b) (XX^*)^{-1/2} \|_\op  \Big\| \frac{(XX^*)^{1/4}}{s + XX^*} \Big\|_\op \| (XX^*)^{1/2} (d+2b)^{-1} \|_\op   \\ &\quad \times  \Big\|   \frac{(XX)^{3/4}}{s + XX^*}   \Big\|_\op \| (XX^*)^{-1/2} (d+2b) \|_\op \leq C \min \big\{ s^{-2}, s^{-1} \big\} \;.
\end{split}
\]
Here we used, analogously to \cref{eq:XXd}, the bounds $\| (XX^*)^{1/2} (d+2b)^{-1} \|_\op \leq 1$ and 
\begin{equation} \label{eq:dXX2}
\begin{split}
\| (d+2b) (XX^*)^{-1/2} \|^2_\op  &= \| (d+2b) (XX^*)^{-1} (d+2b) \|_\op  \\
&= \| (d+2b)^{1/2} d^{-1}  (d+2b)^{1/2} \|_\op \\ &= \| d^{-1/2} (d+2b) d^{-1/2} \|_\op = \| 1 + 2 d^{-1/2} b d^{-1/2} \|_\op \leq C \; .
\end{split}
\end{equation} 

On the other hand, we can bound
\[
\Big\| \frac{1}{s+(d+2b)^2} (d+2b) v \Big\|^2 \leq \left\langle v, \frac{1}{s+(d+2b)^2 } v \right\rangle \;.
\]
 With 
\[
\frac{1}{s+(d+2b)^2 } = \frac{1}{s+d^2} - \frac{1}{s+(d+2b)^2} \left[ (d+ 2b) 2b + 2b d \right] \frac{1}{s+d^2}
\]
and using again $b = g |v \rangle \langle v |$, we get
\[
\begin{split}
\Big\langle v, \frac{1}{s+(d+2b)^2 } v \Big\rangle =\; & \big\langle v, \frac{1}{s+d^2} v \big\rangle - 2g \big\langle v ,  \frac{(d+2b)}{s+(d+2b)^2} v \big\rangle \big\langle v,   \frac{1}{s+d^2} v \big\rangle \\ &- 2g \big\langle v ,  \frac{1}{s+(d+2b)^2} v \big\rangle \big\langle v,   \frac{d}{s+d^2} v \big\rangle
\end{split}
\]
and therefore (proceeding as in the proof of \cref{eq:riem}) arrive at
\begin{equation}\label{eq:expe-riem}  \Big\langle v, \frac{1}{s+(d+2b)^2 } v \Big\rangle \leq  \big\langle v, \frac{1}{s+d^2} v \big\rangle \leq \frac{C}{M} \sum_\alpha \frac{u_\alpha^2}{s+u_\alpha^4} \leq C \min \{ s^{-1} , |\log s| \} \, . \end{equation} 
This implies that 
\begin{equation} \label{eq:riem-d+2b} \Big\| \frac{1}{s+(d+2b)^2} (d+2b) v \Big\| \leq  C \min \{ s^{-1/2} , |\log s|^{1/2} \} \;.\end{equation}
From \cref{eq:term2a}, we conclude that 
\begin{equation}
\label{eq:term2a-fin}
\begin{split}
&\Big\| (d+2b)^{1/2} \left( (X X^*)^{-1/4} - (d+2b)^{-1/2} \right) \frac{1}{\sqrt{(X X^*)^{-1/4} (d+2b) (X X^*)^{-1/4}}}  \Big\|_\HS \\
& \hspace{30em}\leq C \hat{V} (k) \, .
\end{split}
\end{equation} 

Finally, let us consider the term on the second line of the r.\,h.\,s.\ of \cref{eq:deco-3}. 
Since $XX^* \leq (d+2b)^2$ (recall that $XX^* = (d+2b)^{1/2} d (d+2b)^{1/2}$), we have 
\begin{equation} \label{eq:lower-J2} (d+2b)^{1/2} (XX^*)^{-1/2} (d+2b)^{1/2} \geq 1 \end{equation}
which also implies that $(XX^*)^{-1/4} (d+2b) (XX^*)^{-1/4} \geq 1$. We define therefore  
\[ W := (XX^*)^{-1/4} (d+2b) (XX^*)^{-1/4} - 1 \geq 0 \;.\]
Then we have 
\[ \frac{1}{\sqrt{(XX^*)^{-1/4} (d+2b) (XX^*)^{-1/4}}}- 1 = \frac{1}{\sqrt{1+W}} - 1 = - \frac{1}{\pi}  \int_0^\infty \frac{\di s}{\sqrt{s}} \frac{1}{s+1+W} W \frac{1}{s+1} \]
and thus  
\begin{equation}\label{eq:redu-W} \Big\|  \frac{1}{\sqrt{(XX^*)^{-1/4} (d+2b) (XX^*)^{-1/4}}} - 1 \Big\|_\HS  \leq C \| W \|_\HS \;.\end{equation} 
To estimate the Hilbert-Schmidt norm of $W$ we write 
\begin{align*} 
W &= (XX^*)^{-1/4} [(d+2b) - (XX^*)^{1/2}] (XX^*)^{-1/4} \\ 
&= \frac{1}{\pi}  \int_0^\infty \di s \sqrt{s} \, (XX^*)^{-1/4} \frac{1}{s+(d+2b)^2} \, (d+2b)^{1/2} \, b \, (d+2b)^{1/2} \, \frac{1}{s+XX^*} (XX^*)^{-1/4} \\
&=  \frac{1}{\pi}   \int_0^\infty \di s \, \sqrt{s} \, (XX^*)^{-1/4} (d+2b)^{1/2}  \, \frac{(d+2b)^{1/2}}{s+(d+2b)^2} \, (d+2b)^{-1/2} \, b \, (d+2b)^{-1/2} \\ &\hspace{7.5cm} \times (d+2b) \, (XX^*)^{-1/2} \, \frac{(XX^*)^{1/4}}{s+XX^*} \;.
\end{align*}
With the resolvent identity, we obtain 
\[
\left( 1+  \big\langle v, \frac{d}{s+d^2} v \big\rangle \right)  \frac{1}{s + (d+2b)^2} v = \frac{1}{s+d^2} v   - \big\langle v, \frac{1}{s+d^2} v  \big\rangle  \frac{(d+2b)}{s+(d+2b)^2} v
\]
and thus 
\[
\Big\| \frac{1}{s + (d+2b)^2} v \Big\| \leq \Big\|  \frac{1}{s+d^2} v \Big\| + \big\langle v, \frac{1}{s+d^2} v  \big\rangle  \Big\| \frac{(d+2b)}{s+(d+2b)^2} v \Big\| \;.
\]
Using \cref{eq:riem} with $j=2$, $k=0$, \cref{eq:expe-riem}, and \cref{eq:riem-d+2b}
we arrive at
\[
\Big\| \frac{1}{s + (d+2b)^2} v  \Big\| \leq C \min \{ s^{-1/2} , s^{-1}  \} \;.
\]
Applying also \cref{eq:dXX2}, $\| (d+2b)^{-1/2} v \| \leq C$ and  
\begin{align*}  \Big\| (XX^*)^{1/4} \frac{1}{s+XX^*} \Big\|_\op \leq C \min \{ s^{-1} , s^{-3/4} \}    \end{align*}
we conclude that 
\begin{equation}
\label{eq:WHS}
\| W \|_\HS \leq C \| (XX^*)^{-1/4} (d+2b)^{1/2} \|_\op \hat{V} (k) \;.
\end{equation}
Since
\[
\begin{split}
\| (XX^*)^{-1/4} (d+2b)^{1/2} \|_\op^2 & = \| (XX^*)^{-1/4} (d+2b) (XX^*)^{-1/4} \|_\op \\
& = \| 1 + W \|_\op  \leq 1 + \| W \|_\HS
\end{split}
\]
we arrive at 
\[ \| W \|_\HS \leq C \hat{V} (k) \;. \]
Inserting this bound in \cref{eq:redu-W} and combining it with \cref{eq:term2a-fin}, we can bound \cref{eq:deco-3} by
\[ \Big\| (d+2b)^{1/2} (X X^*)^{-1/4} \frac{1}{\sqrt{(X X^*)^{-1/4} (d+2b) (X X^*)^{-1/4}}} -1 \Big\|_\HS \leq C \hat{V} (k)\,.\]
Together with \cref{eq:fin1} and with \cref{eq:LHS-deco}, we obtain 
\[ \| O-1 \|_\HS \leq  C \hat{V} (k)  \;. \qedhere\]
\end{proof}

Using the bounds on the kernels $K(k)$ and $L(k)$, our next goal is to show that the unitary transformations $T$ and $Z$ defined 
in \cref{eq:3BT} act on the $c$-- and $c^*$--operators as bosonic Bogoliubov transformations, up to errors that are small on states with few excitations.
(This will allow us to show that conjugation of the r.\,h.\,s.\ of \cref{eq:quad-app} by $T$ and $Z$ produces approximately the r.\,h.\,s.\ of \cref{eq:diagonal-bos}.)
To reach this goal, we need to show first that conjugation with $T$ and $Z$ does not change the number operator $\cN$ and the gapped number operators $\cN_\delta$ substantially. 
We generalize the definition \cref{eq:3BT} for $\lambda \in \bR$ to 
\begin{equation}\label{eq:Zlambda} \begin{split}  
 T_\lambda & := \exp \left( \frac{\lambda}{2} \sum_{k \in \north} \sum_{\alpha,\beta \in \cI_k} K (k)_{\alpha,\beta} c_\alpha^* (k) c_\beta^* (k) - \text{h.c.} \right) , \\
Z_\lambda & := \exp \left( \lambda  \sum_{k \in \north} \sum_{\alpha,\beta \in \cI_k} L (k)_{\alpha,\beta} c_\alpha^* (k) c_\beta (k) \right)\;,
\end{split} \end{equation} 
so that $T = T_1$ and $Z = Z_1$. 
\begin{lemma}[Stability of number operators]
\label{lm:growthN} 
Assume $\| \hat{V} \|_1 < \infty$ and $M \gg N^{2\delta} R^2$. Then for every $m \in \bN$ there exists $C > 0$ such that for all $\lambda \in [-1,1]$ we have
\begin{equation}\label{eq:growth-T} \begin{split} T_\lambda^* \cN^m T_\lambda \leq C (\cN+1)^m \quad \text{and} \quad
T_\lambda^* \cN_\delta \cN^m T_\lambda &\leq C (\cN_\delta + 1)( \cN+1)^m \;. \end{split} \end{equation} 
Conjugation with $Z_\lambda$ leaves the total number of particles constant,  
\[ Z_\lambda^* \cN^m Z_\lambda = \cN^m\;. \]
Moreover, for every $m \in \bN$ there exists $C > 0$ such that, for all $\lambda \in [-1,1]$, we have
\begin{equation} \label{eq:ZNdelta}  
\begin{split} 
Z_\lambda^* \cN_\delta \cN^m Z_\lambda &\leq C  \cN_\delta \cN^m \;.
\end{split} \end{equation} 
\end{lemma}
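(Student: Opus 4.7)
The plan is to apply Grönwall's lemma in $\lambda$ for the bounds involving $T_\lambda$, and to use an essentially exact commutation argument for $Z_\lambda$, whose generator is number-conserving. I will write $T_\lambda = e^{\lambda B_T}$ with antisymmetric generator $B_T := \tfrac12 \sum_{k,\alpha,\beta} K(k)_{\alpha,\beta}\, c_\alpha^*(k)c_\beta^*(k) - \textnormal{h.c.}$, so that $\tfrac{d}{d\lambda}T_\lambda^* A T_\lambda = T_\lambda^*[A, B_T]T_\lambda$ for any operator $A$.

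For the first estimate, consider $F(\lambda) := \langle \xi, T_\lambda^* (\cN+1)^m T_\lambda \xi \rangle$. Since each $c_\alpha^*(k) c_\beta^*(k)$ raises $\cN$ by $4$, the commutator $[(\cN+1)^m, c_\alpha^*(k) c_\beta^*(k)]$ equals $c_\alpha^*(k) c_\beta^*(k) P_m(\cN)$ with a polynomial satisfying $|P_m(\cN)| \leq C_m (\cN+1)^{m-1}$. Estimating the resulting double sum by Cauchy--Schwarz in $(\alpha,\beta)$, using \cref{lm:c-N} (so that $\sum_\alpha \|c_\alpha(k)\psi\|^2 \leq \|\cN_\delta^{1/2}\psi\|^2 \leq \|\cN^{1/2}\psi\|^2$, and similarly for $c_\alpha^*(k)$) together with $\|K(k)\|_{\textnormal{HS}} \leq C\hat{V}(k)$ from \cref{lm:K}, and summing over $k \in \north$ using $\|\hat{V}\|_1 < \infty$, one arrives at $|F'(\lambda)| \leq C F(\lambda)$, whence Grönwall yields $F(1) \leq e^C F(0)$. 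The second estimate in \cref{eq:growth-T} will follow the same pattern with $G(\lambda) := \langle \xi, T_\lambda^* (\cN_\delta+1)(\cN+1)^m T_\lambda \xi\rangle$, provided one can show that $[\cN_\delta, c_\alpha^*(k) c_\beta^*(k)] = 4\, c_\alpha^*(k) c_\beta^*(k)$ for $\alpha,\beta \in \cI_k$. This reduces to a geometric assertion: for $\alpha \in \cI_k^+$ and $p \in B_\F^c \cap (B_\F + k) \cap B_\alpha$, both $p$ and $p-k$ must lie at distance $\gtrsim N^{-\delta}$ from the Fermi sphere. This follows from $|p|^2 - k_\F^2 \geq 2p\cdot k - |k|^2 \geq 2k_\F N^{-\delta} - \cO(R) - \cO(N^{1/3}|k|/\sqrt{M})$ combined with $|k|<R$ and $M \gg R^2 N^{2\delta}$, which is exactly the check already performed in the proof of \cref{lem:counting}.

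For $Z_\lambda$, the generator $B_Z := \sum_{k,\alpha,\beta} L(k)_{\alpha,\beta}\, c_\alpha^*(k) c_\beta(k)$ manifestly commutes with $\cN$, since each summand preserves the total fermion number; hence $Z_\lambda^* \cN^m Z_\lambda = \cN^m$ is immediate. The same geometric fact shows that $c_\alpha^*(k) c_\beta(k)$ with $\alpha,\beta \in \cI_k$ replaces a pair of $\cN_\delta$--excitations by another such pair, giving $[\cN_\delta, B_Z] = 0$ and hence \cref{eq:ZNdelta}. The constant $C$ on the right--hand side is essentially cosmetic, absorbing any slight mismatch between the nominal cutoff $N^{-\delta}$ defining $\cN_\delta$ and the slightly smaller constant produced by the geometric estimate above. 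The main -- in fact the only -- nontrivial point of the whole argument will be this geometric verification; everything else reduces to a routine Grönwall/Cauchy--Schwarz computation built on the Hilbert--Schmidt bound for $K(k)$ in \cref{lm:K} and the excitation bounds in \cref{lm:c-N}.
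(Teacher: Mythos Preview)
Your overall strategy for the $T_\lambda$ bounds and for $Z_\lambda^*\cN^m Z_\lambda=\cN^m$ is fine and matches the paper. The gap is in your treatment of $\cN_\delta$. The identity $[\cN_\delta, c_\alpha^*(k) c_\beta^*(k)] = 4\,c_\alpha^*(k) c_\beta^*(k)$ (and the corresponding $[\cN_\delta,B_Z]=0$) is false, and the geometric justification you give does not hold. For $p\in B_\F^c\cap(B_\F+k)\cap B_\alpha$ with $\alpha\in\cI_k^+$ one has $(|p|^2-k_\F^2)+(k_\F^2-|p-k|^2)=2p\cdot k-|k|^2\gtrsim k_\F N^{-\delta}$, so only the \emph{sum} $(|p|-k_\F)+(k_\F-|p-k|)$ is $\gtrsim N^{-\delta}$; either summand individually may be arbitrarily small (limited only by the lattice). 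In particular, your displayed inequality $|p|^2-k_\F^2\geq 2p\cdot k-|k|^2$ has the wrong sign, since $|p-k|\leq k_\F$ gives $|p|^2-k_\F^2\leq |p|^2-|p-k|^2$. Hence the weight attached to each pair can be $1$ rather than $2$, and in the $Z_\lambda$ case the weights on the $c_\alpha^*$ and $c_\beta$ sides do not cancel. This is not a ``slight mismatch of constants'' that a prefactor $C$ absorbs: the commutator is a genuinely different operator, a weighted pair operator $c_\alpha^{g}$ with $g\in\{0,1,2\}$ as in \cref{lm:c-N}.

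The paper handles this by keeping $[\cN_\delta,c_\alpha^*(k)]=c_\alpha^{g*}(k)$ with $\|g\|_\infty\le 2$ and running Gr\"onwall also for $Z_\lambda$: one differentiates $\langle\psi,Z_\lambda^*\cN_\delta Z_\lambda\psi\rangle$, uses Cauchy--Schwarz in $\alpha,\beta$, bounds the weighted operators by $\cN_\delta$ via \cref{lm:c-N}, and controls the kernel by $\|L(k)\|_\op\le\|L(k)\|_\HS\le C\hat V(k)$ from \cref{lm:LwtL} to get $|\tfrac{d}{d\lambda}\langle\psi,Z_\lambda^*\cN_\delta Z_\lambda\psi\rangle|\le C\langle\psi,Z_\lambda^*\cN_\delta Z_\lambda\psi\rangle$. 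The same weighted-operator device is what makes the second bound in \cref{eq:growth-T} work. Your sketch becomes correct once you replace the (nonexistent) exact commutation with this weighted-operator estimate.
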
 
\begin{proof}
The proof of \cref{eq:growth-T} can be found in \cite[Lemma 7.2]{BNPSS} where it is stated under the additional assumption that $\hat{V}$ has a compact support; however, using \cref{lm:K} it easily extends to $\| \hat{V} \|_1 < \infty$.

The invariance of $\cN$ w.\,r.\,t.\ $Z_\lambda$ follows since the exponent commutes with $\Ncal$ (the $c^*$-operator creates two fermions while the $c$-operator annihilates two fermions). 

We still have to show \cref{eq:ZNdelta}. We consider the case $m=0$; the extension to $m > 0$ is straightforward. We compute 
\begin{equation}\label{eq:der-ZNZ}
\frac{\di}{\di\lambda}  \langle \psi , Z_\lambda^* \cN_\delta Z_\lambda \psi \rangle = \sum_{k \in \north}  \sum_{\alpha, \beta \in \cI_k} L_{\alpha,\beta} (k)  \langle \psi , Z_\lambda^* \left[ c_\alpha^* (k) c_\beta (k) , \cN_\delta \right] Z_\lambda \psi \rangle\;. \end{equation} 
Using the weighted  pairs operators introduced in \cref{lm:c-N} we have
\[ [ c_\alpha^* (k) , \cN_\delta ] = c_\alpha^{g*} (k)\; , \qquad [ c_\beta (k) , \cN_\delta ] = - c_\beta^g (k) \]
for a weight function $g$ with values in $\{ 0,1,2 \}$. Thus
\[
\frac{\di}{\di\lambda}  \langle \psi , Z_\lambda^* \cN_\delta Z_\lambda \psi \rangle  = 
\sum_{k \in \north}  \sum_{\alpha, \beta \in \cI_k} L_{\alpha,\beta}  (k) \left\langle \psi , Z_\lambda^* \left[ c_\alpha^{g*} (k) c_\beta (k) + c_\alpha^* (k) c_\beta^g (k) \right] Z_\lambda \psi \right\rangle
\]
and by Cauchy--Schwarz
\[
\left| \frac{\di}{\di\lambda}  \langle \psi , Z_\lambda^* \cN_\delta Z_\lambda \psi \rangle \right|
\leq \sum_{k \in \north}  \left( \sum_{\beta \in \cI_k}\Big\| \sum_{\alpha \in \cI_k} L_{\alpha, \beta} (k) \, c_\alpha^g (k) Z_\lambda \psi \Big\|^2 \right)^{\frac{1}{2}} \left( \sum_{\beta \in \cI_k}   \| c_\beta (k) Z_\lambda \psi \|^2 \right)^{\frac{1}{2}} \;.
\] 
Observe that
\[
\begin{split}
\sum_{\beta \in \cI_k}  \Big\| \sum_{\alpha \in \cI_k} L_{\alpha, \beta} (k) c_\alpha^g (k) Z_\lambda \psi \Big\|^2 &= \sum_{\beta, \alpha, \alpha' \in \cI_k} L_{\alpha,\beta} (k) \overline{L_{\alpha' , \beta}} (k)  \, \langle c_\alpha^g (k) Z_\lambda \psi , c_{\alpha'}^g (k) Z_\lambda \psi \rangle \\
&= \sum_{\alpha,\alpha' \in \cI_k} |L (k)|^2_{\alpha,\alpha'} \langle c_\alpha^g (k) Z_\lambda \psi , c_{\alpha'}^g (k) Z_\lambda \psi \rangle = \tr \, |L (k)|^2 C_g 
\end{split}
\]
with the $|\cI_k| \times |\cI_k|$ matrix $C_g$ having entries $(C_g)_{\alpha,\alpha'} = \langle c_\alpha^g (k) Z_\lambda \psi , c_{\alpha'}^g (k) Z_\lambda \psi \rangle$. Since $C_g$ is a positive matrix, we can use \cref{eq:L-bd} to estimate
\[
\begin{split}
\sum_{\beta \in \cI_k} \Big\| \sum_{\alpha \in \cI_k} L_{\alpha, \beta} (k) c_\alpha^g (k) Z_\lambda \psi \Big\|^2  &\leq C \hat{V} (k)^2  \, \tr C_g = C \hat{V} (k)^2   \sum_{\alpha \in \cI_k} \| c_\alpha^g (k) Z_\lambda \psi \|^2\;.
\end{split}
\]
Applying \cref{lm:c-N} and using $\| \hat{V} \|_1 < \infty$, we find   
\[ 
\left| \frac{\di}{\di\lambda}  \langle \psi , Z_\lambda^* \cN_\delta Z_\lambda \psi \rangle \right|  \leq C \langle \psi , Z_\lambda^* \cN_\delta Z_\lambda \psi \rangle \;.
\]
By Grönwall's lemma, we conclude that for all $\lambda \in [-1,1]$ we have
\[ \langle \psi , Z_\lambda^* \cN_\delta Z_\lambda \psi \rangle \leq C \langle \psi, \cN_\delta \psi \rangle\;. \qedhere\]
\end{proof} 

We can now show that the unitary operators $T$ and $Z$ approximately act on $c$-- and $c^*$--operators as bosonic Bogoliubov transformations, up to errors that are negligible on states with few excitations. The action of $T$ is described in the next lemma, whose proof can be found in \cite[Lemma~7.1]{BNPSS}.
\begin{lemma}[Approximate bosonic Bogoliubov transformation]\label{lm:T-action} 
For all $\lambda \in [-1,1]$, $k \in \north$, and $\gamma \in \cI_k$, we have 
\begin{equation}\label{eq:T-action}
T_\lambda^* c_\gamma (k) T_\lambda = \sum_{\alpha \in \cI_k} \cosh (\lambda K (k))_{\alpha,\gamma} c_\alpha (k) + \sum_{\alpha \in \cI_k} \sinh (\lambda K(k))_{\alpha, \gamma} c_\alpha^* (k) + \mathfrak{E}_\gamma (\lambda, k)
\end{equation} 
where for the error term $\mathfrak{E}_\gamma (\lambda, k)$ there exists a $C > 0$ such that for all $\psi \in \cF$ we have 
\[
\sum_{\gamma \in \cI_k} \| \mathfrak{E}_{\gamma} (\lambda, k) \psi \| \leq C M N^{-2/3+\delta} \| (\cN_\delta + M)^{1/2} (\cN+ 1) \psi \| \;.
\]
The same bound holds if we replace $\mathfrak{E}_\gamma (\lambda,k)$ with $\mathfrak{E}_\gamma^* (\lambda,k)$. 
 \end{lemma}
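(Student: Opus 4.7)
The plan is to set up a linear ODE for $\tilde c_\gamma(\lambda,k) := T_\lambda^* c_\gamma(k) T_\lambda$ and compare it with the exact bosonic evolution that would arise were the CCR exact. Writing
\[
B := \tfrac{1}{2}\sum_{k'\in\north}\sum_{\alpha,\beta\in\cI_{k'}} K(k')_{\alpha,\beta}\, c_\alpha^*(k') c_\beta^*(k') - \text{h.c.},
\]
the h.c.\ half of $B$ is quadratic in the $c$'s, which \emph{exactly} commute with $c_\gamma(k)$ by the first identity in \cref{lem:ccr}, so only the $c^*c^*$-part contributes. Applying the Leibniz identity $[A,BC]=[A,B]C+B[A,C]$ together with the approximate CCR $[c_\gamma(k),c_\alpha^*(k')] = \delta_{\alpha,\gamma}(\delta_{k,k'}+\cE_\gamma(k,k'))$ and the symmetry $K(k)_{\alpha,\beta}=K(k)_{\beta,\alpha}$, I obtain
\[
\partial_\lambda \tilde c_\gamma(\lambda,k) = \sum_{\alpha\in\cI_k} K(k)_{\alpha,\gamma}\, \tilde c_\alpha^*(\lambda,k) + T_\lambda^* R_\gamma(k) T_\lambda,
\]
with an analogous equation for $\tilde c_\gamma^*(\lambda,k)$ involving an error $\tilde R_\gamma^*(k)$. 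Here $R_\gamma(k)$ collects all $\cE_\gamma(k,k')$-contributions; schematically $R_\gamma(k)$ is a sum of terms of the form $\sum_{k',\beta\in\cI_{k'}} K(k')_{\gamma,\beta}\,\cE_\gamma(k,k')\, c_\beta^*(k')$ (plus an order-reversed piece with $\cE$ on the right of $c^*$).

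If the errors were absent, the closed linear system is integrated explicitly by the real symmetric matrix $K(k)$, producing the bosonic solution
\[
c^{\textnormal{bos}}_\gamma(\lambda,k) = \sum_{\alpha\in\cI_k}\cosh(\lambda K(k))_{\alpha,\gamma}\, c_\alpha(k) + \sum_{\alpha\in\cI_k}\sinh(\lambda K(k))_{\alpha,\gamma}\, c_\alpha^*(k).
\]
By \cref{lm:K} we have $\|K(k)\|_\op \leq \|K(k)\|_{\mathrm{HS}}\leq C\hat V(k)$, and since $\|\hat V\|_1<\infty$, the matrices $\cosh(\lambda K(k))$ and $\sinh(\lambda K(k))$ have uniformly bounded operator norms in $\lambda\in[-1,1]$ and $k\in\north$. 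Duhamel's principle then yields
\[
\mathfrak{E}_\gamma(\lambda,k) = \int_0^\lambda \sum_{\alpha\in\cI_k}\bigl[\cosh((\lambda-s)K(k))_{\alpha,\gamma}\, T_s^* R_\alpha(k) T_s + \sinh((\lambda-s)K(k))_{\alpha,\gamma}\, T_s^* \tilde R_\alpha^*(k) T_s\bigr]\, ds.
\]

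Taking the norm, summing over $\gamma\in\cI_k$, and applying Cauchy--Schwarz with the uniform bound on the hyperbolic matrix functions reduces the task to controlling $\sum_\alpha \|T_s^* R_\alpha(k) T_s\psi\|$ (and the analogous $\tilde R^*$-sum). The relevant inputs are: (i) the pointwise bound $|K(k')_{\gamma,\beta}|\leq C\hat V(k')/M$ from \cref{lm:K}; (ii) the Hilbert--Schmidt-type estimate $\sum_\gamma \|\cE_\gamma(k,k')\chi\|\leq CM^{3/2}N^{-2/3+\delta}\|\cN\chi\|$ from \cref{lem:ccr}; (iii) the weighted pair-operator bound $\|\sum_\beta f_\beta c_\beta^*(k')\chi\|\leq \|f\|_2\,\|(\cN_\delta+1)^{1/2}\chi\|$ from \cref{lm:c-N}; and (iv) the stability estimates $T_s^*(\cN+1)T_s\leq C(\cN+1)$ and $T_s^*\cN_\delta T_s \leq C(\cN_\delta+1)(\cN+1)$ from \cref{lm:growthN}, which let me commute $T_s,T_s^*$ past the number operators at constant cost. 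The factor $\|\hat V\|_1<\infty$ then absorbs the sum over $k'\in\north$.

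The main obstacle is recovering the clean power $MN^{-2/3+\delta}$ rather than the naive $M^{3/2}N^{-2/3+\delta}$ that a direct application of \cref{lem:ccr} would yield. The missing $M^{-1/2}$ is saved by working with the \emph{pointwise} entrywise bound on $K$ (each entry of size $\hat V(k')/M$) instead of with $\|K\|_{\mathrm{HS}}$, and by summing over $\beta$ against $c_\beta^*(k')$ and over $\gamma$ against $\cE_\gamma(k,k')$ in two separate Cauchy--Schwarz steps rather than one combined one. A secondary care point is that $\cE_\gamma(k,k')$ does not commute with $c_\beta^*(k')$, so the order-reversed piece of $R_\gamma(k)$ must be treated by first commuting $\cE$ back to the left at the cost of controlled corrections, or by estimating it directly through the same number-operator bounds. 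Integrating the resulting estimate in $s\in[0,\lambda]$ yields $\sum_\gamma \|\mathfrak{E}_\gamma(\lambda,k)\psi\|\leq CMN^{-2/3+\delta}\|(\cN_\delta+M)^{1/2}(\cN+1)\psi\|$, and the same bound with $\mathfrak{E}_\gamma^*$ follows by taking adjoints.
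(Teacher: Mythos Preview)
Your approach is correct and is essentially the one used in \cite[Lemma~7.1]{BNPSS}, which the paper simply cites for this lemma: derive the first-order ODE for $T_\lambda^* c_\gamma(k) T_\lambda$ from the approximate CCR of \cref{lem:ccr}, solve the homogeneous part with the $\cosh/\sinh$ propagator, and write the error via Duhamel; then estimate $\sum_\gamma\|R_\gamma(k)\,T_s\psi\|$ using the entrywise bound $|K(k')_{\alpha,\beta}|\leq C\hat V(k')/M$, the $c^*$--bound \cref{eq:cg-Nsum}, the CCR error bounds, and finally \cref{lm:growthN}.

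Two small points worth sharpening. First, when you pass from $\sum_\gamma\|\sum_\alpha \cosh((\lambda-s)K)_{\alpha,\gamma}\,X_\alpha\|$ to $\sum_\alpha\|X_\alpha\|$, a bare operator--norm bound on $\cosh$ is not enough; you should either use the entrywise estimate $|\cosh(\lambda K)_{\alpha,\gamma}-\delta_{\alpha,\gamma}|\leq C/M$ (which follows from $|K_{\alpha,\beta}|\leq C/M$) so that each column has $\ell^1$--norm $\leq C$, or keep the Cauchy--Schwarz step but then control $(\sum_\alpha\|X_\alpha\|^2)^{1/2}$ with the $\ell^2$ CCR bound \cref{eq:ccr-0} rather than \cref{eq:ccr-1}. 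Second, the factor $(\cN_\delta+M)^{1/2}$ in the target is exactly what \cref{eq:cg-Nsum} produces when you sum $\sum_\beta\|c_\beta^*(k')\chi\|$; this is where the ``missing'' $M^{1/2}$ you worry about is absorbed, so your accounting $M^{-1}\cdot M^{1/2}\cdot M^{3/2}=M$ is correct. The commutation of $\cE_\gamma(k,k')$ with $\cN$ and $\cN_\delta$ (it is a sum of $a^*a$--terms) lets you move the number operators through freely, handling both orderings of $\cE_\gamma$ and $c_\beta^*$.
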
 
In the next lemma, we control the action of $Z$ in an analogous fashion.
\begin{lemma}[Approximate bosonic one--particle unitary]\label{lm:Z1Z2} 
Assume $\| \hat{V} \|_1 < \infty$. Let $M \gg R^2 N^{2\delta}$. Then for every $\ell \in \north$, $\gamma \in \cI_\ell$, and $\lambda \in [-1,1]$ we have
\begin{equation} \begin{split} Z_{\lambda}^* c_\gamma (\ell) Z_{\lambda}  &= \sum_{\beta \in \cI_\ell} \exp (\lambda L (\ell))_{\gamma,\beta} c_\beta (\ell) + \mathfrak{F}_\gamma (\lambda , \ell )
\end{split} \end{equation} 
where there exists a $C > 0$ such that for all $\psi \in \fock$ we have
\begin{equation}\label{eq:frakFwtF} \begin{split} 
\sum_{\gamma \in \cI_\ell} \| \mathfrak{F}_\gamma (\lambda , \ell ) \psi \| &\leq  C  N^{-2/3+ \delta} M^{3/2}  \| \cN_\delta^{1/2} \cN  \psi \| \;.
\end{split} \end{equation} 
\end{lemma}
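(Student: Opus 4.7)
The plan is to compare $Z_\lambda^* c_\gamma(\ell) Z_\lambda$ to its exact bosonic evolution via a Duhamel expansion in $\lambda$, with the error controlled by the approximate CCR from \cref{lem:ccr} and the Hilbert--Schmidt bound on $L(k)$ from \cref{lm:LwtL}.

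Setting $h_\gamma(\lambda) := Z_\lambda^* c_\gamma(\ell) Z_\lambda$ and writing $A := \sum_{k \in \north}\sum_{\alpha,\beta \in \cI_k} L(k)_{\alpha,\beta} c_\alpha^*(k) c_\beta(k)$ for the anti-Hermitian generator of $Z_\lambda$, the starting point is $\frac{d}{d\lambda} h_\gamma = Z_\lambda^*\,[c_\gamma(\ell), A]\,Z_\lambda$. Using $[c_\gamma(\ell), c_\beta(k)] = 0$ and $[c_\gamma(\ell), c_\alpha^*(k)] = \delta_{\gamma,\alpha}(\delta_{\ell,k} + \cE_\gamma(\ell,k))$ from \cref{lem:ccr}, the commutator splits into an exact bosonic part and an error term. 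Letting $r_\gamma(\mu) := Z_\mu^* \sum_{k \in \north}\sum_{\beta \in \cI_k} L(k)_{\gamma,\beta}\, \cE_\gamma(\ell,k)\, c_\beta(k)\, Z_\mu$, the inhomogeneous linear ODE $h_\gamma'(\lambda) = \sum_\beta L(\ell)_{\gamma,\beta} h_\beta(\lambda) + r_\gamma(\lambda)$ with initial condition $h_\gamma(0) = c_\gamma(\ell)$ integrates via Duhamel to
\[
h_\gamma(\lambda) = \sum_{\beta \in \cI_\ell} [e^{\lambda L(\ell)}]_{\gamma,\beta}\, c_\beta(\ell) + \int_0^\lambda \sum_{\beta \in \cI_\ell} [e^{(\lambda-\mu)L(\ell)}]_{\gamma,\beta}\, r_\beta(\mu)\, d\mu \, ,
\]
so $\mathfrak{F}_\gamma(\lambda,\ell)$ equals the integral term.

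The estimate of $\sum_\gamma \|\mathfrak{F}_\gamma\psi\|$ then proceeds in two stages. Since $L(\ell)$ is antisymmetric, $e^{(\lambda-\mu)L(\ell)}$ is orthogonal; a Cauchy--Schwarz in the $\mu$-integral combined with this orthogonality yields $\sum_\gamma \|\mathfrak{F}_\gamma\psi\|^2 \leq |\lambda|\int_0^\lambda \sum_\beta \|r_\beta(\mu)\psi\|^2 d\mu$, and a scalar Cauchy--Schwarz in $\gamma$ (using $|\cI_\ell| \leq M$) then produces $\sum_\gamma \|\mathfrak{F}_\gamma\psi\| \leq \sqrt{M}\, |\lambda|\, \sup_{\mu \in [0,\lambda]} (\sum_\beta \|r_\beta(\mu)\psi\|^2)^{1/2}$. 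The inner quantity is controlled by Minkowski's inequality applied to the $k$-sum. For fixed $k$, the pointwise operator bound $\|\cE_\beta(\ell,k)\chi\| \leq CMN^{-2/3+\delta}\|\cN\chi\|$ (a consequence of \cref{eq:ccr-0}, since $|\cE_\beta|^2 \leq \sum_\beta |\cE_\beta|^2$) absorbs $\cE_\beta$ at the cost of a factor $MN^{-2/3+\delta}$; after Cauchy--Schwarz in $\alpha$ the factor $\sum_\beta \sum_\alpha L(k)_{\beta,\alpha}^2 = \|L(k)\|_{\mathrm{HS}}^2$ is controlled by $C\hat V(k)^2$ via \cref{lm:LwtL}. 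The pair-operator sum $\sum_\alpha \|\cN c_\alpha(k) Z_\mu\psi\|^2$ is handled using $\cN c_\alpha = c_\alpha(\cN-2)$ (and the vanishing of $c_\alpha$ on states with $\cN \leq 1$), the operator inequality $\sum_\alpha c_\alpha^*(k) c_\alpha(k) \leq \cN_\delta$ from \cref{lm:c-N}, and the stability bound \cref{eq:ZNdelta} of \cref{lm:growthN}, producing $\sum_\alpha \|\cN c_\alpha(k) Z_\mu\psi\|^2 \leq C\|\cN_\delta^{1/2}\cN\psi\|^2$. Collecting these estimates and summing over $k$ (using $\|\hat V\|_1 < \infty$) gives $(\sum_\beta \|r_\beta\|^2)^{1/2} \leq CMN^{-2/3+\delta}\|\cN_\delta^{1/2}\cN\psi\|$, which combined with the previous bound yields \cref{eq:frakFwtF}.

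The main obstacle is obtaining the sharp $M^{3/2}$ scaling rather than the looser $M^2$ that a direct estimate via $\sum_\beta \|r_\beta\|$ would produce. The correct power emerges only through the combination of (i) passing through $(\sum_\beta \|r_\beta\|^2)^{1/2}$ so that the orthogonality of $e^{(\lambda-\mu)L(\ell)}$ can be exploited in the $\ell^2$-sense over $\beta$, and (ii) using the operator (rather than $\ell^1$-type) form of the CCR error bound from \cref{lem:ccr}, which lets us absorb $\cE_\beta$ before touching $L$ and thus apply $\|L(k)\|_{\mathrm{HS}}$ to the whole matrix rather than row by row.
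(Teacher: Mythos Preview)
Your proof is correct. The approach differs from the paper's mainly in organization: the paper iterates the integral equation $n_0$ times to reconstruct $e^{\lambda L(\ell)}$ as a Taylor series and then sends $n_0\to\infty$ to kill two of the three error terms, whereas you solve the linear ODE in closed form by Duhamel and thereby land directly on a single integral remainder. Your use of the orthogonality of $e^{(\lambda-\mu)L(\ell)}$ to pass from $\sum_\gamma$ to $\sum_\beta$ in the $\ell^2$ sense is the clean replacement for the paper's Cauchy--Schwarz in $(\gamma,\alpha,\beta)$ together with the bound $\|L(\ell)^n\|_{\mathrm{HS}}\leq C^n$; both routes ultimately feed the same ingredients---the CCR error bound \cref{eq:ccr-0}, the Hilbert--Schmidt estimate $\|L(k)\|_{\mathrm{HS}}\leq C\hat V(k)$ from \cref{lm:LwtL}, the pair bound \cref{eq:c-N0}, and the stability \cref{eq:ZNdelta}---into the final $M^{3/2}N^{-2/3+\delta}$ estimate. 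Your version is somewhat more economical and makes the role of the one--boson orthogonality more transparent.
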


\begin{proof}
Recall that $L$ is antisymmetric; hence $Z_{\lambda}^*$ has the same form as $Z_{\lambda}$, but with $L$ replaced by $-L$. For $\lambda \in [-1,1]$ we compute 
\[ \begin{split} \frac{\di}{\di\lambda} 
\, Z_{\lambda}^* c_\gamma (\ell) Z_{\lambda} =  \sum_{\beta \in \cI_\ell} L  (\ell)_{\gamma,\beta}  Z_{\lambda}^* c_\beta (\ell) Z_{\lambda} + \sum_{k \in \north : \gamma \in \cI_k} \sum_{\beta \in \cI_k} L (k)_{\gamma,\beta} \, Z_{\lambda}^*  \cE_\gamma (\ell , k) c_\beta (k) Z_{\lambda} \end{split} \]
with the error operator $\cE_\gamma (\ell , k)$ introduced in \cref{eq:approximateCCR}. In integral form, we obtain 
\[
\begin{split}
Z_{\lambda}^* c_\gamma (\ell) Z_{\lambda} = \; &c_\gamma (\ell) + \sum_{\beta \in \cI_\ell} L (\ell)_{\gamma,\beta} \int_0^\lambda \di\tau \, Z_{\tau}^* c_\beta (\ell) Z_{\tau} \\ &+ \sum_{k \in \north : \gamma \in \cI_k} \sum_{\beta \in \cI_k} L (k)_{\gamma,\beta} \int_0^\lambda \di\tau \, Z_{\tau}^* \cE_\gamma (\ell , k) c_\beta (k) Z_{\tau}\;.
\end{split}
\]
Iterating $n_0$ times, we find (with $L (\ell)^n_{\gamma,\beta} = \left(L (\ell)^n\right)_{\gamma,\beta}$)
\[
\begin{split} 
Z_{\lambda}^* c_\gamma (\ell) Z_{\lambda} = \; &\sum_{n=0}^{n_0} \frac{\lambda^n}{n!}  \sum_{\beta \in \cI_\ell} L (\ell)^n_{\gamma,\beta} \, c_\beta (\ell) +  \sum_{\beta \in \cI_\ell} L (\ell)^{n_0+1}_{\gamma,\beta}  \int_0^\lambda \di\tau \frac{(\lambda - \tau)^{n_0}}{n_0!} \, Z_{\tau}^* c_\beta (\ell) Z_{\tau} 
 \\ &+ \sum_{n=0}^{n_0} \sum_{k \in \north} \sum_{\beta \in \cI_k \cap \cI_\ell} \sum_{\alpha \in \cI_k} L (\ell)^n_{\gamma,\beta} L (k)_{\beta, \alpha} \int_0^\lambda \di\tau \frac{(\lambda-\tau)^n}{n!} \, Z_{\tau}^* \cE_\beta (\ell ,k) c_\alpha (k) Z_{\tau} 
\end{split}
\]
where, in the last line, for $n=0$, we have $L(\ell)^0_{\gamma ,\beta} = \delta_{\gamma,\beta}$. Thus, completing the first sum to reconstruct the exponential, we have 
\[ Z_{\lambda}^* c_\gamma (\ell) Z_{\lambda} = \sum_{\beta \in \cI_\ell}  \exp (\lambda L (\ell))_{\gamma,\beta} \, c_\beta (\ell) + \mathfrak{F}_{\gamma} (\lambda, \ell) \]
with error term
\[
\begin{split}
\mathfrak{F}_\gamma (\lambda,\ell) = \; & - \sum_{n=n_0+1}^\infty \frac{\lambda^n}{n!}  \sum_{\beta \in \cI_\ell}  L(\ell)^n_{\gamma,\beta}  \, c_\beta (\ell) + \sum_{\beta \in \cI_\ell} L (\ell)^{n_0+1}_{\gamma,\beta}  \int_0^\lambda \di\tau \frac{(\lambda - \tau)^{n_0}}{n_0!}  \, Z_{\tau}^* c_\beta (\ell) Z_{\tau}  \\ &+ \sum_{n=0}^{n_0} \sum_{k \in \north} \sum_{\beta \in \cI_\ell \cap \cI_k} \sum_{\alpha \in \cI_k} L (\ell)^n_{\gamma,\beta} L (k)_{\beta, \alpha}  \int_0^\lambda \di\tau \frac{(\lambda-\tau)^n}{n!} \, Z_{\tau}^* \cE_\beta (\ell ,k) c_\alpha (k) Z_{\tau}
\end{split}
\] 
for an arbitrary $n_0 \in \bN$. This error term can be estimated by
\begin{align*}
\sum_{\gamma \in \cI_\ell} &\| \mathfrak{F}_\gamma (\lambda, \ell) \psi \| \\ \leq \; &\sum_{n > n_0} \frac{\lambda^n}{n!}  \sum_{\gamma,\beta \in \cI_\ell} |L(\ell)^n_{\gamma,\beta}| \| c_\beta (\ell) \psi \| + \sum_{\gamma,\beta \in \cI_\ell} | L(\ell)^{n_0+1}_{\gamma,\beta}|  \int_0^\lambda \di\tau \frac{(\lambda-\tau)^{n_0}}{n_0!}  \, \| c_\beta (\ell) Z_{\tau} \psi \| \\ &+ \sum_{n=0}^{n_0}  \sum_{k \in \north} \sum_{\gamma \in \cI_\ell, \beta \in \cI_k \cap \cI_\ell , \alpha \in \cI_k}  |L (\ell)^n_{\gamma,\beta} | |L(k)_{\beta, \alpha}|  \int_0^\lambda \di\tau \frac{(\lambda-\tau)^n}{n!} \, \| \cE_\beta (\ell ,k) c_\alpha (k) Z_{\tau} \psi \| \\
=: \; & \text{I} + \text{II} + \text{III} \;.  \tagg{eq:I-II-III}
\end{align*}

We estimate 
\[
\text{I} \leq M^{1/2}  \sum_{n > n_0} \frac{\lambda^n}{n!} \| L (\ell)^n \|_\HS  \bigg( \sum_{\beta \in \cI_\ell} \| c_\beta (\ell) \psi \|^2 \bigg)^{1/2}\;.
\]
With \cref{lm:LwtL}, we obtain $\| L(\ell)^n \|_\HS \leq C^n$, uniformly in $N$ and $\ell$. From \cref{lm:c-N} then
\begin{equation}\label{eq:I-Z1} 
\text{I} \leq M^{1/2}  \| \cN_\delta^{1/2} \psi \| \sum_{n > n_0} \frac{C^n}{n!} \;.
\end{equation} 

Similarly, using the invariance of $\cN$ w.\,r.\,t.\ conjugation with $Z_\tau$, we find
\begin{equation}
\label{eq:II-Z1}
\text{II} \leq \frac{C^{n_0}}{n_0!} M^{1/2}  \int_0^\lambda \di\tau \| \cN_\delta^{1/2} Z_\tau \psi \| \leq  \frac{C^{n_0}}{n_0!} M^{1/2} \| \cN^{1/2} \psi \|\;.
 \end{equation}
 
Let us finally consider the last term on the r.\,h.\,s.\ of \eqref{eq:I-II-III}. We have
\[
\text{III} \leq \sum_{n=0}^\infty \frac{\lambda^n}{n!} \sum_{k \in \north} \Bigg(\!\! \sum_{\substack{\gamma \in \cI_\ell,\\\alpha \in \cI_k,\\\beta \in \cI_k \cap \cI_\ell}} |L (\ell)^n_{\gamma,\beta}|^2 |L(k)_{\beta, \alpha}|^2 \Bigg)^{1/2} \!\int_0^\lambda \di\tau \Bigg(\!\! \sum_{\substack{\gamma \in \cI_\ell,\\\alpha \in \cI_k,\\\beta \in \cI_k \cap \cI_\ell}} \| \cE_\beta (k,\ell) c_\alpha (k) Z_{\tau} \psi \|^2 \Bigg)^{1/2}\!\!.
\]
Using 
\[\begin{split}
\Bigg(  \sum_{\substack{\gamma \in \cI_\ell,\\\alpha \in \cI_k,\\\beta \in \cI_k \cap \cI_\ell}} |L (\ell)^n_{\gamma,\beta}|^2 |L(k)_{\beta, \alpha}|^2 \Bigg)^{1/2} &\leq \| L (\ell)^n \|_\HS \| L(k) \|_\HS  \leq C^n \hat{V} (k) \;,
\end{split}
\] 
the bound \cref{eq:ccr-0}, the relation $\cN c_\alpha (k) = c_\alpha (k) (\cN-2)$, and \cref{lm:c-N}, we find
\begin{equation*}
\text{III} \leq C \sum_{k \in \north} \hat{V} (k)  N^{-2/3+ \delta} M^{3/2} \int_0^\lambda \di\tau \, \| \, \cN^{1/2}_\delta \cN Z_\tau \psi \|  \; .  \end{equation*}
With $\| \hat{V} \|_1 < \infty$ and \cref{lm:growthN}, we conclude that
\begin{equation}
\label{eq:III-Z1}
\text{III} \leq C N^{-2/3+ \delta} M^{3/2}  \| \cN_\delta^{1/2} \cN  \psi \|  \; .
\end{equation}

Since the r.\,h.\,s.\ of both \cref{eq:I-Z1} and \cref{eq:II-Z1} vanishes as $n_0 \to \infty$ (and since \cref{eq:III-Z1} does not depend on $n_0$), we arrive at \cref{eq:frakFwtF}.  
\end{proof} 

\section{Linearization of the Kinetic Energy} 

We will use \cref{lm:Z1Z2} to show that \cref{eq:diagonal-bos} and \cref{eq:diagonal-bos2} hold approximately true on states with few excitations. What is still missing to conclude the argument explained in \cref{sec:corrH} is the invariance of $\bH_0 - \bD_\textnormal{B}$ w.\,r.\,t.\ the action of the approximate Bogoliubov transformations \cref{eq:3BT}. The proof is based on the fact that the commutators of $\bH_0$ and $\bD_\textnormal{B}$ with the $c^*$--operators are approximately the same, as described by the following lemma.
\begin{lemma}[Kinetic commutators]\label{lm:HD-comm} 
Let $RM^{1/2} \leq N^{1/3}$. For all $k \in \north$ and all $\alpha \in \cI_k$, we have 
\begin{equation}\label{eq:H0D-comm}
\begin{split}
[\bH_0 , c_\alpha^* (k) ] &= 2\hbar \kappa \lvert k \cdot \hat{\omega}_\alpha\rvert c_\alpha^* (k) + \hbar \mathfrak{E}_\alpha^\textnormal{lin} (k)^* \\
[\bD_\textnormal{B} , c_\alpha^* (k) ] &= 2\hbar \kappa \lvert k \cdot \hat{\omega}_\alpha \rvert c_\alpha^* (k) + \hbar \mathfrak{E}^\textnormal{B}_\alpha (k)^*
\end{split}
\end{equation} 
where there exists a $C > 0$ such that for all $f \in \ell^2({\Ik})$ and all $\psi \in \fock$ we have
\begin{equation}\label{eq:lin-err}
\begin{split} 
\sum_{\alpha \in \cI_k} \Big\|  \mathfrak{E}_\alpha^\textnormal{lin} (k) \psi \Big\| &\leq C \lvert k\rvert \| \cN_\delta^{1/2} \psi \| \;,\\ 
\Big\| \sum_{\alpha \in \cI_k} f_\alpha \mathfrak{E}_\alpha^\textnormal{lin} (k) \psi \Big\| &\leq C \lvert k\rvert M^{-1/2} \| f \|_2 \| \cN_\delta^{1/2} \psi \| \;,\\ 
\sum_{\alpha \in \cI_k} \| \mathfrak{E}_\alpha^{\textnormal{B}} (k) \psi \| &\leq C R^3 M^{3/2} N^{-2/3+\delta} \| \cN^{1/2}_\delta  \cN \psi \| \;.
\end{split}
\end{equation} 
\end{lemma}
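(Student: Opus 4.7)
The proof splits into the two identities of \cref{eq:H0D-comm}, each followed by the corresponding remainder estimate.

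\textbf{Commutator with $\bH_0$.} I compute directly. Writing $c^*_\alpha(k)$ as a normalized sum of $a_p^* a_{p-\sigma_\alpha k}^*$, with $\sigma_\alpha=+1$ for $\alpha\in\Ikp$ and $\sigma_\alpha=-1$ for $\alpha\in\Ikm$, the commutator reproduces the same sum weighted by
\[ e(p)+e(p-\sigma_\alpha k)\;=\;\hbar^2\bigl(2\sigma_\alpha p\cdot k-|k|^2\bigr)\;,\]
valid since $p\in\BFc$ and $p-\sigma_\alpha k\in\BF$. Splitting $p=\kF\hat\omega_\alpha+r$ and using $\hbar\kF=\kappa$, the leading piece $2\hbar\kappa\sigma_\alpha\hat\omega_\alpha\cdot k=2\hbar\kappa|k\cdot\hat\omega_\alpha|$ yields the claimed main term (the absolute value appears because $\sigma_\alpha\hat\omega_\alpha\cdot k>0$ in both hemispheres by the definition of $\cI_k^\pm$). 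The remainder is a weighted pair operator $c_\alpha^{g*}(k)$ in the sense of \cref{lm:c-N} with $g(p,k)=\hbar(2\sigma_\alpha r\cdot k-|k|^2)$. The patch geometry ($|r|\leq C\kF/\sqrt M+R$) together with the hypothesis $RM^{1/2}\leq N^{1/3}$ gives $\hbar|r|\leq CM^{-1/2}$ and $\hbar|k|\leq M^{-1/2}$, hence $\|g\|_\infty\leq C|k|M^{-1/2}$. Applying the first and third estimates of \cref{lm:c-N} to $\mathfrak{E}^\textnormal{lin}_\alpha(k)=c^g_\alpha(k)$ then delivers the first two bounds in \cref{eq:lin-err}.

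\textbf{Commutator with $\bD_\B$.} From $[c^*_\beta,c^*_\alpha]=0$ and \cref{lem:ccr} one has
\[[c^*_\beta(\ell)c_\beta(\ell),c^*_\alpha(k)]\;=\;\delta_{\alpha,\beta}\,c^*_\beta(\ell)\bigl(\delta_{k,\ell}+\cE_\alpha(k,\ell)\bigr)\;.\]
Summing against $2\hbar\kappa|\ell\cdot\hat\omega_\beta|$ isolates the $\ell=k$ contribution as the main term and identifies
\[\hbar\mathfrak{E}^\B_\alpha(k)^* \;=\; 2\hbar\kappa\sum_{\ell\in\north:\,\alpha\in\cI_\ell}|\ell\cdot\hat\omega_\alpha|\,c^*_\alpha(\ell)\,\cE_\alpha(k,\ell)\;.\]

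\textbf{Main obstacle.} Controlling $\sum_\alpha\|\mathfrak{E}^\B_\alpha(k)\psi\|$ is the delicate step, because both $\cE_\alpha(k,\ell)$ and $c_\alpha(\ell)$ depend genuinely on $\alpha$, and the $\ell^1$-in-$\alpha$ bound \cref{eq:ccr-1}, stated for a single test vector, cannot be applied directly to the $\alpha$-indexed family $\{c_\alpha(\ell)\psi\}_\alpha$. My plan is: first, for fixed $\ell$, use $\cE_\alpha(k,\ell)^*=\cE_\alpha(\ell,k)$ (read off by taking the adjoint of the commutator identity in \cref{lem:ccr}) and commute $\cN$ past $c_\alpha(\ell)$ at the cost of an additive shift by $2$, since $c_\alpha$ removes two fermions; then a Cauchy--Schwarz in $\alpha$ combines the operator-valued inequality $\sum_\alpha|\cE_\alpha(\ell,k)|^2\leq(CMN^{-2/3+\delta}\cN)^2$ of \cref{eq:ccr-0} with the inequality $\sum_\alpha c^*_\alpha(\ell)c_\alpha(\ell)\leq\cN_\delta$ of \cref{eq:c-N0}, producing a per-$\ell$ estimate of order $CM^{3/2}N^{-2/3+\delta}\|\cN_\delta^{1/2}\cN\psi\|$. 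Finally, summing over $\ell\in\north$ (a set of cardinality $\lesssim R^3$) and bundling the weight $|\ell\cdot\hat\omega_\alpha|$ through a further Cauchy--Schwarz in $\alpha$ produces the claimed factor $R^3 M^{3/2}N^{-2/3+\delta}$.
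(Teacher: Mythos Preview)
Your proposal is correct and follows essentially the same route as the paper. The treatment of $[\bH_0,c_\alpha^*(k)]$ via the weighted pair operator $c_\alpha^g$ with $\|g\|_\infty\leq C|k|M^{-1/2}$ is identical to the paper's argument, and for $[\bD_\B,c_\alpha^*(k)]$ the paper likewise reduces to the approximate CCR error $\cE_\alpha(k,\ell)$ and then defers to \cite[Eq.~(8.6)]{BNPSS}, merely inserting the count $|\north|\leq CR^3$ for the $\ell$--sum; your outline reproduces exactly this structure.

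One small clarification on the step you flag as the main obstacle: the phrase ``a Cauchy--Schwarz in $\alpha$ combines $\sum_\alpha|\cE_\alpha|^2\leq(\cdot)^2$ with $\sum_\alpha c_\alpha^*c_\alpha\leq\cN_\delta$'' is slightly compressed. What actually happens is that \cref{eq:ccr-0}, being a sum of non-negative operators, gives the \emph{pointwise} operator bound $|\cE_\alpha(\ell,k)|\leq CMN^{-2/3+\delta}\cN$ for each $\alpha$ separately; applying this to $c_\alpha(\ell)\psi$ and commuting $\cN$ past $c_\alpha$ yields $\|\cE_\alpha c_\alpha\psi\|\leq CMN^{-2/3+\delta}\|c_\alpha(\cN-2)\psi\|$, and only then does Cauchy--Schwarz in $\alpha$ (against the constant sequence, contributing $M^{1/2}$) combine with \cref{eq:c-N0} to give the per-$\ell$ estimate $CM^{3/2}N^{-2/3+\delta}\|\cN_\delta^{1/2}\cN\psi\|$. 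Your ``further Cauchy--Schwarz'' for the weight $|\ell\cdot\hat\omega_\alpha|$ is unnecessary: simply bound $|\ell\cdot\hat\omega_\alpha|\leq|\ell|$ and absorb it into the $\ell$--sum (this contributes at most an extra factor of $R$, harmless for the eventual choice $R=N^\delta$).
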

%%%%%%%%%%%%%%%%%%%%%%%%%%%%%%%%%%%%%%%%5
\begin{proof}
The bounds for $\mathfrak{E}_\alpha^\textnormal{lin}$ are shown as in \cite[Lemma 8.2]{BNPSS}, keeping track of the $k$--dependence. From \cref{eq:CAR} we get 
 \begin{align*}
 [\Hbb_0,c^*_{\alpha}(k)] & =  \frac{1}{n_{\alpha}(k)}   \sum_{\substack{p\colon p\in \BFc \cap B_\alpha\\p-k \in \BF \cap B_\alpha}} (e(p)+e(p-k))a^*_p a^*_{p-k} = 2\hbar \kappa \lvert k\cdot \hat{\omega}_\alpha\rvert c_\alpha^*(k) + \hbar \Efrak^{\textnormal{lin}}_\alpha(k)^* \;,
 \end{align*}
 where, using the weighted pair operators as in \cref{lm:c-N}, $\Efrak^{\textnormal{lin}}_\alpha(k) = c^{g}_{\alpha}(k)$  with
\begin{align*}
 g(p,k) = \hbar^{-1} \Big( e(p)+e(p-k) - 2\hbar  \kappa \lvert k\cdot \hat{\omega}_\alpha\rvert \Big)  = \hbar \Big( 2 k\cdot (p - \kF \hat{\omega}_\alpha) -\lvert k\rvert^2 \Big)\;.
\end{align*}
Since $B_\alpha$ has diameter of order $N^{1/3} M^{-1/2}$ on the Fermi surface and since $p$ can be at most at distance $|k|$ from the Fermi surface, we can bound (using the assumption $|k| M^{1/2} \leq R M^{1/2} \leq N^{1/3}$)
\[
 \lvert g(p,k) \rvert \leq C \hbar \lvert k\rvert \left( \lvert p - \kF \hat{\omega}_\alpha \rvert + \lvert k\rvert \right) \leq C |k| M^{-1/2} \;.
\]
The first two estimates in \cref{eq:lin-err} follow from \cref{eq:cg-Nsum,eq:cg-N}.

The last bound in \cref{eq:lin-err}  is shown exactly as in \cite[Eq.~(8.6)]{BNPSS}, using the bound $|\north| \leq C R^3$ to sum over $l \in \north$ there.
\end{proof}

%%%%%%%%%%%%%%%%%%%%%%%%%%%%%
The invariance w.\,r.\,t.\ $T$ is established in the next lemma. This lemma can be shown as  \cite[Lemma 8.1]{BNPSS}, replacing bounds for $\mathfrak{E}_\alpha^\textnormal{lin}$ and $\mathfrak{E}_\alpha^{\textnormal{B}}$ with those established in \cref{lm:HD-comm} (and using the assumption $\sum_{k} \hat{V} (k) |k| < \infty$). We skip further details. 
\begin{lemma}[Approximate $T$--invariance of $\Hbb_0 - \Dbb_\B$] \label{lm:T-inv}
Let $ \sum_{k \in \Zbb^3} \lvert \hat{V}(k)\rvert \left(1 + \lvert k\rvert \right) < \infty$. Then there exists a $C > 0$ such that for all $\psi \in \fock$ we have
\[
\begin{split}
\lvert\langle T \psi, &(\bH_0 - \bD_\textnormal{B}) T \psi \rangle - \langle \psi, (\bH_0 - \bD_\textnormal{B}) \psi \rangle \rvert \\ &\leq  C \hbar \left( M^{-1/2} \| (\cN_\delta +1)^{1/2} \psi \|^2 + R^3 M N^{-2/3+\delta} \| \cN_\delta^{1/2} (\cN+1) \psi \| \| (\cN_\delta + 1)^{1/2} \psi \| \right) \,.
\end{split}
\]
\end{lemma}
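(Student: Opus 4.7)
The plan is to differentiate along the interpolation path $T_\lambda = \exp(\lambda G)$, where $G := \tfrac{1}{2}\sum_{k \in \north}\sum_{\alpha,\beta \in \Ik} K(k)_{\alpha,\beta}\bigl(c_\alpha^*(k) c_\beta^*(k) - c_\beta(k) c_\alpha(k)\bigr)$ is the antihermitian generator, and to bound the resulting commutator term by term. Since $G$ commutes with itself,
\[
 \langle T\psi,(\bH_0-\bD_\B) T\psi\rangle - \langle \psi,(\bH_0-\bD_\B)\psi\rangle = \int_0^1 \langle T_\lambda \psi,[\bH_0-\bD_\B,\,G]\, T_\lambda\psi\rangle\,d\lambda.
\]
By the Leibniz rule and \cref{lm:HD-comm}, the ``diagonal'' leading contributions $2\hbar\kappa|k\cdot\hat\omega_\alpha|\,c^*_\alpha(k)$ cancel exactly between $\bH_0$ and $\bD_\B$, and we are left only with error operators:
\[
[\bH_0-\bD_\B,\, c_\alpha^*(k)] = \hbar\bigl(\mathfrak{E}^{\textnormal{lin}}_\alpha(k)^* - \mathfrak{E}^{\B}_\alpha(k)^*\bigr).
\]
Therefore, after opening $[\bH_0-\bD_\B,\,c^*_\alpha c^*_\beta] = [\bH_0-\bD_\B,c^*_\alpha]c^*_\beta + c^*_\alpha[\bH_0-\bD_\B,c^*_\beta]$ (and the conjugate expression for $c_\alpha c_\beta$), the task reduces to estimating a finite sum of terms of the shape
\[
 \hbar \sum_{k \in \north} \sum_{\alpha,\beta \in \Ik} K(k)_{\alpha,\beta}\,\bigl\langle T_\lambda\psi,\ \mathfrak{E}^{\sharp}_\alpha(k)^{\#}\, c^{\#}_\beta(k)\, T_\lambda\psi\bigr\rangle,
\]
with $\sharp \in \{\textnormal{lin},\B\}$ and adjoint flags $\#$.

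The next step is to bound each such term by Cauchy--Schwarz in $(\alpha,\beta)$. For the $\mathfrak{E}^{\textnormal{lin}}$ piece I would combine the Hilbert--Schmidt estimate $\|K(k)\|_\HS \le C\hat V(k)$ (\cref{lm:K}) with the square--summed version of $\mathfrak{E}^{\textnormal{lin}}_\alpha(k) = c^g_\alpha(k)$ obtained from \cref{lm:c-N}, namely $\sum_\alpha \|\mathfrak{E}^{\textnormal{lin}}_\alpha(k)\psi\|^2 \le C|k|^2 M^{-1}\|\cN_\delta^{1/2}\psi\|^2$ (reading off $\|g\|_\infty \le C|k|M^{-1/2}$ from the proof of \cref{lm:HD-comm}), and with the standard bound $\|\sum_\beta f_\beta c_\beta(k)\psi\| \le \|f\|_2 \|\cN_\delta^{1/2}\psi\|$. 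This delivers a contribution of order
\[
 \hbar M^{-1/2}\sum_k \hat V(k)|k|\; \|\cN_\delta^{1/2} T_\lambda\psi\|^2,
\]
which under the hypothesis $\sum_k \hat V(k)(1+|k|) < \infty$ yields the first summand of the claimed bound after using \cref{lm:growthN} to replace $T_\lambda\psi$ by $\psi$ inside the norms.

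For the $\mathfrak{E}^\B$ piece I would use the $\ell^1$ bound $\sum_\alpha \|\mathfrak{E}^\B_\alpha(k)\psi\| \le C R^3 M^{3/2} N^{-2/3+\delta}\|\cN_\delta^{1/2}\cN\psi\|$ from \cref{lm:HD-comm} in conjunction with the sharper (factor $M^{-1/2}$) estimate $\|\sum_\alpha f_\alpha \mathfrak{E}^\B_\alpha(k)\psi\|$ type bound to recover the stated $R^3 M$ (instead of a naive $R^3 M^{3/2}$), contracting the $\alpha$--index against the row $K(k)_{\alpha,\cdot}$; the resulting sum $\sum_k \hat V(k)$ is again controlled by $\|\hat V\|_1 < \infty$, and $\cN$-- and $\cN_\delta$--growth under $T_\lambda$ are handled by \cref{lm:growthN}.

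The main obstacle is bookkeeping: one must organize the Cauchy--Schwarz so that the two indices $\alpha,\beta$ are each contracted against something that gives only one factor of $\cN_\delta^{1/2}$ (rather than two), and, crucially, avoid giving up an extra $M^{1/2}$ when summing the $\mathfrak{E}^\B$ errors over patches. This requires choosing, for the $\mathfrak{E}^\B$ term, to sum over $\alpha$ first with $\sup_\beta$ or an $\ell^2$--dual of $K(k)$ rather than the naive $\ell^2$--$\ell^2$ split; the rest of the argument is a routine iteration of the estimates already developed in \cref{lm:c-N,lm:HD-comm,lm:growthN}, following the scheme of \cite[Lemma~8.1]{BNPSS} but upgraded to accommodate the $k$--uniformity now available in \cref{lm:HD-comm}.
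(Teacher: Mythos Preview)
Your approach is correct and matches the paper's, which simply refers to \cite[Lemma~8.1]{BNPSS} with the upgraded $k$--uniform error bounds of \cref{lm:HD-comm}. One point of bookkeeping is slightly misdescribed in your handling of the $\mathfrak{E}^{\B}$ contribution: there is no ``$\|\sum_\alpha f_\alpha \mathfrak{E}^{\B}_\alpha(k)\psi\|$ type bound'' available in the paper---the $M^{-1/2}$ saving you need does not come from the $\mathfrak{E}^{\B}$ side but from the $c$--side. Concretely, after Cauchy--Schwarz one bounds
\[
\Big\| \sum_{\beta \in \cI_k} K(k)_{\alpha,\beta}\, c_\beta^{\#}(k)\, T_\lambda\psi \Big\| \le \| K(k)_{\alpha,\cdot} \|_2\, \| (\cN_\delta+1)^{1/2} T_\lambda\psi \| \le C \hat V(k) M^{-1/2}\, \| (\cN_\delta+1)^{1/2} T_\lambda\psi \|,
\]
using the pointwise bound $|K(k)_{\alpha,\beta}| \le C\hat V(k)/M$ from \cref{lm:K}, and then sums the remaining index $\alpha$ with the $\ell^1$ estimate $\sum_\alpha \|\mathfrak{E}^{\B}_\alpha(k) T_\lambda\psi\| \le C R^3 M^{3/2} N^{-2/3+\delta}\|\cN_\delta^{1/2}\cN\, T_\lambda\psi\|$. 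The product gives $R^3 M$ as stated; \cref{lm:growthN} then transfers the norms from $T_\lambda\psi$ to $\psi$. The rest of your argument (the $\mathfrak{E}^{\textnormal{lin}}$ piece via $\|g\|_\infty \le C|k|M^{-1/2}$ and the summability hypothesis $\sum_k \hat V(k)(1+|k|)<\infty$) is fine.
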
 

In the next lemma, we use \cref{eq:lin-err} to show the approximate invariance of $\bH_0 - \bD_\textnormal{B}$ w.\,r.\,t.\ the action of the transformation $Z$ defined in \cref{eq:3BT}. 
\begin{lemma}[Approximate $Z$--invariance of $\Hbb_0 - \Dbb_\B$] \label{lm:Z-inv} 
Let $\sum_{k \in \Zbb^3} \lvert \hat{V}(k)\rvert \left(1 + \lvert k\rvert \right) < \infty$. Then there exists a $C >0$ such that for all $\psi \in \fock$ we have
\[
\begin{split} 
\lvert \langle Z \psi , (\bH_0 - &\bD_\textnormal{B}) Z \psi \rangle - \langle \psi, (\bH_0 - \bD_\textnormal{B}) \psi \rangle \rvert \\ &\leq  C \hbar \left( M^{-1/2} \| \cN_\delta^{1/2} \psi \|^2
+ R^3 M^{3/2}  N^{-2/3+ \delta} \| \cN_\delta^{1/2} \cN^{1/2} \psi \| \| \cN_\delta^{1/2} \psi \| 
\right) \,.
\end{split}
\]
\end{lemma}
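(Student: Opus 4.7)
The plan is to use the fundamental theorem of calculus along the path $\lambda \mapsto Z_\lambda$ defined in \cref{eq:Zlambda}, writing
\begin{equation*}
\langle Z\psi, (\bH_0 - \bD_\B)Z\psi\rangle - \langle \psi, (\bH_0 - \bD_\B)\psi\rangle = \int_0^1 \di\lambda \, \frac{\di}{\di\lambda}\langle Z_\lambda\psi, (\bH_0-\bD_\B)Z_\lambda\psi\rangle.
\end{equation*}
Since the real matrix $L(k)$ is antisymmetric, the generator $G := \sum_{k\in\north}\sum_{\alpha,\beta\in\cI_k}L(k)_{\alpha,\beta}\,c_\alpha^*(k)c_\beta(k)$ of $Z_\lambda$ is antiselfadjoint, so
\begin{equation*}
\frac{\di}{\di\lambda}\langle Z_\lambda\psi, (\bH_0-\bD_\B) Z_\lambda\psi\rangle = \sum_{k\in\north}\sum_{\alpha,\beta\in\cI_k} L(k)_{\alpha,\beta}\,\langle Z_\lambda\psi,\, [\bH_0-\bD_\B,\, c_\alpha^*(k)c_\beta(k)]\, Z_\lambda\psi\rangle.
\end{equation*}
By \cref{lm:HD-comm}, the commutator $[\bH_0 - \bD_\B, c_\alpha^*(k)]$ equals $\hbar\bigl(\mathfrak{E}^{\textnormal{lin}}_\alpha(k)^* - \mathfrak{E}^{\textnormal{B}}_\alpha(k)^*\bigr)$, with no linear--in--$c^*$ contribution surviving the cancellation; hence the Leibniz rule yields
\begin{equation*}
[\bH_0 - \bD_\B,\, c_\alpha^*(k)c_\beta(k)] = \hbar\bigl(\mathfrak{E}^{\textnormal{lin}}_\alpha(k)^* - \mathfrak{E}^{\textnormal{B}}_\alpha(k)^*\bigr)c_\beta(k) - \hbar\, c_\alpha^*(k)\bigl(\mathfrak{E}^{\textnormal{lin}}_\beta(k) - \mathfrak{E}^{\textnormal{B}}_\beta(k)\bigr).
\end{equation*}

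Setting $\phi := Z_\lambda\psi$, I would control the four resulting terms separately. For the linearization error $\mathfrak{E}^{\textnormal{lin}}$, the key input is the weighted--sum bound from \cref{eq:lin-err}. Applied with weights $f_\alpha = L(k)_{\alpha,\beta}$ and followed by a Cauchy--Schwarz in the index $\beta$ using $\sum_\beta\|c_\beta(k)\phi\|^2 \leq \|\cN_\delta^{1/2}\phi\|^2$ from \cref{lm:c-N}, it produces
\begin{equation*}
\Bigl| \sum_{\alpha,\beta} L(k)_{\alpha,\beta}\, \langle\mathfrak{E}^{\textnormal{lin}}_\alpha(k)\phi, c_\beta(k)\phi\rangle \Bigr| \leq C |k|\, M^{-1/2}\, \|L(k)\|_{\textnormal{HS}}\, \|\cN_\delta^{1/2}\phi\|^2,
\end{equation*}
and the symmetric term $\langle c_\alpha(k)\phi, \mathfrak{E}^{\textnormal{lin}}_\beta(k)\phi\rangle$ is estimated identically. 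Applying $\|L(k)\|_{\textnormal{HS}} \leq C\hat{V}(k)$ from \cref{lm:LwtL} and summing over $k \in \north$ using $\sum_k \hat{V}(k)(1+|k|) < \infty$ yields the contribution $C\hbar M^{-1/2}\|\cN_\delta^{1/2}\phi\|^2$.

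For the bosonization error $\mathfrak{E}^{\textnormal{B}}$, the third bound of \cref{eq:lin-err} is only an $\ell^1$--in--$\alpha$ bound, so I would take the operator norm of $L(k)$ out: by Cauchy--Schwarz in $\beta$,
\begin{equation*}
\Bigl| \sum_{\alpha,\beta} L(k)_{\alpha,\beta}\, \langle \mathfrak{E}^{\textnormal{B}}_\alpha(k)\phi, c_\beta(k)\phi\rangle \Bigr| \leq \|L(k)\|_{\textnormal{op}} \Bigl(\sum_\alpha \|\mathfrak{E}^{\textnormal{B}}_\alpha(k)\phi\|\Bigr) \|\cN_\delta^{1/2}\phi\|,
\end{equation*}
and analogously for the other term, where one exploits that $c_\alpha^*(k)$ when paired with $\cN$-weights loses a power of $\cN^{1/2}$ thanks to $c_\alpha^*(k)\cN^{1/2} \sim (\cN+2)^{1/2}c_\alpha^*(k)$ and $\|c_\alpha^*(k)(\cN+1)^{-1/2}\|_\op$-type bounds derived from \cref{lm:c-N}; this is precisely what converts one factor $\cN$ into $\cN^{1/2}$ and delivers the combination $\|\cN_\delta^{1/2}\cN^{1/2}\phi\|\,\|\cN_\delta^{1/2}\phi\|$ stated in the lemma. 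Using $\|L(k)\|_{\textnormal{op}} \leq C\hat{V}(k)$ (the refinement noted in the remark after \cref{lm:LwtL}) together with the third bound in \cref{eq:lin-err} and $\|\hat V\|_1 < \infty$ yields the contribution $C\hbar R^3 M^{3/2} N^{-2/3+\delta}\|\cN_\delta^{1/2}\cN^{1/2}\phi\|\,\|\cN_\delta^{1/2}\phi\|$.

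Finally, to pass from the bound on $\phi = Z_\lambda\psi$ to a bound on $\psi$, I would invoke \cref{lm:growthN}: since conjugation by $Z_\lambda$ leaves $\cN$ invariant and propagates $\cN_\delta$ up to a multiplicative constant uniform in $\lambda\in[-1,1]$, one has $\|\cN_\delta^{1/2}\cN^{m/2} Z_\lambda\psi\| \leq C\|\cN_\delta^{1/2}\cN^{m/2}\psi\|$. Integrating over $\lambda\in[0,1]$ then delivers the stated estimate. The main technical obstacle is to obtain the sharp $\cN^{1/2}$ rather than $\cN$ on one of the two factors in the bosonization error term; this requires carefully tracking the action of $c_\alpha^*(k)$ on the number operator and exploiting the $\ell^2$--decoupling provided by the operator-norm bound on $L(k)$, rather than applying the $\ell^1$ estimate in \cref{eq:lin-err} more crudely.
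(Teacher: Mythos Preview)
Your overall strategy is exactly the paper's: differentiate along $\lambda\mapsto Z_\lambda$, use \cref{lm:HD-comm} so that only the error pieces $\mathfrak{E}^{\textnormal{lin}}$ and $\mathfrak{E}^{\textnormal{B}}$ survive, bound the $\mathfrak{E}^{\textnormal{lin}}$ contribution via the weighted $\ell^2$ estimate in \cref{eq:lin-err} together with \cref{lm:c-N}, bound the $\mathfrak{E}^{\textnormal{B}}$ contribution via the $\ell^1$ estimate in \cref{eq:lin-err}, feed in $\|L(k)\|_{\HS}\leq C\hat V(k)$ from \cref{lm:LwtL}, and close with \cref{lm:growthN}. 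Your use of $\|L(k)\|_{\op}$ rather than the paper's $\|L_{\alpha,\cdot}(k)\|_2\leq\|L(k)\|_{\HS}$ is a cosmetic variant yielding the same bound.

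The one place you diverge is the attempt to sharpen the $\mathfrak{E}^{\textnormal{B}}$ contribution from $\|\cN_\delta^{1/2}\cN\phi\|$ to $\|\cN_\delta^{1/2}\cN^{1/2}\phi\|$. Two remarks. First, the paper's own proof does \emph{not} achieve this: it arrives at $\|\cN_\delta^{1/2}\cN Z_\lambda\psi\|\,\|\cN_\delta^{1/2}Z_\lambda\psi\|$, and indeed when the lemma is applied in \cref{eq:inva} the error is recorded with the full $\cN$ factor. The $\cN^{1/2}$ in the printed statement is almost certainly a typo; the weaker bound with $\cN$ is what is proved and what is used. Second, your proposed mechanism for the improvement is not sound: the factor $\cN$ enters through the third line of \cref{eq:lin-err}, i.e.\ through the estimate on $\sum_\alpha\|\mathfrak{E}^{\textnormal{B}}_\alpha(k)\phi\|$ itself, not through the accompanying $c_\alpha$ or $c_\alpha^*$. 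Commuting $\cN^{1/2}$ past $c_\alpha^*(k)$ therefore cannot reduce that power. So drop the last paragraph of your proposal; with the bound $\|\cN_\delta^{1/2}\cN\psi\|\,\|\cN_\delta^{1/2}\psi\|$ your argument is complete and matches the paper.
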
  

\begin{proof} 
Recalling the definition \cref{eq:Zlambda} of the operators $Z_\lambda$, we compute 
\[
\frac{\di}{\di\lambda} \langle Z_\lambda \psi , (\bH_0 - \bD_\textnormal{B}) Z_\lambda \psi \rangle = \sum_{k \in \north} \sum_{\alpha, \beta \in \cI_k} L_{\alpha,\beta} (k) \langle Z_\lambda \psi , \left[ c_\alpha^* (k) c_\beta (k) ,  (\bH_0 - \bD_\textnormal{B}) \right]  Z_\lambda \psi \rangle \;.
\]
With \cref{eq:H0D-comm} we obtain 
\[
\begin{split} 
\hbar^{-1} \frac{\di}{\di\lambda} \langle Z_\lambda \psi , (\bH_0 - \bD_\textnormal{B}) Z_\lambda \psi \rangle  = &- \sum_{k \in \north} \sum_{\alpha, \beta \in \cI_k} L_{\alpha,\beta} (k)  \langle Z_\lambda \psi , (\mathfrak{E}_\alpha^{\textnormal{lin}} (k) - \mathfrak{E}_\alpha^\textnormal{B} (k))^* c_\beta (k) Z_\lambda \psi \rangle \\
&- \sum_{k \in \north} \sum_{\alpha, \beta \in \cI_k} L_{\alpha,\beta} (k)  \langle Z_\lambda \psi , c^*_\alpha (k)  (\mathfrak{E}_\beta^\textnormal{lin} (k) - \mathfrak{E}_\beta^\textnormal{B} (k)) Z_\lambda \psi \rangle \;.
\end{split}
\]
Hence
\[
\begin{split}
\left| \hbar^{-1} \frac{\di}{\di\lambda} \langle Z_\lambda \psi , (\bH_0 - \bD_\textnormal{B}) Z_\lambda \psi \rangle \right| \leq \; &\sum_{k \in \north} \sum_{\beta \in \Ik}  \Big\| \sum_{\alpha \in \Ik} L_{\alpha,\beta}  (k) \mathfrak{E}_\alpha^\textnormal{lin} (k) Z_\lambda \psi \Big\|  \| c_\beta (k) Z_\lambda \psi \| \\
&+\sum_{k \in \north} \sum_{\alpha \in \Ik}   \| \mathfrak{E}_\alpha^\textnormal{B} (k) Z_\lambda \psi \|  \Big\| \sum_{\beta \in \Ik} L_{\alpha,\beta} (k) c_\beta (k) Z_\lambda \psi \Big\| \;.
\end{split}
\]
Using \cref{lm:HD-comm} (and $\| L_{\alpha,\cdot}(k) \|_2 \leq \| L(k) \|_\HS$ for all $\alpha \in \cI_k$) we conclude that 
\[
\begin{split}
\Big|\hbar^{-1} \frac{\di}{\di\lambda} \langle Z_\lambda \psi , &(\bH_0 - \bD_\textnormal{B}) Z_\lambda \psi \rangle \Big| \\ \leq \; &\sum_{k \in \north} C M^{-1/2} \lvert k\rvert \sum_{\beta \in \Ik} \| L_{\cdot , \beta} (k) \|_2 \| c_\beta (k) Z_\lambda \psi \| \| \cN_\delta^{1/2} Z_\lambda \psi \| \\
&+  \sum_{k \in \north} \sum_{\alpha \in \Ik}  \| L_{\alpha, \cdot} (k) \|_2  \| \mathfrak{E}_\alpha^\text{B} (k) Z_\lambda \psi \|  \| \cN_\delta^{1/2}  Z_\lambda \psi \|  \\
\leq \; &CM^{-1/2} \sum_{k \in \north} \lvert k\rvert \norm{L(k)}_\HS \| \cN_\delta^{1/2}  Z_\lambda \psi \|^2 \\ &+ C R^3 M^{3/2} N^{-2/3+\delta}\sum_{k \in \north}  \| L (k) \|_\HS 
 \| \cN^{1/2}_\delta  \cN Z_\lambda \psi \| \| \cN_\delta^{1/2} Z_\lambda \psi \| \;.
\end{split}
\]
With \cref{lm:LwtL,lm:growthN} we obtain (since $ \sum_{k \in \Zbb^3} \lvert \hat{V}(k)\rvert \left(1 + \lvert k\rvert \right) < \infty$)
\[
\begin{split}
\left| \hbar^{-1} \frac{\di}{\di\lambda} \langle Z_\lambda \psi , (\bH_0 - \bD_\textnormal{B}) Z_\lambda \psi \rangle \right| & \leq CM^{-1/2}   \| \cN_\delta^{1/2} \psi \|^2 \\
& \quad + C R^ 3M^{3/2} N^{-2/3+\delta} 
 \| \cN_\delta^{1/2}  \cN \psi \| \| \cN_\delta^{1/2}  \psi \|  \;.
\end{split} \]
Integrating over $\lambda \in [0,1]$ we arrive at the desired bound. 
\end{proof}

\section{Proof of \Cref{thm:main}} 
\label{sec:conclusion}
We use the following proposition for localization in particle number sectors of Fock space. It is taken from \cite[Prop.~6.1]{LNSS} (given there for bosonic Fock space, but inspection of the proof shows that the symmetry/antisymmetry of the wave function does not play any role).
\begin{prop}[Particle number localization] \label{prop:loc} 
Let $\cA$ be a non--negative operator on $\cF$ with $P_i D(\cA) \subset D(\cA)$ and $P_i \cA P_j = 0$ if $|i-j| > \ell$, where $P_i = \chi (\cN = i)$. Let $f,g : [0 , \infty )  \to [0,1]$ be smooth functions with $f^2 + g^2 = 1$, $f(x) = 1$ for $x \leq 1/2$, and $f(x) = 0$ for $x \geq 1$. For $L \geq 1$, let $f_L := f (\cN / L)$ and $g_L := g (\cN/L)$.

Then, there exists a $C > 0$ (one can take $C := 2 (\| f' \|_\infty^2 + \| g' \|_\infty^2)$) such that 
\[ - \frac{C \ell^3}{L^2} \cA_\text{diag} \leq \cA - f_L \cA f_L - g_L \cA g_L \leq \frac{C \ell^3}{L^2} \cA_\text{diag} \]
where $\cA_\text{diag} = \sum_{i=0}^\infty P_i \cA P_i$.   
\end{prop}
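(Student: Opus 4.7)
The plan is to reduce the quantity $\cA - f_L \cA f_L - g_L \cA g_L$ to a sum of two double commutators, and then exploit the non--negativity of $\cA$ together with the off--diagonal decay implied by the hypothesis $P_i \cA P_j = 0$ for $|i-j|>\ell$. First I would use $f_L^2 + g_L^2 = 1$ (as functions of $\cN$, hence as operators on $\cF$) to derive the purely algebraic identity
\begin{equation*}
\cA - f_L \cA f_L - g_L \cA g_L = \tfrac{1}{2}\bigl[f_L,[f_L,\cA]\bigr] + \tfrac{1}{2}\bigl[g_L,[g_L,\cA]\bigr],
\end{equation*}
which follows by expanding each double commutator as $H^2\cA-2H\cA H+\cA H^2$ with $H=f_L$ or $H=g_L$ and then adding.

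Next I would introduce the shift decomposition $\cA = \sum_{|n|\le\ell} \cA_n$ with $\cA_n := \sum_i P_{i+n}\cA P_i$, which makes sense thanks to the assumption $P_i\cA P_j=0$ for $|i-j|>\ell$. Writing $H=h(\cN/L)$ with $h\in\{f,g\}$, both $H$ and $P_i$ act as multiplication by a scalar on each sector $P_i\cF$, so I can explicitly evaluate
\begin{equation*}
\bigl[H,[H,\cA]\bigr] = \sum_{i,j :\, |i-j|\le\ell} \bigl(h(i/L)-h(j/L)\bigr)^2 P_i \cA P_j\,.
\end{equation*}
The diagonal terms $i=j$ vanish, and by the mean value theorem the coefficients obey $(h(i/L)-h(j/L))^2 \le \|h'\|_\infty^2\,(i-j)^2/L^2$.

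The key step is then to exploit $\cA\ge 0$. For any $i\neq j$ the operator inequality $\pm (P_i \cA P_j + P_j \cA P_i) \le P_i \cA P_i + P_j \cA P_j$ holds (this is the matrix Cauchy--Schwarz arising from $\cA^{1/2} (P_i\mp P_j) \cA^{1/2} (P_i\mp P_j)\ge 0$). Multiplying by the non--negative weights $(h(i/L)-h(j/L))^2$, summing, and using $\sum_{0<|n|\le\ell} n^2\le C\ell^3$, I obtain
\begin{equation*}
\pm \bigl[H,[H,\cA]\bigr] \le C\,\|h'\|_\infty^2\,\frac{\ell^3}{L^2}\sum_i P_i\cA P_i = C\,\|h'\|_\infty^2\,\frac{\ell^3}{L^2}\,\cA_{\text{diag}}\,.
\end{equation*}
Adding this bound for $h=f$ and $h=g$ and inserting into the double--commutator identity yields the claim with the stated constant.

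The only delicate point is the matrix Cauchy--Schwarz step applied to the off--diagonal blocks $P_i\cA P_j$; it is crucial that $\cA$ is non--negative (no analogous bound would hold for a general self--adjoint $\cA$) and that one pays the price of $\cA_{\text{diag}}$ rather than $\cA$ on the right--hand side. Everything else is bookkeeping: the combinatorial sum $\sum_{|n|\le\ell} n^2$ produces the $\ell^3$ factor, and the Lipschitz control of $f,g$ produces the $L^{-2}$ factor.
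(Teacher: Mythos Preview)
Your argument is correct and is precisely the standard IMS--type localization proof: the double--commutator identity, the block decomposition via the spectral projections $P_i$, the mean--value bound on $(h(i/L)-h(j/L))^2$, and the operator Cauchy--Schwarz $\pm(P_i\cA P_j+P_j\cA P_i)\le P_i\cA P_i+P_j\cA P_j$ coming from $\cA\ge 0$. The paper does not give its own proof of this proposition but simply quotes it from \cite[Prop.~6.1]{LNSS}; the argument there is essentially the one you have written.
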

We turn to the proof of our main result.
\begin{proof}[Proof of \Cref{thm:main}] The main work is for the proof of the lower bound; the upper bound follows from the same operator estimates but using a specific trial state, for which the errors are easier to control.
\paragraph{Lower bound.}
Let $\psi_\text{gs}$ be a normalized ground state vector for the Hamilton operator $H_N$ in \cref{eq:HN0}. Since the Hartree--Fock energy arises from a restriction of the many--body variational problem to a smaller set, we have
\[ \langle \psi_\text{gs} , H_N \psi_\text{gs} \rangle \leq E_N^\textnormal{HF} \;.\]
Let $\xi_\text{gs} = R^* \psi_\text{gs}$ denote the excitation vector associated with $\psi_\text{gs}$, defined through the unitary particle--hole transformation \cref{eq:RaR}. From the definition \cref{eq:corr} of the correlation Hamiltonian we have $\langle \xi_\text{gs}, \cH_\textnormal{corr} \xi_\text{gs} \rangle \leq 0$. With \cref{lm:H0-apri} and \cref{cor:QB}, we find a $C > 0$ such that  
\begin{equation}\label{eq:aprixi}
\begin{split}
\langle \xi_\text{gs} , \bH_0 \xi_\text{gs} \rangle \leq C \hbar \;, \quad \langle \xi_\text{gs}, Q_\textnormal{B} \xi_\text{gs} \rangle \leq C \hbar \;,  \quad  \langle \xi_\text{gs} , \cE_1 \xi_\text{gs} \rangle \leq C \hbar \;.
\end{split}
\end{equation} 
The last bound follows because from \cref{lm:X} and \cref{cor:cE} we get $\cE_1 \leq C (\cH_\textnormal{corr} + \bH_0 + \hbar)$. Furthermore, from \cref{cor:apri-NN}, we have
\begin{equation} \label{eq:aprixi-N}
\langle \xi_\text{gs} , \cN \xi_\text{gs} \rangle \leq C N^{1/3}\; , \quad  \langle \xi_\text{gs} , \cN_\eps \xi_\text{gs} \rangle \leq C N^\eps \qquad \textnormal{for every $\eps > 0$.}
\end{equation}

Next we localize w.\,r.\,t.\ the number of particles. We choose smooth functions $f$ and $g$ as in \cref{prop:loc} and set $f_N := f (\cN / C_0 N^{1/3})$, $g_N := g (\cN /  C_0 N^{1/3})$ for a constant $C_0 > 0$ large enough, to be fixed below. We set $\cA = \cH_\textnormal{corr} + C \hbar$, with $C > 0$ large enough. From \cref{lm:H0-apri} we get $\cA \geq 0$. From the definition \cref{eq:corr} of $\cH_\textnormal{corr}$, combined with the bounds in \cref{cor:QB} for the operator $Q_\textnormal{B}$, in \cref{lm:X} for the exchange operator $\bX$ and in \cref{cor:cE} for the error term $\cE_2$, we conclude that 
\[ \cA \leq C (\bH_0 + \cE_1 + \hbar) \;. \]
Since $\bH_0$ and $\cE_1$ both commute with $\cN$, it also follows that $\cA_\text{diag} \leq C (\bH_0 + \cE_1 + \hbar)$. From \cref{prop:loc} (since, with the notation introduced in the proposition, $P_i \cA P_j = 0$ if $|i-j| > 4$), we find 
\[ - C N^{-2/3} (\bH_0 + \cE_1 + \hbar) \leq \cH_\textnormal{corr} - f_N \cH_\textnormal{corr} f_N - g_N \cH_\textnormal{corr} g_N \leq C N^{-2/3} (\bH_0 + \cE_1 + \hbar) \;. \]
We apply this bound to the ground state $\xi_\text{gs}$. From the a--priori bounds in \cref{eq:aprixi}, we obtain  
\begin{equation}
\label{eq:loc-xi} \langle \xi_\text{gs} , \cH_\textnormal{corr} \xi_\text{gs} \rangle \geq  \langle \xi_\text{gs} , f_N \cH_\textnormal{corr} f_N \xi_\text{gs} \rangle + \langle \xi_\text{gs} , g_N \cH_\textnormal{corr} g_N \xi_\text{gs} \rangle - C N^{-1} \;.
\end{equation} 
Since $\xi_\text{gs}$ is the ground state vector of $\cH_\textnormal{corr}$, we can estimate
\[
\langle \xi_\text{gs} , g_N \cH_\textnormal{corr} g_N \xi_\text{gs} \rangle \geq \| g_N \xi_\text{gs} \|^2 \,  \langle \xi_\text{gs}, \cH_\textnormal{corr} \xi_\text{gs} \rangle \;.
\]
With \cref{eq:loc-xi} (and since $f^2 + g^2 = 1$), we arrive at 
\begin{equation}\label{eq:loc-2}
\| f_N \xi_\text{gs} \|^2 \langle \xi_\text{gs} , \cH_\textnormal{corr} \xi_\text{gs} \rangle \geq \langle f_N \xi_\text{gs} , \cH_\textnormal{corr}  f_N \xi_\text{gs} \rangle - C N^{-1}\;.
\end{equation} 
From \cref{eq:aprixi-N}, we have, fixing $C_0$ large enough,  
\[
\| g_N \xi_\text{gs} \|^2 = \langle \xi_\text{gs}, g^2 (\cN / C_0 N^{1/3}) \xi_\text{gs} \rangle \leq \frac{1}{C_0 N^{1/3}} \langle \xi_\text{gs} , \cN \xi_\text{gs} \rangle \leq \frac{1}{2} \;.
\]
Hence $\| f_N \xi_\text{gs} \|^2 \geq 1/2$ and, from \cref{eq:loc-2}, 
\begin{equation}\label{eq:loc-fin}
\langle \xi_\text{gs} , \cH_\textnormal{corr} \xi_\text{gs} \rangle \geq \langle \xi , \cH_\textnormal{corr}   \xi \rangle - C N^{-1}
\end{equation} 
where we defined $\xi = f_N \xi_\text{gs} / \| f_N \xi_\text{gs} \| \in \chi (\cN_\textnormal{p} - \cN_\textnormal{h} = 0) \cF$ (particle number localization leaves the space invariant, since $\cN_\textnormal{p}$ and $\cN_\textnormal{h}$ commute with $\cN$). Like $\xi_\text{gs}$, the localized vector $\xi$ satisfies $\langle \xi , \cH_\textnormal{corr} \xi \rangle \leq C \hbar$ and therefore by \cref{lm:H0-apri} we get 
\begin{equation}\label{eq:apri-loc1}
\langle \xi, \bH_0 \xi \rangle \leq C \hbar \;.
\end{equation}
The advantage of working with $\xi$ is that it satisfies stronger bounds (compared with $\xi_\text{gs}$) on the number of particles. In fact, we find
\begin{equation}\label{eq:apri-loc2}
\langle \xi , \cN^m \xi \rangle \leq C^m N^{m/3} , \quad \langle \xi\;, \cN^m \cN_\eps \xi \rangle  \leq C^m N^{\eps + m/3}
\end{equation} 
for every $m \in \bN$ and $\eps > 0$ (to prove the second estimate, we used $[ \cN, \cN_\eps ] = 0$).

From \cref{eq:loc-fin}, to conclude the proof of the lower bound, it is therefore enough to show that $\langle \xi , \cH_\textnormal{corr} \xi \rangle \geq E_N^\textnormal{RPA}  - C N^{-1/3-\alpha}$, for sufficiently small $\alpha > 0$ and for all $\xi \in \chi (\cN_\textnormal{p} - \cN_\textnormal{h} = 0) \cF$ satisfying \cref{eq:apri-loc1,eq:apri-loc2}. For such vectors, it follows from \cref{lm:X}, \cref{cor:cE} and \cref{lm:QB} that, for any sufficiently small $\eps, \delta > 0$ and for $N^{2\delta} \ll M \ll N^{2/3-2\delta}$,
\begin{equation} \label{eq:Hcorrxi}
\begin{split} 
\langle \xi, \cH_\textnormal{corr} \xi \rangle \geq \; & \langle \xi, (\bH_0 + Q_\textnormal{B}^R) \xi \rangle - C \hbar \Big( N^{-1/3} + N^{-\eps/4} + N^{-(1-\gamma)/3 + 5\eps/4} + N^{-\delta /2} \\ &\hspace{5.5cm} + R^{1/2} M^{1/4} N^{-1/6+ \delta/2} + R^{-1/2}  \Big)
\end{split}
\end{equation}  
with the quadratic expression $Q_\textnormal{B}^R$ defined in \cref{eq:QBR} (notice that the definition of $Q_\textnormal{B}^R$ depends on $\delta$). Using the notation introduced in \cref{eq:H0-comm2} and in \cref{eq:heff-def}, we can write 
\begin{equation}\label{eq:H0+Q}
\langle \xi, (\bH_0 + Q_\textnormal{B}^R) \xi \rangle =  \langle \xi, (\bH_0 - \bD_\textnormal{B}) \xi \rangle + \sum_{k \in \north} 2\hbar \kappa |k| \langle \xi, h_\text{eff} (k) \xi \rangle \;.
\end{equation} 
Next, we diagonalize the quadratic Hamiltonian $h_\text{eff} (k)$ by means of the approximate Bogoliubov transformations defined in \cref{sec:BT}. Recalling \cref{eq:3BT}, we define $\eta = Z^* T^* \xi \in \chi (\cN_\textnormal{p} - \cN_\textnormal{h} = 0) \cF$. From \cref{eq:apri-loc2} and from \cref{lm:growthN}, we can control the number of particles in $\eta$ and $Z \eta = T^* \xi$: for every $m \in \bN$ we find a $C > 0$ such that
\begin{align}
\langle \eta, \cN^m \eta \rangle & \leq C N^{m/3}\;, & \langle Z  \eta, \cN^m Z \eta  \rangle &\leq C N^{m/3}\;, \label{eq:apri-eta} \\
\langle \eta, \cN^m \cN_\delta \eta \rangle & \leq C N^{\delta + m/3}\;, &   \langle Z \eta, \cN^m \cN_\delta Z \eta \rangle & \leq C N^{\delta+ m/3} \;. \label{eq:apri-eta-delta}
\end{align} 
Writing $\xi = T Z \eta$ and applying \cref{lm:T-inv} and \cref{lm:Z-inv}, we obtain 
\begin{align*}
\langle \xi, (\bH_0 - \bD_\textnormal{B}) \xi \rangle &=  \langle T Z \eta , (\bH_0 - \bD_\textnormal{B}) T Z \eta \rangle \\
&\geq \langle \eta, (\bH_0 - \bD_\textnormal{B}) \eta \rangle - C \hbar \Big( M^{-1/2}  \| (\cN_\delta + 1)^{1/2} \eta \|^2 \\
&\hspace{10em} + R^3 M^{3/2} N^{-2/3+ \delta} \| \cN_\delta^{1/2} (\cN+1) \eta \| \| (\cN_\delta + 1)^{1/2} \eta \| \Big)  \\
&\geq  \langle \eta, (\bH_0 - \bD_\textnormal{B}) \eta \rangle - C \hbar \left( M^{-1/2} N^\delta + R^3 M^{3/2} N^{-1/3+2\delta}  \right) \;. \tagg{eq:inva}
\end{align*}
We now focus on the second term on the r.\,h.\,s.\ of \cref{eq:H0+Q}. Writing $\xi = T Z \eta$, we compute first the action of $T$. We proceed here as in the proof of \cite[Lemma~10.1]{BNPSS}. Analogously to \cite[Eqs.~(10.13)]{BNPSS}
we find
\begin{align*}
& \sum_{k \in \north} \hbar \kappa |k| \langle \xi, h_\text{eff} (k) \xi \rangle \\
& = \sum_{k \in \north} 2\hbar \kappa |k| \langle T Z  \eta , h_\text{eff} (k) T Z  \eta \rangle \\
& \geq \sum_{k \in \north} \hbar \kappa \lvert k\rvert \tr \left( E(k) - D(k) - W(k) \right) + \sum_{k \in \north} \sum_{\alpha,\beta \in \cI_k} 2\hbar \kappa |k| \, \mathfrak{K} (k)_{\alpha, \beta} \langle Z \eta , c_\alpha^* (k) c_\beta (k) Z  \eta \rangle \\
& \quad - C\hbar \Big(  N^{-2/3+\delta} \| \cN^{1/2} Z  \eta \|^2   + M R^4 N^{-2/3+\delta} \| (\cN_\delta+1)^{1/2}  Z \eta \| \| (\cN_\delta + M)^{1/2} (\cN+1) Z  \eta \| \\
& \hspace{1.4cm} + M^2 R^4 N^{-4/3 + 2\delta} \| (\cN_\delta+ M)^{1/2} (\cN+1) Z  \eta \|^2 \Big) \tagg{eq:T-action0}
\end{align*}
where we introduced the $|\cI_k| \times |\cI_k|$ matrix $\mathfrak{K}$ by
\[
\begin{split} 
\left( \begin{array}{ll} \mathfrak{K} (k) & 0 \\ 0 & \mathfrak{K} (k) \end{array} \right) & := \left( \begin{array}{ll} \cosh (K(k)) & \sinh (K(k)) \\
\sinh (K(k)) & \cosh (K(k)) \end{array} \right) \\ &\hspace{.5cm} \times \left( \begin{array}{ll} D(k) + W(k) & \wt{W} (k) \\ \wt{W} (k)  & D(k) + W(k) \end{array} \right)
 \left( \begin{array}{ll} \cosh (K(k)) & \sinh (K(k)) \\
 \sinh (K(k)) & \cosh (K(k)) \end{array} \right)\,.
 \end{split}
 \]
Comparing with \cref{eq:deco-quad2}, we find $\mathfrak{K}(k) = O(k) E(k) O(k)^T$. The first error term in the square brackets on the r.\,h.\,s.\ of (\cref{eq:T-action0}) arises from \cite[Eq.~(10.10)]{BNPSS}, a bound which holds under the assumption $\| \hat{V} \|_1 < \infty$; this follows from the observation that \cite[Eq.~(10.9)]{BNPSS} can be improved to 
\[
\begin{split}
\Big| \big[ 2 \sinh (K(k)) (D(k) + W(k)) &\sinh (K(k)) + \cosh (K(k)) \wt{W} (k) \sinh (K(k)) \\ &+ \sinh (K(k)) \wt{W} (k) \cosh (K(k)) \big]_{\alpha, \alpha} \Big| \leq C \hat{V} (k) M^{-1} \; .
\end{split}
\]
The further two error terms in the square brackets arise from \cite[Eq.~(10.6)]{BNPSS}; this estimate holds for every fixed $k$. The sum over $k \in \north$ gives the additional factor $R^4$.
Using \cref{eq:apri-eta} and \cref{lem:rpaupdate} (and recalling $M \gg N^{2\delta}$) we find
\begin{align*}
 \sum_{k \in \north} 2\hbar \kappa |k| \langle \xi, h_\text{eff} (k) \xi \rangle
& \geq E_N^\textnormal{RPA} + \sum_{k \in \north} \sum_{\alpha,\beta \in \cI_k} 2\hbar \kappa |k| \, \mathfrak{K} (k)_{\alpha, \beta} \langle Z \eta , c_\alpha^* (k) c_\beta (k) Z  \eta \rangle \\
& \quad - C\hbar \Big( R^2 M^{1/4} N^{-1/6+\delta /2} + N^{-\delta/2} + M^{-1/4} N^{\delta/2} + N^{-1/3+\delta} \\
&\qquad\qquad+ M^{3/2} R^4 N^{-1/3+3\delta/2} + M^{3} R^4 N^{-2/3 + 2\delta}  \Big)\,. \tagg{eq:T-action2}
\end{align*}
Next, we compute the action of the approximate Bogoliubov transformation (approximate unitary transformation in the one--boson Hilbert space) $Z$ in the second term on the r.\,h.\,s.\ of \cref{eq:T-action2}. With \cref{lm:Z1Z2}, recalling that $\exp (L(k)) = O(k)\wt{O}(k)$, we find 
\begin{equation} \label{eq:conju-Z}
\begin{split}
\sum_{k \in \north} 2\hbar \kappa |k|  &\sum_{\alpha,\beta \in \cI_k}  \mathfrak{K}_{\alpha,\beta} (k) \langle  \eta , Z^* c_\alpha^* (k) c_\beta (k) Z  \eta\rangle \\
=  &\sum_{k \in \north}  2\hbar \kappa |k|  \sum_{\alpha,\beta \in \cI_k}  \left[ \wt{O}^T(k) O^T (k) \mathfrak{K} (k) O (k) \wt{O}(k) \right]_{\alpha,\beta} \langle  \eta , c_\alpha^* (k) c_\beta (k)  \eta \rangle \\
&+  \sum_{k \in \north}  2\hbar \kappa |k|  \sum_{\alpha,\beta \in \cI_k} \left[ \wt{O}^T(k) O^T (k) \mathfrak{K} (k) \right]_{\alpha,\beta} \langle  \eta , c_\alpha^* (k)  \mathfrak{F}_\beta (1,k)  \eta \rangle \\
 &+   \sum_{k \in \north}  2\hbar \kappa |k|  \sum_{\alpha,\beta \in \cI_k} \left[ \mathfrak{K} (k) O (k) \wt{O}(k)\right]_{\alpha,\beta} \langle  \eta , \mathfrak{F}^*_\alpha (1,k) c_\beta (k)  \eta \rangle\\
&+    \sum_{k \in \north}  2\hbar \kappa |k|   \sum_{\alpha,\beta \in \cI_k} \mathfrak{K} (k)_{\alpha,\beta} \langle  \eta , \mathfrak{F}^*_\alpha (1,k) \mathfrak{F}_\beta (k)   \eta \rangle \;.
\end{split}
\end{equation} 
By \cref{lm:Z1Z2} we can show that the contributions on the last three lines are negligible. For example, the second term can be bounded by
\[
\begin{split} 
& \Big|  \sum_{k \in \north}  2\hbar \kappa |k|  \sum_{\alpha,\beta \in \cI_k} \left[ \wt{O}^T(k) O^T (k) \mathfrak{K} (k) \right]_{\alpha,\beta} \langle  \eta , c_\alpha^* (k)  \mathfrak{F}_\beta (1,k)  \eta \rangle \Big| \\ & \leq \sum_{k \in \north}  2\hbar \kappa |k| \sum_{\beta \in \cI_k} \| \mathfrak{F}_\beta (1,k)  \eta \| \Big\| \sum_{\alpha \in \cI_k}  \left[\wt{O}^T(k) O^T (k) \mathfrak{K} (k) \right]_{\alpha,\beta} c_\alpha (k) \eta \Big\|\\
 & \leq \sum_{k \in \north} 2\hbar \kappa |k|  \sum_{\beta \in \cI_k} \| \mathfrak{F}_\beta (1,k) \eta \| \| [\wt{O}^T(k) O^T (k) \mathfrak{K} (k)]_{\beta, .}  \|_2 \| \cN_\delta^{1/2} \eta \| \\ 
 & \leq C N^{-1+ \delta} M^{3/2} \sum_{k \in \north} |k| \,  \| \wt{O}^T(k) O^T (k) \mathfrak{K} (k) \|_\HS  \| \cN_\delta^{1/2} \cN \eta \|  \| \cN_\delta^{1/2} \eta\| \;. 
 \end{split}
 \]
Recalling $\mathfrak{K} (k) = O (k) E (k)  O^T (k)$ and the expression \cref{eq:UEU} for the matrix $E (k)$, we find
\[
\| \wt{O}^T(k) O^T (k) \mathfrak{K} (k) \|_\HS  = \sqrt{2} \left( \tr \; d^2 + 2 \tr \; d^{1/2} b d^{1/2} \right)^{1/2}  \leq C M^{1/2} \;.
\]
Since $|k| < R$ for all $k \in \north$, we conclude, with the bounds  \cref{eq:apri-eta}, that
\[
\Big|  \sum_{k \in \north}   2\hbar \kappa |k| \sum_{\alpha,\beta \in \cI_k} \left[ \wt{O}^T(k) O^T (k) \mathfrak{K} (k) \right]_{\alpha,\beta} \langle \eta , c_\alpha^* (k)  \mathfrak{F}_\beta (1,k) \eta \rangle \Big|  \leq 
C N^{-2/3+ 2\delta} R^4 M^{2} \;.
\]
Proceeding similarly to bound the last two terms on the r.\,h.\,s.\ of \cref{eq:conju-Z}, we obtain 
\[ \begin{split}
& \sum_{k \in \north}  2\hbar \kappa |k| \sum_{\alpha,\beta \in \cI_k} \mathfrak{K}_{\alpha,\beta} (k) \langle  \eta , Z^* c_\alpha^* (k) c_\beta (k) Z  \eta\rangle \\
& \geq \sum_{k \in \north} 2\hbar \kappa |k|  \sum_{\alpha,\beta \in \cI_k}  \left[ \wt{O}^T(k) O^T (k) \mathfrak{K} (k) O (k) \wt{O}(k) \right]_{\alpha,\beta} \langle  \eta , c_\alpha^* (k) c_\beta (k)  \eta \rangle - C  N^{-2/3+ 2\delta} R^4 M^{2}\,. \end{split}
\]
According to \cref{eq:deco-quad3}, we have $\wt{O}^T (k) O^T (k) \mathfrak{K} (k) O (k) \wt{O} (k) = \wt{P} (k)$, with the matrix $\wt{P}$ defined as in \cref{eq:wtO-def}. From $P \geq D$ (and recalling from \cref{eq:quad-app} and \cref{eq:heff-def} the relation between $\bD_\textnormal{B}$ and $D$), we get the key lower bound 
\[
\sum_{k \in \north}  2\hbar \kappa |k| \sum_{\alpha,\beta \in \cI_k} \mathfrak{K}_{\alpha,\beta} (k) \langle  \eta , Z^* c_\alpha^* (k) c_\beta (k) Z  \eta\rangle  \geq \langle \eta , \bD_\textnormal{B} \eta\rangle - C \hbar N^{-1/3+2\delta} R^4 M^2\;.
\]
From \cref{eq:T-action2}, we obtain
\begin{align*}
\sum_{k \in \north} 2\hbar \kappa |k| \langle \xi, h_\text{eff} (k) \xi \rangle & \geq E_N^\textnormal{RPA} + \langle \eta, \bD_\textnormal{B} \eta \rangle \tagg{eq:heffD}\\
& \quad - C \hbar \Big(
R^2 M^{1/4} N^{-1/6+\delta /2} + N^{-\delta/2} + M^{-1/4} N^{\delta/2} \\ & \qquad\qquad + M^{2} R^4 N^{-1/3+2\delta} + M^{3} R^4 N^{-2/3 + 2\delta}  \Big) \;.
\end{align*}
Inserting the last equation and \cref{eq:inva} in \cref{eq:H0+Q}, we find
\[
\begin{split}
\langle \xi, (\bH_0 + Q_\textnormal{B}^R) \xi \rangle & \geq E^\textnormal{RPA}_N + \langle \eta , \bH_0 \eta \rangle \\
& \quad- C\hbar \Big( M^{-1/2} N^{\delta} + R^2 M^{1/4} N^{-1/6+\delta /2} + N^{-\delta/2} + M^{-1/4} N^{\delta/2} \\ &\qquad\qquad  + M^{2} R^4 N^{-1/3+2\delta} + M^{3} R^4 N^{-2/3 + 2\delta} \Big)\;.
\end{split}
\]
Since $\bH_0 \geq 0$, from \cref{eq:Hcorrxi} we obtain
\[
\begin{split}
\langle \xi, \cH_\textnormal{corr} \xi \rangle \geq \; &E^\textnormal{RPA}_N - C\hbar \Big( N^{-\eps /4} + N^{- (1-\gamma)/3 + 5\eps /4} + N^{-\delta/2} + R^{2} M^{1/4} N^{-1/6+\delta /2} + R^{-1 /2}  \\ &\hspace{5em}  +M^{-1/2} N^{\delta} + M^{-1/4} N^{\delta/2}  + M^{2} R^4 N^{-1/3+2\delta} + M^{3} R^4 N^{-2/3 + 2\delta} \Big) \,.
\end{split}
\]
Choosing $R= N^{\delta}$, $M = N^{C \delta}$ for a sufficiently large constant $C > 0$, $\gamma < 1$ and then both $\eps > 0$ and $\delta > 0$ small enough, we conclude that $\langle \xi, \cH_\textnormal{corr} \xi \rangle \geq E_N^\textnormal{RPA} - C N^{-1/3 - \alpha}$ for some $\alpha > 0$ and thus, from \cref{eq:loc-fin}, also that $\langle \xi_\text{gs}, \cH_\textnormal{corr} \xi_\text{gs} \rangle \geq E_N^\textnormal{RPA} - C N^{-1/3 - \alpha}$. This completes the proof of the lower bound for \cref{thm:main}.

\paragraph{Upper bound.}
Instead of working with the state $\xi = T Z \eta$ and establishing its properties through a--priori estimates, we directly use the trial state $\xi_\textnormal{trial} := T \Omega$, where the transformation $Z$ is not needed. We compute explicitly the expectation value
\[
\langle \xi_\textnormal{trial}, \Hcal_\textnormal{corr} \xi_\textnormal{trial} \rangle = \langle \xi_\textnormal{trial}, (\Hbb_0 + Q_\B + \Ecal_1 + \Ecal_2 + \Xbb) \xi_\textnormal{trial} \rangle\;.
\]
Note that by \cref{lm:growthN} we have
\begin{equation}\label{eq:err1}
\langle T \Omega, \Ncal^k T \Omega \rangle \leq C_k\;, \quad \textnormal{for }k \in \Nbb\;.
\end{equation}
Furthermore, for all $\delta > 0$, we have the simple bound for the gapped number operator 
\begin{equation}\label{eq:err2}
\Ncal_\delta \leq \Ncal\;,
\end{equation}
so that all expectations values of powers of $\Ncal$ and $\Ncal_\delta$ in $T\Omega$ are of order one w.\,r.\,t.\ $N$. By \cref{lm:T-inv} we get
\begin{align*}
\langle T\Omega, \Hbb_0 T\Omega \rangle 
& = \langle T\Omega, (\Hbb_0 - \Dbb_\B) T\Omega \rangle + \langle T\Omega, \Dbb_\B T\Omega \rangle \\
& \leq  \langle T\Omega, \Dbb_\B T\Omega \rangle + C \hbar \left( M^{-1/2} + R^3 M N^{-2/3+\delta} \right)\;.
\end{align*}
The expectation value $\langle T\Omega,\Dbb_\B T\Omega\rangle$ can be computed by applying the approximate Bogoliubov transform according to \cref{lm:T-action}. 
Expressions that are normal-ordered in terms of bosonic pairs operators vanish on $\Omega$; only the contribution of the form $c c^*$ is non-vanishing but easily seen to be of order $\hbar$. We conclude that 
\begin{equation}\label{eq:err3}\langle T\Omega, \Hbb_0 T\Omega\rangle \leq C \hbar \;.\end{equation}
The bounds \cref{eq:err1}, \cref{eq:err2}, and \cref{eq:err3} are sufficient to control all error terms in the following computation.
In fact, using \cref{lm:X} and \cref{cor:cE} the contributions of $\Ecal_1$, $\Ecal_2$, and $\Xbb$ are now found to be of order $N^{-1/3-\alpha}$ for some $\alpha > 0$. Furthermore, by \cref{lm:QB}, we can replace $Q_\B$ by the patch-decomposed $Q_\B^R$ at the cost of a only a further small error.  

It remains to compute explicitly the expectation value
\[
\begin{split}
\langle T \Omega, ( \Dbb_\B + Q_\B^R ) T\Omega \rangle = \; & \sum_{k \in \north} 2\hbar \kappa |k| \langle T\Omega, h_\text{eff} (k) T\Omega \rangle \leq  E_N^\textnormal{RPA} + C N^{-1/3-\alpha}
\end{split}
\]
for $\alpha > 0$ small enough. Here, we proceeded as in \eqref{eq:T-action0} (with $Z\eta$ replaced by $\Omega$) to implement the action of the approximate Bogoliubov transformation $T$ and used that all pair annihilation operators vanish on $\Omega$.  This completes the proof of the upper bound for \cref{thm:main}.
\end{proof}

We quickly discuss how to adapt the computation of \cite{BNPSS0} of the explicit RPA formula. The only new aspect here is the additional factor $R^2$ in the first error term.
\begin{lemma}[Explicit RPA formula]\label{lem:rpaupdate}
 Let $\| \hat{V} \|_1 < \infty$. Then
 \begin{align*}
 & \sum_{k \in \north}\!\!\! \hbar \kappa \lvert k\rvert \tr \left( E(k) - D(k) - W(k) \right) \\
 & = E^\textnormal{RPA}_N + \Ocal \left(\hbar \big( R^2 M^{1/4} N^{-1/6+\delta /2} + N^{-\delta/2} + M^{-1/4} N^{\delta/2}\big)\right)\,.
\end{align*}
\end{lemma}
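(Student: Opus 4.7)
The plan is to adapt the explicit RPA computation from \cite{BNPSS0}, tracking carefully the $|k|$--dependence of each step now that $\hat V$ is not compactly supported and $k$ ranges over the ball $|k|<R$. First I would reduce the $|\cI_k|\times|\cI_k|$ trace to an $I\times I$ trace with $I=|\cI_k^+|$. By \cref{eq:U-def} and the block structure in \cref{eq:db-intro}, $U^T E U$ is block--diagonal with blocks $\wt E:=(d^{1/2}(d+2b)d^{1/2})^{1/2}$ and $((d+2b)^{1/2}d(d+2b)^{1/2})^{1/2}$; since $X^*X$ and $XX^*$ share their nonzero spectra, these blocks have equal trace, so $\tr(E-D-W)=2\tr(\wt E-d-b)$. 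To compute $\tr(\wt E-d)$ I would use the representation $\sqrt z=\pi^{-1}\int_0^\infty \di s\,z/[\sqrt s(s+z)]$ together with the resolvent identity applied to the rank--one perturbation $\wt E^2-d^2=2g\,d^{1/2}|v\rangle\langle v|d^{1/2}$, with $g=\kappa\hat V(k)/2$. The Sherman--Morrison formula then yields the scalar identity
\[\tr\!\Big(\tfrac{1}{s+d^2}-\tfrac{1}{s+\wt E^2}\Big)=\frac{-2g F_k'(s)}{1+2gF_k(s)}\,,\qquad F_k(s):=\sum_{\alpha\in\cI_k^+}\frac{|v_\alpha|^2u_\alpha^2}{s+u_\alpha^4}\,,\]
and after integration by parts and the substitution $\lambda=\sqrt s$ the scalar formula
\[\tr(\wt E-d)=\frac{1}{\pi}\int_0^\infty\log\bigl(1+2gF_k(\lambda^2)\bigr)\,\di\lambda\,,\]
from which $\tr(\wt E-d-b)$ follows by subtracting $g\|v\|^2$.

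The next step is the continuum limit of the Riemann sums. Using \cref{lem:counting} and $\hbar k_\F=\kappa+O(N^{-1/3})$, one has $|v_\alpha|^2\simeq \tfrac{4\pi}{M}u_\alpha^2$; a surface integral on the hemisphere in polar coordinates with $\hat k$ as north pole gives
\[\int_{\mathbb{S}^2_+}\frac{|\hat\omega\cdot\hat k|^2}{\lambda^2+|\hat\omega\cdot\hat k|^2}\,\di\sigma=2\pi\bigl(1-\lambda\arctan(1/\lambda)\bigr)\,,\qquad \int_{\mathbb{S}^2_+}|\hat\omega\cdot\hat k|\,\di\sigma=\pi\,.\]
Multiplying $\tr(E-D-W)$ by $\hbar\kappa|k|$, summing over $k\in\north$, using the parity of the summand to rewrite this as $\tfrac12\sum_{k\in\bZ^3\setminus\{0\}}$, and replacing $\kappa$ by $\kappa_0$ up to an error of order $\hbar N^{-1/3}$, one recognizes exactly $E_N^\textnormal{RPA}$. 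The tail $|k|\geq R$ is absorbed using $\|\hat V\|_1<\infty$ together with the fact that, after the $\tfrac\pi 2\kappa_0\hat V(k)$ counterterm, the integrand per $k$ is of quadratic order in $\hat V(k)$.

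The error analysis splits into three sources. (a) The geometric approximation of $n_\alpha(k)^2$ in \cref{lem:counting} has relative error of order $\sqrt M\,|k|^2 N^{-1/3+\delta}$; after multiplying by $\hbar\kappa|k|$ and summing against $\hat V(k)$ over $|k|<R$, the extra $|k|^2$ accounts for the new $R^2$ prefactor in $R^2 M^{1/4} N^{-1/6+\delta/2}$. (b) The Riemann--sum--to--integral replacement on the hemisphere, with patches of diameter $\sim M^{-1/2}$, contributes the $M^{-1/4} N^{\delta/2}$ term once the $\log$ structure is taken into account. (c) The excluded equatorial strip $|\hat k\cdot\hat\omega_\alpha|<N^{-\delta}$ in $\cI_k^+$ has surface area of order $N^{-\delta}$ and produces the $N^{-\delta/2}$ error via the square--root weight of $u_\alpha^2$ in the integrand. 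I expect the main obstacle to be making the Riemann sum estimate (b) uniform in $\lambda\in(0,\infty)$, since the integrand $u^2/(\lambda^2+u^4)$ has large derivatives near $u=0$; this is tamed only by the cutoff $u_\alpha^2\geq N^{-\delta}$, which intertwines errors (b) and (c) and forces the joint scaling of $M$, $R$, $\delta$ appearing in the final statement.
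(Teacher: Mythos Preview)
Your proposal is correct and takes essentially the same approach as the paper, which simply refers back to \cite[Eqs.~(5.13)--(5.18)]{BNPSS0} and records the single updated estimate $|\log f(\lambda)-\log\tilde f(\lambda)|\le C\bigl(R^2\hat V(k)\sqrt M\,N^{-1/3+\delta}+N^{-\delta}+N^\delta/\sqrt M\bigr)$ together with the tail bounds $|\log f(\lambda)|,\,|\log\tilde f(\lambda)|\le C\hat V(k)\lambda^{-2}$, the $R^2$ entering exactly through the $|k|^2$ in the relative error of $n_\alpha(k)^2$ from \cref{eq:nalphak} as you identify. The final exponents $M^{1/4}$, $N^{-1/6+\delta/2}$, $N^{-\delta/2}$, $M^{-1/4}N^{\delta/2}$ arise from the $\lambda$--integral splitting and optimization in \cite[Eq.~(5.18)]{BNPSS0} combined with the sum over $|k|<R$ using $\|\hat V\|_1<\infty$; your sketch is compatible with this but leaves that step implicit.
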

\begin{proof}
 The proof was given in \cite[Eqs.~(5.13)--(5.18)]{BNPSS0} under the assumption that $\hat{V}$ has compact support. We only give the generalization of the main estimates in original notation. With a factor $\lvert k \rvert^2 < R^2$ (for $k\in \north$) originating from \cref{eq:nalphak} we find
 \begin{align*}
  \lvert \log f(\lambda) - \log \tilde{f}(\lambda) \rvert \leq C \left( R^2 \hat{V}(k) \sqrt{M} N^{-1/3 + \delta} + N^{-\delta} + \frac{N^\delta}{\sqrt{M}} \right) \,.
 \end{align*}
 Furthermore
 \begin{align*}
  \lvert \log f(\lambda)\rvert & \leq C \hat{V}(k) \lambda^{-2}\;, & \lvert \log \tilde{f}(\lambda)\rvert & \leq C \hat{V}(k) \lambda^{-2}\;.
 \end{align*}
Following \cite[Eq.~(5.18)]{BNPSS0} and using $\| \hat{V} \|_1 < \infty$ the proof is completed as before.
\end{proof}

\appendix

\section{Generalized Upper Bound}
\label{sec:appendix}
As an upper bound, the estimate \cref{eq:EN-asy} for the correlation energy holds under weaker assumptions on the interaction. 
\begin{theorem}[Generalized RPA upper bound]\label{thm:impr-up} 
Suppose $V : \mathbb{T}^{3} \to \bR$, $\hat{V} \geq 0$, and 
\begin{equation}\label{eq:quasiC}
\sum_{k \in \bZ^3} |k| \hat{V} (k)^2 < \infty \;.
\end{equation} 
For $k_\F > 0$ let $N := |B_\F| = | \{ k \in \bZ^3 : \lvert k\rvert \leq k_\F \}|$. Then, as $k_\F \to \infty$, we have
\begin{equation}\label{eq:up-app}
E_N \leq E_N^\textnormal{HF} + E_N^\textnormal{RPA} + o \, (\hbar)
\end{equation} 
with $E_N^\textnormal{RPA}$ as defined in \cref{eq:RPA}. 
\end{theorem}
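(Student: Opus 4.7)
The plan is to adapt the upper-bound part of the proof of \cref{thm:main}, which uses the trial state $\xi_\textnormal{trial} = T\Omega$ and directly computes $\langle T\Omega, \Hcal_\textnormal{corr} T\Omega\rangle$. The overall structure of that computation carries over; what must change are the various error estimates, which need to exploit the summability of $|k|\hat{V}(k)^2$ rather than of $\hat{V}(k)$.

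First, the a priori bounds on $T\Omega$ still hold in the new setting. Our assumption $\sum_k |k|\hat{V}(k)^2 < \infty$ together with finiteness of $\hat{V}(0)$ yields $\sum_k \hat{V}(k)^2 < \infty$; combined with $\|K(k)\|_\HS \leq C\hat{V}(k)$ from \cref{lm:K} and the proof of \cref{lm:growthN}, this provides $\langle T\Omega, \Ncal^m T\Omega\rangle \leq C_m$. The kinetic bound $\langle T\Omega, \Hbb_0 T\Omega\rangle \leq C\hbar$ now has to be derived directly rather than through \cref{lm:T-inv}: expanding $T\Omega$ via \cref{lm:T-action} and using that normal-ordered pair products vanish on $\Omega$, one obtains an expression of the form $2\hbar\kappa\sum_k\sum_\alpha |k\cdot\hat\omega_\alpha|\,\sinh^2(K(k))_{\alpha\alpha}$, which is bounded by $C\hbar\sum_k|k|\hat V(k)^2<\infty$ thanks to \cref{lm:K}.

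Second, each error-term bound has to be re-derived. The estimates of \cref{lm:X,lm:dd,cor:cE,lm:QB} were stated assuming $\|\hat V\|_1<\infty$, which is no longer available. Whenever their proofs produce a sum of the form $\sum_k \hat V(k)\,f(k)$ with polynomially-bounded $f(k)$, one inserts Cauchy--Schwarz in the form
\[
\sum_k \hat V(k) f(k) \leq \Bigl(\sum_k |k|\hat V(k)^2\Bigr)^{1/2}\Bigl(\sum_k |k|^{-1}f(k)^2\Bigr)^{1/2},
\]
and one verifies that the concrete $f$ at hand (built from the volumes of the sets $B_\F^c\cap(B_\F+k)$, $A_k^\textnormal{p}\cup A_k^\textnormal{h}$, the pair normalizations $n_\alpha(k)$, etc.) makes the second factor subleading relative to $\hbar$. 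For $\Xbb$ and $\Ecal_1$ an additional improvement relies on the fact that in the state $T\Omega$ the per-mode Bogoliubov-vacuum occupation is of order $\hat V(k)^2$ rather than $O(1)$, which again follows from \cref{lm:T-action} applied to $\Omega$.

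Once the error terms are controlled, the main contribution $\langle T\Omega, (\Hbb_0+Q_\textnormal{B}^R) T\Omega\rangle$ is computed by the same Bogoliubov diagonalization as in the proof of \cref{thm:main} and produces $E_N^\textnormal{RPA}+o(\hbar)$; the explicit RPA formula (\cref{lem:rpaupdate}) is well defined under our assumption since its integrand at each $k$ is of order $\hat V(k)^2$, summable against $|k|$. The main obstacle will be providing the quantitative per-mode occupation bound $\langle T\Omega, a^*_p a_p T\Omega\rangle$ used in $\Xbb$ and $\Ecal_1$: one has to sum patch contributions carefully, keeping the normalizations $n_\alpha(k)$ from \cref{lem:counting} under control, to obtain a bound of the correct order. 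Once this is accomplished, the parameter choices $M, R, \delta$ carry over from the proof of \cref{thm:main} with only minor modifications.
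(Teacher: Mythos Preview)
Your proposal misses the key simplification that makes the paper's proof work. The paper does \emph{not} re-derive \cref{lm:X,lm:dd,cor:cE,lm:QB} under the weaker hypothesis. Instead it exploits the fact that $T$ only creates particle--hole pairs with relative momentum $|k|<R$, so that on the trial state $\xi^{\textnormal{trial}}=T\Omega$ one may replace $\hat V(k)$ by $\hat V(k)\chi(|k|\le CN^{1/3})$ for free. For this truncated potential one has, by Cauchy--Schwarz, $\sum_{|k|\le CN^{1/3}}\hat V(k)\le CN^{1/3}$, and since on $T\Omega$ all moments of $\cN$ are $O(1)$ (not $O(N^{1/3})$ as for the ground state), the existing estimate $\|\wt\cE_1\xi\|\le CN^{-1}\|\hat V\|_1\|\cN^2\xi\|$ already yields $\langle T\Omega,\wt\cE_1 T\Omega\rangle\le C_RN^{-2/3}$. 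The exchange term is handled by the sign $\wt\bX\le 0$, and $\wt\cE_2$ vanishes on $T\Omega$ by parity, so no new per-mode occupation analysis is needed. The remaining contribution from $R<|k|\le CN^{1/3}$ in $Q_\textnormal{B}$ is controlled by a separate commutator lemma (the paper's \cref{lm:bb-R}), using that $[c_\alpha^*(k'),b(k)]$ has no leading $\delta_{k,k'}$ term when $|k'|<R<|k|$. Finally, one lets $R=R(N)\to\infty$ slowly so that the tail $\sum_{|k|>R}|k|\hat V(k)^2\to 0$ beats the $R$-dependent constants.

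Your proposed route via $\sum_k\hat V(k)f(k)\le(\sum|k|\hat V(k)^2)^{1/2}(\sum|k|^{-1}f(k)^2)^{1/2}$ does not close as stated: in several of the error estimates (e.g.\ the $\bX$ bound and the final step of \cref{lm:dd}) the relevant $f(k)$ is $O(1)$ or grows polynomially in $|k|$, so $\sum_k|k|^{-1}f(k)^2$ diverges in three dimensions. Your fallback idea of extracting an extra $\hat V(k)^2$ from per-mode occupations in $T\Omega$ is plausible in spirit but would require substantial new work; the paper's truncation trick sidesteps all of this.
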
 

\begin{rem}Expanding the logarithm, it is easy to check that the assumption \cref{eq:quasiC} guarantees that the sum defining $E_N^\textnormal{RPA}$ in \cref{eq:RPA} is finite. 
\end{rem}

\begin{proof}[Proof of \cref{thm:impr-up}] 
 We now give the proof of \cref{thm:impr-up}, explaining how to generalize the argument presented in \cref{sec:conclusion} in the paragraph devoted to the upper bound. For given $0 < R \ll N^{1/3}$, we consider the set $\Gamma^\textnormal{nor}$, defined in 
\cref{eq:north}. Note that in particular $\north$ restricts our attention to momenta $\lvert k\rvert <R$.  Moreover, for $\delta > 0$ sufficiently small, we introduce the sets $\cI_k^{\pm}$ and $\cI_k = \cI_k^+ \cup \cI_k^-$ as in \cref{eq:Ik-def}. For $k \in \Gamma^\textnormal{nor}$, we define the $|\cI_k | \times |\cI_k|$ matrix $K (k)$ as in \cref{sec:BT}. As stated in \cref{lm:K}, we have pointwise in $k \in \Gamma^\textnormal{nor}$, without using the assumption on $\hat{V}$, the bound
 \begin{equation}
 \label{eq:K-point}
 |K_{\alpha, \beta} (k) | \leq C \frac{\hat{V} (k)}{M} \;.
 \end{equation}
 With the matrices $K(k)$ we define the unitary operators $T$ as in \cref{eq:3BT}. In fact, it will again be useful to consider, more generally, the family of operators $T_\lambda$, for $\lambda \in [0,1]$, as introduced in \cref{eq:Zlambda}, with $T_1 = T$ and $T_0 = 1$. 
 
 We define the trial state $\psi^\textnormal{trial} := R_\F T \Omega \in L^2_\textnormal{a} (\Tbb^{3N})$ and the corresponding excitation vector $\xi^\textnormal{trial} := R_\F^* \psi^\textnormal{trial} = T \Omega \in \chi (\cN_\textnormal{h} - \cN_\textnormal{p} = 0) \cF$. Since $R_\F$ and $T$ only create particles with momentum at distance smaller than $R$ from the Fermi surface, and since we assumed $R \ll N^{1/3}$, we have
 \[
 \langle \psi^\textnormal{trial} , \cH_N \psi^\textnormal{trial} \rangle = \langle \psi^\textnormal{trial} , \wt{\cH}_N \psi^\textnormal{trial} \rangle
 \]
 where $\wt{\cH}_N$ is the Hamilton operator \cref{eq:ham-fock}, with $\hat{V} (k)$ replaced by $\hat{V} (k) \chi (|k| \leq CN^{1/3})$. Proceeding as in \cref{sec:corrH}, we find 
 \begin{equation}\label{eq:Etrial}
 \langle \psi^\textnormal{trial} , \wt{\cH}_N \psi^\textnormal{trial} \rangle = E_N^\text{HF} + \langle \xi^\textnormal{trial} \wt{\cH}_\textnormal{corr} \xi^\textnormal{trial} \rangle
 \end{equation} 
 with the Hartree--Fock energy \cref{eq:HF-en} (replacing $\hat{V} (k)$ with $\hat{V} (k) \chi (|k| \leq C N^{1/3})$ does not change the r.\,h.\,s.\ of \cref{eq:HF-en} if $C > 0$ is large enough) and with 
\[
\wt{\cH}_\textnormal{corr} = \bH_0 + \wt{Q}_B + \wt{\cE}_1 + \wt{\cE}_2 + \wt{\bX}
\]
where $\wt{Q}_B$, $\wt{\cE}_1$, $\wt{\cE}_2$, $\wt{\bX}$ denote the operators $Q_B$, $\cE_1$, $\cE_2$, $\bX$, respectively, from \cref{eq:RHR-main} and \cref{eq:errors}, with $\hat{V} (k)$ replaced by $\wt{V} (k) \chi (|k| \leq C N^{1/3})$. 

To estimate the expectation of $\wt{\cH}_\textnormal{corr}$ in the state $\xi^\textnormal{trial}$, we first establish rough bounds on the number of particles and the energy of $\xi^\textnormal{trial}$. 
\begin{lemma}[Bounds for particle number and kinetic energy]\label{lm:N-up}
For every $R > 0$ and $m \in \bN$ there exists $C_{R,m} > 0$ such that 
\begin{equation}
\label{eq:trial-N} 
\langle T_\lambda \Omega , \cN^m T_\lambda \Omega \rangle \leq C_{R,m} \qquad \text{for all $\lambda \in [0,1]$\,.}
\end{equation} 
 Moreover, for every $R > 0$ there exists a constant $C_R < \infty$ such that
 \begin{equation}
 \label{eq:trial-en}
 \langle  T_\lambda \Omega , \bH_0  T_\lambda \Omega \rangle \leq C_R \hbar \qquad \text{for all $\lambda \in [0,1]$\,.}
 \end{equation} 
\end{lemma}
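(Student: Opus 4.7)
My plan is to prove \cref{eq:trial-N} by a direct application of \cref{lm:growthN}, and to prove \cref{eq:trial-en} by a Gr\"onwall-type argument following the scheme of \cref{lm:Z-inv} but making the $k$-summability explicit under the weaker hypothesis \cref{eq:quasiC}.

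For \cref{eq:trial-N}, the definition of $T_\lambda$ involves $K(k)$ only for $k \in \Gamma^\textnormal{nor}$, so the proof of \cref{lm:growthN} really requires only the finiteness of $\sum_{k \in \Gamma^\textnormal{nor}}\hat V(k)$ rather than of $\|\hat V\|_1$. Since $|k|\ge 1$ on $\Gamma^\textnormal{nor}$ one has $\sum_{k\neq 0}\hat V(k)^2 \le \sum_{k\neq 0}|k|\hat V(k)^2 < \infty$ by \cref{eq:quasiC}; combined with $|\Gamma^\textnormal{nor}|\le CR^3$ and Cauchy-Schwarz this gives $\sum_{k\in\Gamma^\textnormal{nor}}\hat V(k) \le C_R < \infty$, which suffices for the proof of \cref{lm:growthN}. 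The resulting operator inequality $T_\lambda^* \cN^m T_\lambda \le C_{R,m}(\cN+1)^m$, evaluated in the vacuum state, gives \cref{eq:trial-N}.

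For \cref{eq:trial-en}, I would set $h(\lambda) := \langle T_\lambda\Omega, \bH_0 T_\lambda\Omega\rangle$ and differentiate to obtain $h'(\lambda) = \langle T_\lambda\Omega, [\bH_0, B] T_\lambda\Omega\rangle$ with $B := \tfrac12\sum_{k}\sum_{\alpha,\beta\in\cI_k}K(k)_{\alpha,\beta} c^*_\alpha(k) c^*_\beta(k) - \textnormal{h.c.}$ From \cref{lm:HD-comm},
\[
[\bH_0, c^*_\alpha(k) c^*_\beta(k)] = 2\hbar\kappa(|k\cdot\hat\omega_\alpha| + |k\cdot\hat\omega_\beta|)\, c^*_\alpha(k) c^*_\beta(k) + \hbar\big(c^*_\alpha(k)\mathfrak{E}^\textnormal{lin}_\beta(k)^* + \mathfrak{E}^\textnormal{lin}_\alpha(k)^* c^*_\beta(k)\big).
\]
The diagonal contribution is controlled by Cauchy-Schwarz with the weighted matrix $\tilde K(k)_{\alpha,\beta} := |k\cdot\hat\omega_\alpha|K(k)_{\alpha,\beta}$: using the pointwise bound $|K(k)_{\alpha,\beta}|\le C\hat V(k)/M$ from \cref{lm:K} and the Riemann-sum estimate $\sum_\alpha|k\cdot\hat\omega_\alpha|^2 \le CM|k|^2$, one gets $\|\tilde K(k)\|_\HS^2 \le C|k|^2\hat V(k)^2 \le CR|k|\hat V(k)^2$, which is summable over $k\in\Gamma^\textnormal{nor}$ by \cref{eq:quasiC} and the restriction $|k|<R$. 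The error terms involving $\mathfrak{E}^\textnormal{lin}$ are estimated via the first two bounds of \cref{eq:lin-err}, combined with $\|K(k)\|_\HS \le C\hat V(k)$ and Cauchy-Schwarz in $k$ (using $|\Gamma^\textnormal{nor}|\le CR^3$). Combined with the $R$-dependent a-priori control on powers of $\cN$ in $T_\lambda\Omega$ from part (i), this yields $|h'(\lambda)| \le C_R\hbar$, and integrating from $0$ to $1$ with $h(0) = 0$ gives \cref{eq:trial-en}.

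The main obstacle I anticipate is proving the bosonic-type operator inequality $\|\sum_{\alpha,\beta}M_{\alpha,\beta}c^*_\alpha(k) c^*_\beta(k)\psi\| \le C\|M\|_\HS\|(\cN+C)\psi\|$ from the approximate CCR of \cref{lem:ccr}: the bosonic analog is standard, while the fermionic correction operators $\cE_\alpha(k,\ell)$ from \cref{eq:ccr-0}--\cref{eq:ccr-1} must be shown to contribute only subleadingly. This is expected once part (i) provides control on powers of $\cN$ acting on $T_\lambda\Omega$, so that the $MN^{-2/3+\delta}$ factors in the CCR corrections are absorbed. A secondary, minor issue is that \cref{lm:c-N} is stated in terms of $\cN_\delta$, but the trivial bound $\cN_\delta\le \cN$ is enough since the final constant is allowed to depend on $R$.
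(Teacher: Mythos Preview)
Your argument for \cref{eq:trial-N} is essentially the paper's: both observe that the proof of \cref{lm:growthN} only needs $\sum_{k\in\Gamma^{\textnormal{nor}}}\|K(k)\|_{\HS}\le C\sum_{|k|<R}\hat V(k)<\infty$, which follows from \cref{eq:quasiC} by Cauchy--Schwarz, and then apply Gr\"onwall and evaluate on the vacuum.

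For \cref{eq:trial-en} you take a genuinely different route. The paper does \emph{not} run a Gr\"onwall argument on $\langle T_\lambda\Omega,\bH_0 T_\lambda\Omega\rangle$ directly; instead it splits
\[
\langle T_\lambda\Omega,\bH_0 T_\lambda\Omega\rangle
= \langle T_\lambda\Omega,(\bH_0-\bD_{\textnormal{B}}) T_\lambda\Omega\rangle
+ \langle T_\lambda\Omega,\bD_{\textnormal{B}} T_\lambda\Omega\rangle,
\]
invokes \cref{lm:T-inv} for the first term (noting that its hypothesis $\sum_k\hat V(k)(1+|k|)<\infty$ is only needed on $|k|<R$, where it follows from \cref{eq:quasiC} at the cost of an $R$--dependent constant), and bounds the second term by $\bD_{\textnormal{B}}\le C_R\hbar\,\cN_\delta\le C_R\hbar\,\cN$ via \cref{eq:c-N0}, then appeals to \cref{eq:trial-N}. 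The point is that the leading contribution $2\hbar\kappa|k\cdot\hat\omega_\alpha|c^*_\alpha c^*_\beta$ that you call the ``diagonal contribution'' cancels \emph{by construction} in $\bH_0-\bD_{\textnormal{B}}$, so the paper never has to estimate it.

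Your direct approach also works: the quadratic--form bound you need,
\[
\Big|\sum_{\alpha,\beta}\tilde K(k)_{\alpha,\beta}\langle\psi,c^*_\alpha(k)c^*_\beta(k)\psi\rangle\Big|
\le C\|\tilde K(k)\|_{\HS}\,\|(\cN_\delta+1)^{1/2}\psi\|^2
\]
(up to approximate--CCR errors controlled by \cref{lem:ccr} and \cref{eq:trial-N}), follows by the same $\tr(|M|^2 C)$ manipulation used in the proof of \cref{lm:growthN}, and then $\sum_{|k|<R}\|\tilde K(k)\|_{\HS}\le C\sum_{|k|<R}|k|\hat V(k)\le C_R$ by Cauchy--Schwarz and \cref{eq:quasiC}. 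So there is no gap. The trade--off is that the paper's route is shorter because it reuses \cref{lm:T-inv} wholesale, while yours is more self--contained but repeats estimates already packaged there.
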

\begin{proof}[Proof of \cref{lm:N-up}]
For \cref{eq:trial-N} we can proceed as in the proof of \cite[Prop.~4.6]{BNPSS0}. The only new aspect is that
we use the assumption \cref{eq:quasiC} together with \cref{eq:K-point} to estimate 
\begin{equation}\label{eq:VR}
\sum_{k \in \Gamma^\textnormal{nor}} \| K (k) \|_\textnormal{HS} \leq C \sum_{|k| \leq R} \hat{V} (k) \leq C R \left( \sum_{|k| < R} \hat{V} (k)^2 |k| \right)^{1/2} \leq C R\;.
\end{equation}
This allows us to show that
\[
\left| \frac{\di}{\di\lambda} \langle T_\lambda \Omega, (\cN+5)^m T_\lambda \Omega \rangle \right| \leq C R  \langle T_\lambda \Omega, (\cN+5)^m T_\lambda \Omega \rangle \;.
\]
By Gr\"onwall's lemma, we conclude that 
\[
\langle T_\lambda \Omega, \cN^m T_\lambda \Omega \rangle \leq e^{C_mR \lambda} \;.
\]

To show \cref{eq:trial-en} we write 
\begin{equation}\label{eq:trial-en1}
\langle T_\lambda \Omega, \bH_0 T_\lambda \Omega \rangle = \langle T_\lambda \Omega, (\bH_0 - \bD_\textnormal{B}) T_\lambda \Omega \rangle + \langle T_\lambda \Omega, \bD_\textnormal{B} T_\lambda \Omega \rangle
\end{equation} 
with the operator $\bD_\textnormal{B}$ introduced in \cref{eq:H0-comm2}. From \cref{lm:T-inv} and \cref{eq:trial-N}, we find
\begin{equation}\label{eq:H0-D-up}
| \langle T_\lambda \Omega, (\bH_0 - \bD_\textnormal{B}) T_\lambda \Omega \rangle | \leq C_R \hbar \big( M^{-1/2} + M N^{-2/3+\delta} \big) \;.
\end{equation}
As in the proof of \cref{eq:trial-N} above, the condition $\sum_{k \in \bZ^3} \hat{V} (k) (1+ |k|) < \infty$ required in \cref{lm:T-inv} is now replaced (since $K (k) = 0$ for $|k| > R$) by 
\[
\sum_{|k| < R}  \hat{V} (k) (1+ |k|) \leq C R^2 \left( \sum_k \hat{V} (k)^2 |k| \right)^{1/2} \leq C R^2
\]
which leads (together with \cref{eq:trial-N}) to an $R$--dependent constant in \cref{eq:H0-D-up}. We also have 
\begin{equation}\label{eq:DBest}
\begin{split}
\langle T_\lambda \Omega , \bD_\textnormal{B} T_\lambda \Omega\rangle &\leq C_R \hbar \sum_{k \in \Gamma^{\text{nor}}} \sum_{\alpha = 1}^{M} \langle T_{\lambda} \Omega, c^{*}_{\alpha}(k) c_{\alpha}(k) T_{\lambda} \Omega \rangle \leq C_{R} \hbar \langle T_{\lambda} \Omega, \mathcal{N} T_{\lambda} \Omega \rangle \leq C_{R} \hbar\;,
\end{split}
\end{equation}
where we used \cref{eq:c-N0} in the second and \cref{eq:trial-N} in the third inequality. Inserting \cref{eq:H0-D-up} and \cref{eq:DBest} in \cref{eq:trial-en1}, we obtain \cref{eq:trial-en}. This concludes the proof of \cref{lm:N-up}.
\end{proof}

To estimate the potential energy we need the following lemma, which shows that, when computing expectation values in $\xi^\textnormal{trial}$, we can effectively cutoff the interaction $\hat{V}$ to momenta $|k| \leq R$, up to negligible errors. This observation relies on the fact that $T$ only creates particle--hole pairs with pair momentum $\lvert k\rvert \leq R$.
\begin{lemma}[Control of the high--momentum cutoff] \label{lm:bb-R}
Assume $\sum_{k \in \bZ^3} |k| \hat{V} (k)^2 < \infty$. Then for every $R > 0$ there exists $C_R > 0$ such that 
\begin{equation}
\label{eq:V>R}
\begin{split} 
\frac{1}{N} \sum_{k \in \bZ^3 : R < |k| \leq C N^{1/3}} \hat{V} (k) \langle T \Omega, b^* (k) b (k) T \Omega \rangle \leq C_R M^{3/2}  N^{-1/2 + \delta/2}\;, \\
\Big| \frac{1}{N} \sum_{k \in \bZ^3 : R < |k| \leq C N^{1/3}} \hat{V} (k) \langle T \Omega, b (k) b (-k) T \Omega \rangle \Big| \leq C_R M^{3/2}  N^{-1/2 + \delta/2}  \;.
\end{split} 
\end{equation}
\end{lemma}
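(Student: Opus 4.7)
The key observation is that the approximate Bogoliubov transformation $T$ only creates particle--hole pairs of total momentum $|k'| < R$, so in a hypothetical exact bosonic setting the operator $b(k)$ with $|k| > R$ would commute with $T$ and hence $T^* b(k) T\Omega = b(k)\Omega = 0$. Both bounds in \cref{eq:V>R} quantify the deviation from this bosonic picture. The plan is to exploit this through a Duhamel identity: writing $T_\lambda = e^{\lambda B}$ with antihermitian generator $B = \frac{1}{2}\sum_{k' \in \north, \alpha, \beta \in \cI_{k'}} K(k')_{\alpha,\beta}(c^*_\alpha(k') c^*_\beta(k') - \hc)$, using $b(k)\Omega = 0$ and the unitarity of $T_\lambda$, one obtains
\[\|b(k) T\Omega\| = \|T^* b(k) T\Omega\| \leq \int_0^1 \|[b(k), B] T_\lambda \Omega\|\, d\lambda.\]

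To analyze $[b(k), B]$, the central step is computing $[b(k), c^*_\alpha(k')]$ via canonical anticommutators. For $|k| > R \geq |k'|$ (hence $k \neq \pm k'$), the purely $c$-number (``bosonic'') contributions cancel, and only quadratic operators of ``$a^* a$''-type survive, summed over subsets of momenta whose sizes are bounded by $C n_\alpha(k')^2$ (these subsets embed into $B_F^c \cap (B_F + k') \cap B_\alpha$, which has $\sim n_\alpha(k')^2$ elements by \cref{lem:counting}). Using $\|a_q\|_{\textnormal{op}} \leq 1$ and $\sum_q \|a_q \psi\|^2 \leq \|\cN^{1/2}\psi\|^2$, this yields the uniform bound
\[\|[b(k), c^*_\alpha(k')] \psi\| \leq C \|\cN^{1/2}\psi\|.\]

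Combining the commutator bound with \cref{lm:K} ($|K(k')_{\alpha,\beta}| \leq C\hat V(k')/M$), \cref{lm:c-N} ($\|c^*_\beta(k')\psi\| \leq \|(\cN_\delta+1)^{1/2}\psi\|$), the moment estimates $\|(\cN+1)^m T_\lambda \Omega\| \leq C_{R,m}$ from \cref{lm:N-up}, and the summability $\sum_{|k'|<R} \hat V(k') \leq C_V R$ (obtained from $\sum_k |k| \hat V(k)^2 < \infty$ by Cauchy--Schwarz against $\sum_{0 \neq |k'|<R} 1/|k'| \leq CR^2$), yields $\|b(k) T\Omega\|^2 \leq C_R \cdot \textnormal{poly}(M, R, N^\delta)$ uniformly in $k$. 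Multiplying by $\hat V(k)/N$ and summing using the analogous bound $\sum_{R < |k| \leq CN^{1/3}} \hat V(k) \leq C_V N^{1/3}$ gives the first bound in \cref{eq:V>R}.

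For the second, ``anomalous'' bound, the same Duhamel strategy applies to $T^* b(k) b(-k) T$ on $\Omega$: using $b(k) b(-k)\Omega = 0$ and $[b(k) b(-k), B] = [b(k), B] b(-k) + b(k) [b(-k), B]$, one gets $\|b(k) b(-k) T\Omega\| \leq \int_0^1 \|[b(k) b(-k), B] T_\lambda \Omega\|\, d\lambda$. Each term combines the commutator bound above with the kinetic bound $b^*(k) b(k) \leq CN \bH_0$ from \cref{lm:bb}, applied to control $\|b(\pm k) \chi\|$ for $\chi$ with bounded $\bH_0$-energy via $\|\bH_0^{1/2} T_\lambda \Omega\|^2 \leq C_R \hbar$ from \cref{lm:N-up}. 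The main technical challenge is balancing the powers of $M$, $R$, $N$: to match the $M^{3/2}$ exponent in \cref{eq:V>R} (rather than the $M^2$ arising from the naive bound $\sum_{\alpha,\beta} |K(k')_{\alpha,\beta}| \leq M\|K(k')\|_{\textnormal{HS}}$), one needs a refined argument exploiting the operator-valued identity $\sum_\beta c_\beta(k') c^*_\beta(k') \lesssim \cN_\delta + M$ together with the approximate orthogonality, across different patches $\alpha$, of the $a^*a$-type operators appearing in $[b(k), c^*_\alpha(k')]$ (whose momentum supports are disjoint).
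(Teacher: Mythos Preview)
Your overall strategy---Duhamel expansion of $T^* b(k)\,T$ using $b(k)\Omega = 0$, computing $[b(k), c^*_\alpha(k')]$, and observing that the ``bosonic'' $c$-number part vanishes because $|k| > R > |k'|$---is exactly the paper's. The gap is in how you estimate the surviving $a^* a$-terms.

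You bound the commutator in operator norm, $\|[b(k), c^*_\alpha(k')]\psi\| \leq C\|\cN^{1/2}\psi\|$. The paper instead estimates the \emph{bilinear form},
\[
|\langle\varphi, [c^*_\alpha(k'), b(k)]\psi\rangle| \leq \frac{C}{n_\alpha(k')}\,\|\cN^{1/2}\varphi\|\,\|\cN^{1/2}\psi\|\,,
\]
carrying the extra factor $1/n_\alpha(k') \leq C M^{1/2}N^{-1/3+\delta/2}$ from \cref{lem:counting}. This factor is precisely what produces the $N^{-1/2+\delta/2}$ in \cref{eq:V>R}. With your norm bound and naive summation over $\alpha,\beta$, the anomalous inequality in \cref{eq:V>R} comes out only as $C_R\, M\, N^{-1/6}$, which is not $o(\hbar)$ and therefore insufficient for the application. (Your first inequality comes out as $C_R\,M^2\,N^{-2/3}$: different from the stated bound, but this one would happen to suffice.)

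A second issue: the paper controls $b^\natural(\pm k)$ on intermediate states via the number-operator bound $\|b^\natural(k)\psi\| \leq C|k|^{1/2}N^{1/3}\|(\cN+1)^{1/2}\psi\|$, which follows from $|B_\F^c\cap(B_\F+k)|\leq C|k|N^{2/3}$. Your route through $\|\bH_0^{1/2}T_\lambda\Omega\|^2 \leq C_R\hbar$ runs into trouble for the term $b(k)\,[b(-k),B]\,T_\lambda\Omega$, where $b(k)$ acts on a vector for which you have no direct $\bH_0$-control (only $\cN$-moments).

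Your proposed refinement via patch orthogonality is in fact a legitimate alternative to the bilinear bound: the creation momenta $p-k'$ appearing in $[b(k),c^*_\alpha(k')]$ lie in $B_\alpha$ and are disjoint across $\alpha$, so $\|\sum_\alpha f_\alpha[b(k),c^*_\alpha(k')]\psi\|^2 \leq \sum_{\alpha}\sum_{p\in S_\alpha}|f_\alpha|^2 n_\alpha(k')^{-2}\|a_{p-k}\psi\|^2$, which recovers the $n_\alpha(k')^{-1}$ gain. But you only gesture at this and do not carry it through, and you would still need to replace the kinetic bound on $b(\pm k)$ by the $\cN$-based one to close the argument.
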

\begin{proof}[Proof of \cref{lm:bb-R}]
Consider the second inequality in \cref{eq:V>R}. We write 
\begin{align*}
& \frac{1}{N} \sum_{R < |k| \leq C N^{1/3}} \hat{V} (k) \langle T\Omega, b (k) b (-k) T \Omega \rangle \tagg{eq:bbup1}\\
& = \frac{1}{N} \sum_{R < |k| \leq C N^{1/3}} \hat{V} (k)  \sum_{k' \in \Gamma^\textnormal{nor}} \sum_{\alpha, \beta \in \cI_{k'}} K_{\alpha,\beta} (k')   \int_0^1 \di\lambda \, \Big\langle T_\lambda \Omega, \Big[ c_\alpha^* (k') c^*_\beta (k') , b (k) b (-k) \Big]  T_\lambda \Omega \Big\rangle\;.
\end{align*}
We compute
\begin{equation}\label{eq:comm-up2}
\begin{split}
\Big[ c_\alpha^* (k') c^*_\beta (k') , b (k) b (-k) \Big]  &= c^{*}_{\alpha}(k') b(k) [ c^{*}_{\beta}(k'), b(-k) ] + c^{*}_{\alpha}(k') [ c^{*}_{\beta}(k'), b(k) ] b(-k) \\
& \quad + b(k) [ c^{*}_{\alpha}(k') , b(-k) ] c^{*}_{\beta}(k') + [ c^{*}_{\alpha}(k') , b(k) ] b(-k) c^{*}_{\beta}(k')\;.
\end{split}
\end{equation}
We consider the case $\alpha, \beta \in \mathcal{I}^{+}_{k'}$ (so that $c_\alpha^* (k') = b_\alpha^* (k')$ and $c_\beta^* (k') = b_\beta^* (k')$ by \cref{eq:calpha}); the other cases can be studied in the same way. We find 
\begin{equation}
\label{eq:cstarb}
[ c^{*}_{\alpha}(k'), b(k) ] = \frac{1}{n_{\alpha}(k')} \sum_{\substack{p \in B_F^c \cap B_\alpha : \\ p-k' \in B_F \cap B_\alpha}} \sum_{q \in B_F^c \cap B_F + k} \big( \delta_{p,q} \delta_{k,k'} -\delta_{p,q} a_{p-k'}^* a_{q-k} - \delta_{p-k', q-k} a_p^* a_q \big)\;.
\end{equation}
Thanks to the constraint $|k'| < R < |k|$, the otherwise dominant contribution due to $\delta_{p,q} \delta_{k,k'}$ vanishes. For such $k$ and $k'$ and for any $\psi, \varphi \in \mathcal{F}$ we obtain 
\begin{equation}
|\langle \varphi, [ c^{*}_{\alpha}(k'), b(k) ] \psi \rangle| \leq \frac{C}{n_{\alpha}(k')} \| \mathcal{N}^{1/2} \varphi \| \| \mathcal{N}^{1/2} \psi \|\;.
\end{equation}
We can use this estimate to bound all the contributions to \cref{eq:bbup1} arising from the various terms in the r.\,h.\,s.\ of \cref{eq:comm-up2}. For instance, consider the first. Using \cref{eq:cstarb} we have
\begin{align*}
| \langle T_{\lambda} \Omega , c^{*}_{\alpha}(k') b(k) [ c^{*}_{\beta}(k'), b(-k) ]   T_{\lambda} \Omega \rangle | &\leq  \frac{C}{n_{\beta}(k')} \| \mathcal{N}^{1/2} b^{*}(k) c_{\alpha}(k') T_{\lambda} \Omega  \|  \| \mathcal{N}^{1/2} T_{\lambda} \Omega \| \\
&= \frac{C}{n_{\beta}(k')}  \|  b^{*}(k) c_{\alpha}(k') \mathcal{N}^{1/2}T_{\lambda} \Omega  \| \| \mathcal{N}^{1/2} T_{\lambda} \Omega \|\;.
\end{align*}
\Cref{lem:counting}, together with the assumption $\alpha, \beta \in \cI_{k'}$, implies that $n_{\beta} (k') \geq C N^{1/3-\delta/2} M^{-1/2}$. Next, we will use the bounds
\begin{equation}
\| b^{\natural}(k) \varphi \| \leq C|k|^{1/2} N^{1/3} \| (\mathcal{N} + 1)^{1/2} \varphi \|\;,\qquad \| c^{\natural}_{\alpha}(k')  \varphi\| \leq C\| (\mathcal{N}+1)^{1/2} \varphi \|\;,
\end{equation}
where $b^\natural$ is either $b$ or $b^*$, and analogously for $c^\natural$. Here, the first estimate follows from \cite[Eqs.~(4.12) and (4.13)]{BNPSS2} (observing that $|B_F^c \cap B_F + k| \leq C |k| N^{2/3}$), the second from \cref{lm:c-N} (using the inequality $[c_\alpha (k'), c_\alpha^* (k')] \leq 1$; see \cite[Eq.~(5.10)]{BNPSS}). Thus
\begin{equation}
| \langle T_{\lambda} \Omega , c^{*}_{\alpha}(k') b(k) [ c^{*}_{\beta}(k'), b(-k) ]   T_{\lambda} \Omega \rangle | \leq C|k|^{1/2} N^{\delta/2} M^{1/2} \langle T_{\lambda} \Omega, (\mathcal{N} + 1)^{3} T_{\lambda} \Omega \rangle\;.
\end{equation}
All the other contributions in \cref{eq:comm-up2} can be estimated in a similar way. We get, using the bounds $|K_{\alpha,\beta}(k')| \leq \hat V(k')/M$ and \cref{eq:trial-N}, 
\begin{equation}
\begin{split}
& \Big| \frac{1}{N} \sum_{k \in \bZ^3 : R < |k| \leq C N^{1/3}} \hat{V} (k)  \langle T \Omega, b(k) b(-k) T \Omega \rangle \Big| \\ &\leq C_{R} M^{3/2} N^{-1 + \delta/2}  \sum_{R < |k| \leq C N^{1/3}} |k|^{1/2}\hat{V} (k)  \sum_{k' \in \Gamma^\textnormal{nor}} \hat V(k') \\  &\leq C_{R}
M^{3/2} N^{-1/2 + \delta/2} 
\end{split}
\end{equation}
where the sum over $k'$ has been absorbed in the constant $C_R$ (recall that $|k'| < R$ in 
$\Gamma^\textnormal{nor}$) and where we estimated  
\begin{equation}
\sum_{k: |k| \leq CN^{1/3}} |k|^{1/2}\hat{V} (k) \leq CN^{1/2} \left(\sum_{k} |k| \hat V(k)^{2}\right)^{1/2} \leq CN^{1/2}\; .
\end{equation}
This concludes the proof of the second inequality in \cref{eq:V>R}. The first can be shown similarly; we omit the details. This concludes the proof of \cref{lm:bb-R}.
\end{proof} 
 
With \cref{lm:N-up} and \cref{lm:bb-R}, we can go back to the computation of the expectation value on the r.\,h.\,s.\  of \cref{eq:Etrial}. We control the expectation  of the error term 
$\wt{\cE}_1$ with the bound  
\[
\| \wt{\cE}_1 \xi \| \leq \frac{C \| \hat{V} \|_1}{N} \| \cN^2 \xi \|
\]
established in \cite[Eq.~(4.10)]{BNPSS2}. With \cref{eq:trial-N} and estimating 
\[
\sum_{|k| \leq C N^{1/3}} \hat{V} (k) \leq C N^{1/3} \Big(\sum_{k \in \bZ^3} \hat{V} (k)^2 |k| \Big)^{1/2}  \leq C N^{1/3}
\]
we find, for a constant $C_R$ depending on the cutoff $R > 0$,
\[ 
\langle \xi^\textnormal{trial}, \wt{\cE}_1 \xi^\textnormal{trial} \rangle \leq C_R N^{-2/3}\;.
\]
The expectation value of $\wt{\cE}_2$ in our trial state vanishes for parity reasons exactly as in \cite[Lemma~5.2]{BNPSS0}. 

Applying \cref{lm:bb-R} and \cref{eq:V>R} and using the fact that $\bX \leq 0$, from \cref{eq:Etrial} we get
\[
\langle \psi^\textnormal{trial} , \cH_N \psi^\textnormal{trial} \rangle \leq E_N^\text{HF} + \langle \xi^\textnormal{trial}, (\bH_0 + \wt{Q}_B^R) \xi^\textnormal{trial} \rangle + C_R N^{-2/3} + C_R M^{3/2} N^{-1/2 + \delta/2} 
\] 
where we defined 
\[
\wt{Q}_B^R := \frac{1}{N} \sum_{k \in \bR^3 : |k| \leq R} \hat{V} (k) \Big( b^* (k) b(k) + \frac{1}{2} \big(b^* (k) b^* (-k) + b(k) b(-k) \big) \Big) \;.
\]
In order to obtain an upper bound for the expectation of the operator $\bH_0 + \wt{Q}_B^R$, we proceed as in the proof of \cref{thm:main}, now with $\hat{V} (k)$ replaced everywhere by $\hat{V} (k) \chi (|k| \leq R)$. We conclude that 
\[
\begin{split}
& \langle \psi^\textnormal{trial} , \cH_N \psi^\textnormal{trial} \rangle \\ \leq &\, E_N^\text{HF} + \hbar \kappa_0 \sum_{|k| \leq R} |k| \left( \frac{1}{\pi} \int_0^\infty \log \left( 1 + 2\pi \kappa_0 \hat{V} (k) \Big(1-\lambda \arctan \big(\frac{1}{\lambda}\big) \Big) \right) \di\lambda - \frac{\pi}{2} \kappa_0 \hat{V} (k) \right) \\
&+ C_R N^{-2/3} + C_R N^{-1/3} M^{-1/2} + C_R M^{3/2} N^{-1/2+\delta/2} \\
\leq &\, E_N^\text{HF} + E^\text{RPA}_N + C \sum_{|k| > R} \hat{V} (k)^2 |k| + C_R \Big( N^{-2/3} + N^{-1/3} M^{-1/2} + M^{3/2} N^{-1/2 + \delta/2} \Big)\;.
\end{split}
\]
Fixing $M = N^\alpha$, choosing $\alpha > 0$ small enough and then $R = R(N)$ so that $R(N) \to \infty$ as $N \to\infty$ at a sufficiently slow pace, we obtain \cref{eq:up-app}. This concludes the proof of the generalized RPA upper bound, \cref{thm:impr-up}. \end{proof}

\section*{Acknowledgements} 
RS was supported by the European Research Council under the European Union’s Horizon 2020 research and innovation programme (grant agreement No.~694227). MP acknowledges financial support from the European Research Council under the European Union’s Horizon 2020 research and innovation programme (ERC StG MaMBoQ, grant agreement No.~802901). BS acknowledges financial support from the NCCR SwissMAP, from the Swiss National Science Foundation through the Grant ``Dynamical and energetic properties of Bose-Einstein condensates'' and from the European Research Council through the ERC AdG CLaQS (grant agreement No.~834782). NB was supported by Gruppo Nazionale per la Fisica Matematica (GNFM) of Italy and the European Research Council's Starting Grant \textsc{FermiMath} (grant agreement No.~101040991).
 
\section*{Competing Interests}
The authors have no competing interests to declare that are relevant to the content of this article.

\section*{Data Availability}
Data sharing not applicable to this article as no datasets were generated or analysed during the current study.

%%%%%%%%%%%%%%%%%%%%%%%%%%%%%%%%%%%%%%%%%%%%%%%

\end{document}